\newcommand{\FF}{\vspace*{\medskipamount}}
\newcommand{\FFF}{\vspace*{\bigskipamount}}
\newcommand{\B}{\vspace*{-\smallskipamount}}
\newcommand{\BB}{\vspace*{-\medskipamount}}
\newcommand{\BBB}{\vspace*{-\bigskipamount}}
\newcommand{\cA}{\mathcal{A}}
\newcommand{\cE}{\mathcal{E}}
\newcommand{\cG}{\mathcal{G}}
\newcommand{\cH}{\mathcal{H}}
\newcommand{\cM}{\mathcal{M}}
\newcommand{\cO}{O}
\newcommand{\cP}{\mathcal{P}}
\newcommand{\cR}{\mathcal{R}}
\newcommand{\cT}{\mathcal{T}}
\newcommand{\fuzzy}{fuzzy }
\newcommand{\Fuzzy}{Fuzzy }
\newcommand{\blurred}{\fuzzy }
\newcommand{\logO}{\tilde{\cO}}
\newcommand{\Gossip}{\textsc{Gossip}}
\newcommand{\BipartiteGossip}{\textsc{BipartiteGossip}}
\newcommand{\remove}[1]{}
\newcommand{\p}{p}
\newcommand{\ParamConsensus}{\textsc{Parameterized\-Consensus}}
\newcommand{\dk}[1]{{\color{red}{#1}}}
\newcounter{innerlistcounter}
\newcommand{\Paragraph}[1]{\BBB\paragraph{#1}}
\newcommand{\polylog}{\text{polylog }}
\newlength{\pagewidth}
\newlength{\figurewidth}
\newtheorem*{rep@theorem}{\rep@title}
\newcommand{\newreptheorem}[2]{%
\newenvironment{rep#1}[1]{%
 \def\rep@title{#2 \ref{##1}}%
 \begin{rep@theorem}}%
 {\end{rep@theorem}}}
\newtheorem{theorem}{Theorem}
\newtheorem*{theorem*}{Theorem}
\newtheorem{lemma}{Lemma}
\newtheorem*{lemma*}{Lemma}
\newtheorem{corollary}{Corollary}
\newtheorem{definition}{Definition}
\newcommand{\ceil}[1]{\left \lceil #1 \right \rceil }
\newcommand{\floor}[1]{\left \lfloor #1 \right \rfloor }
\begin{document}
\title{Improved Communication Complexity of Fault-Tolerant Consensus}

\author{
\vspace*{-.5ex}
MohammadTaghi HajiAghayi  \\
\vspace*{-2ex}
University of Maryland, Maryland, USA. \\
\texttt{hajiagha@cs.umd.edu}
\and
\vspace*{-2ex}
Dariusz R. Kowalski \\
\vspace*{-2ex}
School of Computer and Cyber Sciences,
Augusta University, Georgia, USA. 
\texttt{dkowalski@augusta.edu}	
\and
\vspace*{-2ex}
Jan Olkowski \\
\vspace*{-2ex}
University of Maryland, Maryland, USA. \\
\texttt{olkowski@umd.edu}
}

\date{}

\maketitle


\vspace*{-6.5ex}
\begin{abstract}
Consensus is one of the most thoroughly studied problems in distributed computing, yet there are still complexity gaps that have not been bridged for decades. In particular, in the classical message-passing setting with processes' crashes, since the seminal works of Bar-Joseph and Ben-Or [1998]~\cite{Bar-JosephB98} and Aspnes and Waarts [1996, 1998]~\cite{AspnesW-SICOMP-96,Aspnes-JACM-98} in the previous century, there is still a fundamental unresolved question about communication complexity of fast randomized Consensus against a (strong) adaptive adversary crashing processes arbitrarily online.
The best known upper bound on the number of communication bits is $\Theta(\frac{n^{3/2}}{\sqrt{\log{n}}})$ per process, while the best lower bound is $\Omega(1)$. This is in contrast to randomized Consensus against a (weak) oblivious adversary, for which time-almost-optimal algorithms guarantee amortized $O(1)$ communication bits per process~\cite{GK-SODA-10}.
We design an algorithm against adaptive adversary that reduces the communication gap by nearly linear factor to  $O(\sqrt{n}\cdot\polylog n)$ bits per process,
while keeping almost-optimal (up to factor $O(\log^3 n)$) time complexity $O(\sqrt{n}\cdot\log^{5/2} n)$.

More surprisingly, we show this complexity indeed can be lowered further, but at the expense of increasing time complexity, i.e., there is a {\em trade-off} between communication complexity and time complexity. More specifically, our main Consensus algorithm allows to reduce communication complexity per process to any value from $\polylog n$ to 
$O(\sqrt{n}\cdot\polylog n)$, as long as  Time $\times$ Communication $= O(n\cdot \polylog n)$. Similarly, reducing time complexity requires more random bits per process, i.e., Time $\times$ Randomness $=O(n\cdot \polylog n)$.

Our parameterized consensus solutions are based on a few newly developed paradigms and algorithms for crash-resilient computing, interesting on their own. The first one, called a {\em Fuzzy Counting}, provides for each process a number which is in-between the numbers of alive processes at the end and in the beginning of the counting. 
Our deterministic Fuzzy Counting algorithm works in $O(\log^{3} n)$ rounds and uses 
only $O(\polylog n)$ 
amortized communication bits per process, unlike previous solutions to counting that required $\Omega(n)$ bits.
This improvement is possible due to a new {\em Fault-tolerant Gossip} solution with $O(\log^3 n)$ rounds using only $O(|\cR|\cdot \polylog n)$ communication bits per process, where $|\cR|$ is the length of the rumor binary representation. It exploits distributed fault-tolerant divide-and-conquer idea, in which processes run a {\em Bipartite Gossip} algorithm for a considered partition of processes. To avoid passing many long messages, processes use a family of small-degree compact expanders for {\em local signaling} to their overlay neighbors if they are in a compact (large and well-connected) party, and switch to a denser overlay graph whenever local signalling in the current one is failed. Last but not least, all algorithms in this paper can be implemented in other distributed models such as the congest model in which messages are of length $O(\log n)$.

\remove{
: a synchronous message-passing system of $n$ processes with at most $f<n$ crash failures.
{\em Is there a randomized Consensus algorithm against an adaptive adversary that achieves simultaneously a sub-linear round complexity and
a sub-linear bit communication complexity (amortized per process)?}
Bar-Joseph and Ben-Or [1998] proposed an algorithm that reaches consensus (with probability 1)
in $O(\frac{\sqrt{n}}{\sqrt{\log{n}}})$ rounds, however using $\Theta(\frac{n^{3/2}}{\sqrt{\log{n}}})$ communication bits per process.
We design an algorithm that answers the above question in affirmative: it works in sub-linear nearly-optimal $O(\sqrt{n}\cdot\log^{5/2} n)$ number of rounds and uses a sub-linear $O(\sqrt{n}\cdot\log^{13/2} n)$ number of communication bits per process (amortized over the set of all processes), for any $f<n$. 
(Here by amortized we mean the total number of bits divided by~$n$.)
Thus, it improves the communication complexity of Bar-Joseph and Ben-Or's algorithm by a nearly linear factor $\Theta(n/\log^{13/2} n)$,
while being only $O(\log^3 n)$ away from the absolute lower bound on the round number, also proved by Bar-Joseph and Ben-Or [1998].
We also observe an interesting {\em consensus complexity equation} that binds the complexities of consensus: time multiplied by the amortized number of communication bits per process is $O(n \; \polylog n)$, where the left part is only slightly bigger than $n$ -- the level of ``distribution'' of the algorithm.
Indeed, we design a parametrized extension of our consensus algorithm that, given any time $T$ between 
$\sqrt{n}\cdot\log^{5/2} n$ and $n$, preserves the consensus complexity equation.
It is worth noting that all our algorithms work in the CONGEST model, that is, all messages are of size $O(\log n)$.
}
\remove{
A barrier on the complexity of randomized solutions against the strongest adaptive adversary has also been standing over two decades:
{\em Is there a deterministic solution which is simultaneously optimal (in an asymptotic sense) in the
number of rounds $O(f)$ and the total number of used communication bits $O(n)$?}
Galil, Mayer and Yung [FOCS 1995] proposed the first deterministic Consensus algorithm
that used $O(n)$ communication bits; 
unfortunately, the round complexity of their solution was exponential in $n$, and has not been improved since then for any 
non-trivial failure number $f=\omega(1)$.
We break this exponential-time barrier by designing a communication-efficient Consensus algorithm,
which is asymptotically optimal {\em both} in the number of rounds $O(f)$ and communication bits $O(n)$,
for $f=O(n/(\log^2 n))$. 
}
\remove{
Our consensus solutions are based on a few newly developed paradigms and algorithms for crash-resilient computing, interesting on their own. The first one, called a Fuzzy Counting, provides at each process a number which is in-between the numbers of alive processes at the end of the counting and 
in the beginning of the counting.
Our deterministic Fuzzy Counting algorithm works in $O(\log^{3} n)$ rounds and uses 
$O(\log^{7} n)$ 
communication bits per process (amortized).
It uses, as a subroutine, our new deterministic solution to a 2-Gossip problem, which propagates two initial rumors in $O(\log^{3} n)$ rounds using $O(\log^{5} n \cdot |\mathcal{R}|)$ communication bits per process (amortized), where $|\mathcal{R}|$ is the number of bits needed to store the rumors.
One of the main technical ideas beyond efficiency of our algorithms is Local Signaling -- processes try to maintain efficient (i.e., sparse and fast) communication pattern among a fraction of alive processes by slowly increasing and carefully selecting the set of alive processes to communicate with, until reaching a certain stability condition.
}
\remove{
Fault-tolerant Consensus is one of the foundation problems in distributed computing - a number of autonomous processes want to agree on a common value among the initial ones, despite of failures of processes or communication medium. Since its introduction by Lamport, Pease and Shostak, a large number of algorithms and impossibility results have been developed, applied to solve other problems in distributed computing and systems, and led to discovery of a number of new important problems and their solutions. Despite of this effort, we are still far from obtaining asymptotically optimal solutions in most of the classical distributed models. In this work we focus on time versus communication complexity of consensus in the classic message-passing model in which processes may crash. We show that the following equation holds for consensus agains an adaptive adversary that crashes processes: the number of rounds multiplied by the amortized (per process) number of communication bits is asymptotically equal to the number of processes, modulo a polylogarithmic factor. We show a family of algorithms that fulfills this complexity equation for the whole range of feasible time parameter.
}
\vspace*{1ex}

\noindent
{\bf Keywords: }
Distributed Consensus,
Crash Failures,
Adaptive Adversary
\end{abstract}


\thispagestyle{empty}

\setcounter{page}{0}

\newpage




\section{Introduction}

Fault-tolerant Consensus -- when a number of autonomous processes want to agree on a common value among the initial ones, despite of failures of processes or communication medium -- is 
among
foundation problems in distributed computing.
Since its introduction by Pease, Shostak and Lamport~\cite{PSL}, a large number of algorithms and impossibility results have been developed and analyzed,
applied to solve other problems in distributed computing and systems, and led to a discovery of a number of new
important problems and 
solutions, c.f.,~\cite{AW}.
Despite this persistent effort, we are still far from obtaining even asymptotically optimal solutions in most of the classical distributed~models.

 In particular, in the classical message-passing setting with processes' crashes, since the seminal works of Bar-Joseph and Ben-Or~\cite{Bar-JosephB98} and Aspnes and Waarts~\cite{AspnesW-SICOMP-96,Aspnes-JACM-98} in the previous century, there is still a fundamental unresolved question about communication complexity of randomized Consensus.
More precisely, in this model, $n$ processes communicate and compute in synchronous rounds, by 
sending/receiving messages to/from a subset of processes and
performing local computation.
Each process knows set $\cP$ of IDs of all $n$ processes.
Up to $f<n$
processes may crash accidentally during the computation, which is typically modeled by an abstract adversary
that selects which processes to crash and when,
and additionally -- which messages sent by the crashed processes could reach successfully their destinations. 
An execution of an algorithm against an adversary could be seen as a game, in which the algorithm wants to minimize its complexity measures (such as time and communication bits) while the adversary aims at violating this goal by crashing participating processes. The classical distributed computing focuses on two main types of the adversary: adaptive and oblivious. Both of them know the algorithm in advance, however the former is stronger as it can observe the run of the algorithm and decide on crashes online, while the latter has to fix the schedule of crashes in advance (before the algorithm starts its run).
Thus, these adversaries have different power against randomized algorithms, but same against deterministic ones.

One of the perturbations caused by crashes is that they substantially delay reaching consensus: no deterministic algorithm can reach consensus in all admissible executions within $f$ rounds, as proved by Fisher and Lynch~\cite{FL}, and no randomized solution can do it in $o(\sqrt{n/\log n})$ expected number of rounds against an adaptive adversary, as proved by Bar-Joseph and Ben-Or~\cite{Bar-JosephB98}.
Both these results have been proven (asymptotically) optimal.
The situation gets more complicated if one seeks {\em time-and-communication} optimal solutions.
The only existing lower bound
requires $\Omega(n)$ messages to be sent 
by any algorithm even in some failure-free executions, which gives $\Omega(1)$ bits per process~\cite{AWH}.\footnote{
In this paper we sometimes re-state communication complexity results in terms of the formula {\em amortized per process}, which is the total communication complexity divided by $n$.} 
There exists a {\em deterministic} algorithm with a polylogarithmic amortized number of communication bits~\cite{CK-PODC09}, however deterministic solutions are at least linearly slow, as mentioned above~\cite{FL}.
On the other hand, randomized algorithms running against {\em weak adversaries} are both fast and amortized-communication-efficient, both formulas being $O(\log n)$ or better, c.f., Gilbert and Kowalski~\cite{GK-SODA-10}. 
At the same time, the best randomized solutions against an {\em adaptive adversary} considered in this work requires time $\Theta(\sqrt{n/\log n})$ but large amortized communication  $\Theta(n\cdot \sqrt{n/\log n})$.
In this paper, we show a parameterized algorithm not only improves amortized communication by nearly a linear factor, but also suggests surprisingly that there is no time-and-communication optimal algorithm in this setting.

\B
\Paragraph{Consensus problem.} 
{\em Consensus}
is about making a common decision on some of the processes' input values by every non-crashed process, and is specified by the three requirements:

\BBB
\begin{quote}
	\begin{description}
		\item[\em Validity:]         Only a value among the initial ones may be decided upon.
		\item[\em Agreement:]	No two processes decide on different values.
		\item[\em Termination:] 	Each process eventually decides on some value, unless it crashes.
	\end{description}
\end{quote}

\BBB
\noindent
All the above requirements must hold with probability 1.
We focus on {\em binary consensus}, in which initial values are in $\{0,1\}$. 

\remove{

Our consensus algorithms are built from several different components, including 2-Gossip, Fuzzy Counting and Local Signaling,
which we develop and are also interesting on their own.
%
In the \emph{2-Gossip} problem, there are only two different rumors present in the system,
each in (roughly) half of the processes. The goal is every process to learn the rumor of every other non-faulty process. It is a restricted version of the general {\em Gossip} problem, which can be solved in $O(\log^3 n)$ rounds using $O(n\log^4 n)$ point-to-point messages, but requires $\Omega(n^2)$ communication bits.
Both the general and restricted versions of gossip were already considered in the literature~\cite{CK,AlistarhGGZ10};
we show a new efficient deterministic 2-Gossip algorithm and its application to fuzzy counting.

\remove{
In the \emph{Gossip}, also called \emph{Gossip-with-Crashes}, problem, 
every processor holds an input value, called its \emph{rumor}, and the goal is to have every non-faulty process learn 
either the rumor of any process~$v$ or the fact that $v$ has crashed.
Note that when a process crashes in an execution of a Gossip algorithm,
then it is still correct (according to the definition of the Gossip problem) that some processes may learn the original input rumor of that process while the others may learn that it has crashed, or even both of these facts.

Although there are deterministic solutions to the Gossip problem that work in $O(\log^3 n)$ rounds and
use $O(n\log^4 n)$ point-to-point messages,
it is easy to see that the total number of communication bits is $\Omega(n^2)$ in general case.
Therefore, we consider a restricted version of Gossip, called \emph{2-Gossip}, in which only two different rumors are present in the system,
each at (roughly) half of the processes. We also waive the condition that every process must either learn the rumor of other process~$v$ or the fact that $v$ has crashed. Now, it is enough if every process learns the rumor of every other non-faulty process.
Both the general and restricted versions of Gossip were already considered in the literature~\cite{CK,AlistarhGGZ10};
we show a new efficient deterministic 2-Gossip algorithm and its application to Fuzzy Counting.
}

In the \emph{Fuzzy Counting} problem, the goal is to have each non-faulty process learn a number that is not larger than the initial
number of processes and not smaller than the number of non-faulty processes at the end.
Note that these numbers could be different at different processes -- this is why we have picked the name ``Fuzzy Counting'' for this problem.
We develop an efficient deterministic solution to the fuzzy counting problem and use it as a tool in our consensus algorithm.

The third tool introduced in this work is \emph{Local Signaling}.
It is a specific deterministic algorithm, parameterized by a family of overlay graphs provided to the nodes.
The name comes from the way it works -- similarly to distributed sparking networks, a process keeps sending messages
to its neighbors in specific overlay graphs only if it is receiving enough messages from them.
Once a process fails to receive a significant number of messages from processes working within the same overlay graphs, the Local Signaling detects such anomaly and returns a negative result. Typically, this pushes a processes to choose another, more dense, graph in the family.
The goal to achieve is to have a large number of non-faulty processes such that each of them has communicated with at least a fraction of processes alive near that time. (Here ``communicate'' means a sequence of messages sent along some path connecting the nodes.)
Local Signaling equipped with proper overlay graphs is a leverage over the adversary -- if the adversary does not want a fraction of processes to communicate
but still wants to keep them alive,
it needs to crash a large number of other processes to ``disconnect'' the chosen processes from perspective of the overlay graphs.

}

\remove{
Although all major components are deterministic, we show that they could be applied to both deterministic and randomized
solutions to Consensus. Efficiency is measured in two metrics: time (number of rounds) and the total number of communication bits.
We show that both of these metrics could be minimized simultaneously by some deterministic algorithm,
while for faster randomized solutions we show how to break the $\Omega(n^2)$ communication barrier while keeping time
close to optimal. 
}

\remove{
	Consider a simple gossiping protocol in which every processor sends its input rumor to all the remaining processors in one round of communication.
	A processor receives the rumors of the processors that did not crash during these concurrent broadcast operations, and also possibly some of these that crashed in the course of broadcasting.
	If a message is not received from a processor, then this indicates that the sender crashed.
	This gives a simple gossiping algorithm of a constant-time performance but with a drawback that $\Omega(n^2)$ messages are exchanged in the worst case.
	This example indicates that there is a tradeoff between the time and communication complexities of solutions to \emph{Gossip-with-Crashes}.
	We want to have a gossiping protocol optimized for both time and communication.

	Quality of a distributed algorithm is evaluated in terms of its robustness and scalability.
	The former property is about providing functionality in adverse conditions, for instance, when a certain number of processors may crash in the course of an execution.
	The latter one is about usability across a range of systems of various sizes.
	Scalability as a desirable characteristic reflects the fact that in very large distributed systems a process can contribute an amount of effort that is significantly sub-linear with respect to the size of the system.
	We follow the technical meaning assigned to scalability of an algorithm in~\cite{HKK,KSSV-SODA06,KSSV-FOCS06}.
	Scalability may be defined with respect to any complexity measure. 
	Denote by polylog$(n)$ a quantity that is poly-logarithmic in the number~$n$ of processors, that is, a number that is  $\cO(\log^c n)$ for a constant~$c>0$.
	A poly-logarithmic amount is a natural upper bound of a contribution per process when executing a scalable algorithm.
	We say that a distributed algorithm is \emph{locally scalable}, with respect to a complexity measure, when each process contributes  to the complexity an amount that is $\text{polylog}(n)$.
	An algorithm is \emph{globally scalable} when the total complexity is  $\cO(n\text{ polylog } n)$, which means the average complexity per process is polylog$(n)$.
	Clearly local scalability implies global one.
	We mean global scalability when we refer to just `scalability.'
	We measure the efficiency of an algorithm by both the time \emph{and} the number of point-to-point messages exchanged, and seek simultaneous scalability with respect to each of these two metrics.
	
	The main technical obstacle to obtain the results of this paper is that the adversary who control crashes is restricted only by the absolutely minimal requirement to preserve at least one non-faulty processor in an execution.
	For such an adversary, no deterministic gossiping algorithm can be locally scalable and similarly no deterministic consensus solution can achieve such quality.
	To see this for gossiping, consider a specific processor~$v$. 
	Suppose that $v$ sends $f(n)$ messages where $f(n)=\polylog n$.
	As the protocol is deterministic, the recipients of the messages from $v$ are known in advance, so the adversary may crash them at the start of an execution while never crashing~$v$.
	This prevents a dissemination of the input value of~$v$, which is in disagreement with the specification of \emph{Gossip-with-Crashes}.
	A similar reasoning combined with a standard valence-type argument used for \emph{Consensus} solutions~\cite{AW} shows that no deterministic consensus solution can achieve local scalability.
}

\section{Our Results and New Tools}

Our main result is
a new consensus algorithm \ParamConsensus$^*$, parameterized by $x$, that achieves {\em any}  asymptotic time complexity between $\logO(\sqrt{n})$\footnote{We use $\logO$ symbol to hide any $\polylog n$ factors.} and $\logO(n)$, while preserving the consensus complexity equation: 
Time 
$\times$
Amortized\_Communication 
$= \cO(n\; \polylog n)$. This is also the first algorithm that makes a smooth transition between a class of algorithms with the optimal running time (c.f., Bar-Joseph's and Ben-Or's \cite{Bar-JosephB98} randomized algorithm that works in $\logO( \sqrt{n} )$ rounds) and the class of algorithms with amortized polylogarithmic communication bit complexity 
(c.f., Chlebus, Kowalski and Strojnowski \cite{CK-PODC09} deterministic algorithm using $\logO(1)$ communication~bits).  

\begin{theorem}[Section~\ref{sec:param-generalized}]
\label{thm:param}
For any $x \in [1,n]$ and the number of crashes $f<n$, \ParamConsensus$^*$ solves Consensus with probability~$1$, in $\logO(\sqrt{nx})$ time and $\logO(\sqrt{\frac{n}{x}})$ amortized bit communication~complexity, whp, using $\logO(\sqrt{\frac{n}{x}})$ random bits per process.
\end{theorem}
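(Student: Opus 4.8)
The plan is to build \ParamConsensus$^*$ inside the classical phase-based framework of Bar-Joseph and Ben-Or~\cite{Bar-JosephB98}, but with two modifications that together deliver the claimed trade-off. First, every all-to-all exchange of the original scheme --- the broadcasting of current preferences and the counting of how many processes still survive --- is replaced by the two efficient subroutines developed earlier: \emph{Fuzzy Counting} (to learn, up to fuzziness, how many processes are alive) and \emph{Fault-tolerant Gossip} (to disseminate the $O(\log n)$-bit summary of each phase), each costing only $\polylog n$ rounds and $\polylog n$ amortized bits per process. Second, the parameter $x$ is used to \emph{thin out participation}: only about $n/x$ processes are active in any single phase, so that a process takes part in roughly a $1/x$ fraction of the phases. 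I would then prove the theorem in the standard two layers: the safety properties (Validity and Agreement) hold \emph{in every execution}, while Termination holds with probability $1$ and the time, communication, and randomness bounds hold with high probability.

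For safety, Validity is immediate: a process only ever decides or adopts a value currently held by some process, and the only values ever held are the initial ones. Agreement is where the efficiency subroutines must be handled with care. I would prove the usual \emph{locking invariant}: if some process decides $v$ in a phase, then a strict majority of the processes alive at that moment hold $v$, and this majority is preserved in every later phase so that no process can ever adopt or decide $\bar v$. The subtlety is that a process no longer sees the exact tally but only the Gossip-propagated summary together with a \emph{fuzzy} count lying somewhere between the number of processes alive at the start and at the end of the count. I would therefore set the decision and adoption thresholds with a $\polylog n$ safety margin large enough to absorb this fuzziness, and argue that, because Fuzzy Counting never over-reports the final count nor under-reports the initial one, a process that clears the decision threshold genuinely observed an unbeatable majority; this keeps the invariant intact regardless of the adversary's crashes.

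The heart of the argument --- and the step I expect to be the main obstacle --- is the liveness-and-complexity analysis, i.e.\ bounding the number of phases. The plan is a budget argument against the adaptive adversary. In a phase with $m\approx n/x$ active processes the shared coin has standard deviation $\Theta(\sqrt m)$, so to prevent the surviving preferences from swinging to a common value the adversary must crash $\Omega(\sqrt{m/\log n})=\Omega(\sqrt{(n/x)/\log n})$ of them; any phase in which it crashes fewer is \emph{clean} and reaches a common value with constant probability. Since the total crash budget is $f<n$, the number of phases the adversary can spoil is at most $f/\Omega(\sqrt{(n/x)/\log n})=\logO(\sqrt{nx})$, and a further $\logO(1)$ factor absorbs clean phases whose coin happens to fail; a Chernoff bound then gives Termination within $\logO(\sqrt{nx})$ phases whp, while prob-$1$ termination follows because, once the budget is exhausted, clean phases recur and each succeeds with positive probability. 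The delicate point is that this counting is valid only if Gossip actually disseminates each phase summary among the active survivors --- which is exactly what \emph{Local Signaling} over the expander family guarantees, provided the adversary does not disconnect the active set; I would invoke that guarantee to certify that whenever a large, well-connected active party persists its members reach one another, and that destroying this connectivity itself consumes a comparable share of the crash budget, so it cannot be done for free.

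Finally, the complexity bounds follow by multiplying the phase count by the per-phase costs. With $\logO(\sqrt{nx})$ phases, each of $\polylog n$ rounds, the running time is $\logO(\sqrt{nx})$. Since a process is active in only a $1/x$ fraction of phases, it participates in $\logO(\sqrt{n/x})$ of them, each charging it $\polylog n$ amortized communication bits and $\polylog n$ random bits for its coin, giving $\logO(\sqrt{n/x})$ amortized communication and $\logO(\sqrt{n/x})$ randomness per process. These match the claimed bounds and reproduce $\text{Time}\times\text{Communication}=\logO(n)$ and $\text{Time}\times\text{Randomness}=\logO(n)$. The boundary cases $x=1$ and $x=n$ recover the time-optimal and communication-optimal regimes respectively, and I would finally verify that the $\polylog n$ safety margins chosen for Agreement remain consistent across the whole range $x\in[1,n]$.
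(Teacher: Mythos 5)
Your proposal diverges substantially from the paper's construction, and the divergence hides a genuine gap in the Agreement argument. The paper does \emph{not} thin out participation over time inside a Bar-Joseph--Ben-Or phase schedule. Instead it partitions $\cP$ \emph{spatially} into $x$ super-processes of $n/x$ members, runs the (Fuzzy-Counting-accelerated) BiasedConsensus \emph{inside} each super-process at a cost of $\logO(\sqrt{n/x})$ time and bits per member, and couples the super-processes by a deterministic flooding of the value $1$ over an $O(\log x)$-degree expander $\cH$ for $x+1$ iterations (which is what produces the $\logO(\sqrt{nx})$ time), followed by a compactness-based survival test and one global \Gossip. The per-process communication stays at $\logO(\sqrt{n/x})$ not because a process is scheduled into few phases, but because of Lemma~\ref{lem:phase-1-sync-comm}: each super-process sends $1$ to its neighbours at most once, so a given super-process can be woken into a fresh BiasedConsensus call at most $\delta_x=O(\log x)$ times over all $x+1$ iterations.

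The gap in your version is the hand-off of the preference state between disjoint active sets. With $m\approx n/x$ active processes per phase and $\logO(\sqrt{nx})$ phases, each group of $n/x$ processes is active in only about $\sqrt{n/x}$ phases, and between activations the ``current preference'' lives solely in the $n/x$ currently active processes. Your locking invariant quantifies over \emph{all} processes alive at that moment, but the inactive $n(1-1/x)$ processes never update their preference without extra communication; and the adversary, with budget $f=n-1\gg n/x$, can crash the \emph{entire} active group immediately after one of its members decides $v$, leaving the next (disjoint) group with no trace of $v$ and free to decide $\bar v$. Disseminating each phase summary to everyone would repair this, but it costs $\polylog n$ bits per process in every phase, i.e.\ $\logO(\sqrt{nx})$ bits per process in total, destroying the claimed $\logO(\sqrt{n/x})$ bound. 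The paper escapes this dilemma by making the decision predicate global and structural rather than tally-based: a value can be ``confirmed'' only by a super-process that survives Phase~2, survival forces membership in a connected component of at least $x/2$ non-faulty super-processes (Lemmas~\ref{lem:fraction-survives} and~\ref{lemma:super-are-same}), expansion of $\cH$ guarantees there is only one such component and hence only one confirmed value, and a single final \Gossip{} delivers it to all. You would need an analogous single-shot, expansion-based uniqueness argument --- plus the $O(\log n)$-epoch densification of Section~\ref{sec:param-generalized} to pass from $f<n/10$ to arbitrary $f<n$ --- before your budget computation could be turned into a proof of the stated theorem.
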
 

In this section we only give an overview of the most novel and challenging part of \ParamConsensus$^*$, called \ParamConsensus, which solved Consensus if the number of failures $f<\frac{n}{10}$. Its generalization to \ParamConsensus$^*$ is done in Section~\ref{sec:param-generalized}, by exploiting the concept of epochs in a similar way to~\cite{Bar-JosephB98,CK-PODC09}. In short, the first and main epoch (in our case, \ParamConsensus\ followed by {\sc BiasedConsensus} described in Section~\ref{sec:biased-consensus-short}) is repeated $O(\log n)$ times, each time adjusting expansion/density/probability parameters by factor equal to~$\frac{9}{10}$. The complexities of the resulting algorithm are multiplied by logarithmic factor.

\Paragraph{High-level idea of} \hspace*{-0.5em}\ParamConsensus{\bf .}
%
In \ParamConsensus, processes are clustered into $x$ disjoint groups, called super-processes $SP_1,\ldots,SP_x$, of $\frac{n}{x}$ processes each. Each process, in a local computation, initiates its candidate value to the initial value, pre-computes the super-process it belongs to, as well as 
two expander-like overlay graphs which are later use to communicate with other processes.

Degree $\delta$ of both overlay graphs is $O(\log n)$, and correspondingly the edge density, expansion and compactness are selected, 
c.f., Sections~\ref{sub:overlay} and~\ref{sec:param-short}.
One overlay graph, denoted $\cH$, is spanned on the set of $x$ super-processes, 
while copies of the other overlay graph are spanned on the members of each pair of super-processes $SP_i,SP_j$ connected by an edge in $\cH$ (we denote such copy by $SE(SP_i,SP_j)$).

\ParamConsensus{} is split into three phases, c.f., Algorithm~\ref{alg:param:un-conf-gossip} in Section~\ref{sec:param-short}. Each phase~uses~some of the newly developed tools, described later in this section: $\alpha$\textsc{-BiasedConsensus} and \Gossip.
Processes keep modifying their candidate values, starting from the initial values, through different interactions.

\noindent
{\bf\em Using the tools.}
$\alpha$\textsc{-BiasedConsensus} is used for maintaining the same candidate value within each super-process, biasing it towards $0$ if less than a certain fraction $\alpha$ of members prefer $1$; see description in Section~\ref{sec:biased-consensus-short} and \ref{sec:rand-consensus}.
Theorem~\ref{thm:stable-cons} proves that  $\alpha$\textsc{-BiasedConsensus} works correctly in $\logO(\sqrt{n/x})$ time and communication bits per process. 
\Gossip, on the other hand, is used to propagate values between all or a specified group of processes, see description in Section~\ref{sec:gossip-short} and~\ref{sec:gossip-long}.
Theorem~\ref{thm:cheap2-gossip} guarantees that \Gossip\ allows to exchange information between the involved up to $n'$ processes, where $n'\le n$, in time $O(\log^3 n)$ and using $O(\log^6 n)$ communication bits per process (in this application, we are using a constant number of rumors, encoded by constant number of bits).

\noindent
{\bf\em In Phase~1,} super-processes want to flood value $1$ along an overlay graph $\cH$
of super-processes, to make sure that processes in the same connected component of $\cH$ have the same candidate value at the end of Phase 1.
Here by a connected component of graph $\cH$ we understand a maximum connected sub-graph of $\cH$ induced by super-processes of at least $\frac{3}{4}\cdot \frac{n}{x}$ non-faulty processes; we call such super-processes non-faulty.
Recall, that the adversary can disconnect super-processes in $\cH$ by crashing some members of selected super-processes.
To do so, the following is repeated $x+1$ times:
processes in a non-faulty super-process $SP_i$, upon receiving value $1$ from some neighboring non-faulty super-process, make agreement (using {\sc BiasedConsensus}) to set up their candidate value to~$1$ and send it to all their neighboring super-processes $SP_j$ via links in overlay graphs $SE(SP_i,SP_j)$.
One of the challenges that need to be overcome is inconsistency in receiving value $1$ by members of the same super-process, as -- due to crashes -- only some of them may receive the value while others may not. We will show that it is enough to assume threshold $\frac{2}{3}$ in the \textsc{BiasedConsensus}, which together with expansion of overlay graphs $SE(SP_i,SP_j)$ and compactness of $\cH$ (existence of large sub-component with small diameter, c.f., Lemma~\ref{lemma:survival-small-diameter}) guarantee propagation of value $1$ across the whole connected component in $\cH$.
%
It all takes
$(x+1)\cdot (\logO(\sqrt{n/x})+1)=\logO(\sqrt{xn})$ rounds and
$\logO(\sqrt{n/x}+\log n)=\logO(\sqrt{n/x})$ amortized communication per process;
see Section~\ref{sec:phase1} for details.

\noindent
{\bf\em In Phase 2,} 
non-faulty super-processes 
want to estimate the number of 
non-faulty super-processes in the neighborhood of radius $O(\log{x})$ in graph $\cH$. (We know from Phase 1 that whole connected non-faulty component in $\cH$ has the same candidate value.)
In order to do it, they become ``active'' and keep exchanging candidate value $1$ with their neighboring super-processes in overlay graph $\cH$ in stages, until the number of ``active'' neighbors becomes less or equal to a threshold $\delta_x=\Theta(\log x)< \delta$, in which case the super-process becomes inactive, but not more than than $\gamma_x=O(\log x)$ stages.
To assure proper message exchange between neighboring super-processes, \Gossip\ is employed on the union of members of every neighboring pair of super-processes. 
It is followed by \textsc{BiasedConsensus} within each active super-process to let all its members agree if the threshold $\delta_x$ on the number of active neighbors holds.
Members of those super-processes who stayed active by the end of stage $\gamma_x$ (``survived'') conclude that there was at least a certain constant fraction of non-faulty super-processes (each containing at least a fraction of non-faulty members) in such neighborhood in the beginning of Phase 2, and thus they set up variable $\texttt{confirmed}$ to $1$ -- it means they confirmed being in sufficiently large group having the same candidate value and thus they are entitled to decide and make the whole system to decide on their candidate value.
It all takes
$\gamma_x\cdot \logO(\sqrt{n/x}+\log^3 n)\le \logO(\sqrt{xn})$ rounds and
at most $\gamma_x \cdot \delta \cdot \logO(\log^6 n +\sqrt{n/x}) = \logO(\sqrt{n/x})$ amortized communication per process.
See Section~\ref{sec:param-consensus-final} for further details.

\noindent
{\bf\em In Phase 3,} 
we discard the partition into $x$ super-processes. All processes want to learn if there was a sufficiently large group confirming the same candidate value in Phase 2.
To do so, they all execute the \Gossip\ algorithm. Processes that set up variable $\texttt{confirmed}$ to $1$
start the \Gossip\ algorithm with their rumor being 
their candidate value;
other processes start with a null value.
Because super-processes use graph $\cH$ for communication, which in particular satisfies $(\frac{x}{64}, \frac{3}{4}, \delta_{x})$-compactness property (i.e., from any subset of at least $\frac{x}{64}$ super nodes one can choose at least $\frac{3}{4}$ of them such that they induced a subgraph of degree at least $\delta_x$), we will prove that at the end of Phase 2 at least a constant fraction of super-processes must have survived and be non-faulty (i.e., their constant fraction of members is alive). 
Moreover, we show that there could be only one non-faulty connected component of confirmed processes, by expansion of graph $\cH$ that would connect two components of constant fraction of super-processes each (and thus would have propagated value $1$ from one of them to another in Phase 1) -- hence, there could be only one non-null rumor in the \Gossip, originated in a constant fraction of processes.
By property of \Gossip, each non-faulty process gets the rumor and decides on it.
%
It all takes
$\logO(\log^3 n)\le \logO(\sqrt{xn})$ rounds and
at most $\logO(\log^6 n) = \logO(\sqrt{n/x})$ amortized communication per process;
see Section~\ref{sec:phase2} for details.

\noindent
{\bf\em Summarizing,}
each part takes $\logO(\sqrt{xn})$ rounds and
$\logO(\sqrt{n/x})$ amortized communication per process.
Each process uses random bits only in executions of {\sc BiasedConsensus} it is involved to, each requiring $\logO(\sqrt{n/x})$ random bits (at most one random bit per round). The number of such executions is $O(x)$ in Part 1 and $O(\log n)$ in Part 2, which in total gives $\logO(\sqrt{xn})$ random bits~per~process.



\subsection{$\alpha$-Biased Consensus}
\label{sec:biased-consensus-short}

Let us start with the formal definition of $\alpha$-Biased Consensus.
\begin{definition}[$\alpha$-Biased Consensus]
An algorithm solves $\alpha$-Biased Consensus if it solves the Consensus problem and additionally, the consensus value is $0$ if less than $\alpha\cdot n$ initial values of processes are $1$. 
\end{definition}

\noindent In Section~\ref{sec:rand-consensus}, we design an efficient $\alpha$-Biased Consensus algorithm and prove the following:

\begin{theorem}[Section~\ref{sec:rand-consensus}]
\label{thm:stable-cons}
For every constant $\alpha > 0$, there exists an algorithm, called $\alpha$\textsc{-Biased\-Consensus}, that 
solves $\alpha$-Biased Consensus problem with probability $1$, in $\logO(f / \sqrt{n})$ rounds and using $\logO(f / \sqrt{n})$ amortized communication bits whp, for any number of crashes $f < n$.
\end{theorem}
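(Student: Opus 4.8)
The plan is to implement $\alpha$\textsc{-BiasedConsensus} as a sequence of \emph{epochs}, each of $O(\log^3 n)$ rounds, assembled entirely from the two communication-efficient primitives already available: Fuzzy Counting (deterministic, $O(\log^3 n)$ rounds, $\polylog n$ amortized bits per process) and \Gossip\ (Theorem~\ref{thm:cheap2-gossip}). Every process keeps a \emph{preference} in $\{0,1\}$, initialized to its input. In each epoch a process: (i) invokes Fuzzy Counting twice to obtain an estimate $N$ of the currently alive population and an estimate $c$ of how many of those prefer $1$; (ii) compares $c$ against a ``low'' threshold $\tau_L N$ and a ``high'' threshold $\tau_H N$ (with $\tau_L<\tau_H$ separated by a fixed constant gap, and skewed so that adopting $1$ requires $c$ to exceed $\alpha N$ by a margin); (iii) if $c\ge \tau_H N$ it sets its preference to $1$, if $c\le \tau_L N$ it sets it to $0$, and otherwise flips a fair local coin and adopts the outcome; (iv) it \emph{commits} to a value $v$ only when its count is on the $v$-side of the threshold by a full safety margin, after which one finalizing epoch broadcasts the decision via \Gossip. \emph{Validity} follows because unanimous preferences are read back (up to fuzziness) as unanimous and hence preserved, so a common initial value is kept and decided. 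The $0$-bias is intended to be enforced by step~(ii): value $1$ is never adopted unless a counting certificate places the $1$-support above the $\alpha$ fraction, and since the Fuzzy Counting estimate is sandwiched between the true alive $1$-counts at the start and end of the count, an instance starting with fewer than $\alpha n$ ones can never produce such a certificate, forcing every decision to be $0$.

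For \emph{Agreement} the crucial point is that the threshold gap $\tau_H-\tau_L$ is chosen larger than the maximum two-sided discrepancy Fuzzy Counting can introduce between two processes (itself bounded by the number of crashes inside a single epoch, via the sandwich guarantee). I would prove a ``locking'' lemma: once some process commits to $v$, the aggregate alive support for $v$ is so large that in every later epoch every survivor reads a count strictly on the $v$-side of \emph{its} own threshold, hence never flips and never adopts $1-v$, and therefore decides $v$ in the finalizing epoch. Because $\tau_H>\tau_L$, no two processes can certify opposite values in the same epoch, so the first commit pins the global decision value.

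\emph{Termination with probability $1$} and the round bound $\logO(f/\sqrt{n})$ come from the adaptive-adversary potential argument of Bar-Joseph and Ben-Or~\cite{Bar-JosephB98}, adapted to fuzzy counts. In any epoch whose preferences are not already unanimous, the undecided processes in the middle band flip fair coins; the number of $1$-outcomes is a sum of standard deviation $\Theta(\sqrt{n})$, so with constant probability it lands far enough from the boundary that \emph{every} process's count crosses the same threshold, yielding unanimity in the next epoch. To deny this the adversary must crash $\Omega(\sqrt{n})$ processes within the epoch to reshape the observed counts; as its total budget is $f$, the number of non-converging epochs is $O(f/\sqrt{n})$ in expectation and, by a Chernoff/martingale tail over the epochs, also whp. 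Multiplying by $O(\log^3 n)$ rounds per epoch gives time $\logO(f/\sqrt{n})$; since each epoch calls Fuzzy Counting and \Gossip\ a constant number of times at $\polylog n$ amortized bits each, the amortized communication is $\logO(f/\sqrt{n})$ bits per process whp, and with at most one coin per epoch the random-bit count is also $\logO(f/\sqrt{n})$ per process.

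I expect the main obstacle to be the bias/agreement interface under the \emph{adaptive} adversary: calibrating the constant threshold gap so that it simultaneously (a) absorbs the two-sided fuzziness of the counting primitive, (b) keeps the adopt-$0$ and adopt-$1$ regions disjoint, and (c) still leaves the flip band wide enough that the $\Theta(\sqrt{n})$ coin fluctuation reaches a threshold with constant probability. The most delicate issue is preserving the $0$-bias while the adversary selectively crashes $0$-preferring processes in order to inflate the $1$-\emph{fraction} relative to a shrinking alive set $N$ -- reconciling the relative threshold $\tau_H N$ with the absolute initial-count condition ``fewer than $\alpha n$ ones.'' I would handle this with an inductive monotonicity argument on the number of $1$-preferring processes, bounding by the Fuzzy Counting sandwich the amount by which an adversary that observes coin outcomes before deciding whom to crash can shift any process's observed count within one epoch.
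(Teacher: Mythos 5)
Your core construction is the same as the paper's: take the Bar-Joseph--Ben-Or threshold/coin-flip loop, replace the all-to-all counting of $O_p^r$ and $Z_p^r$ by \textsc{FuzzyCounting}, observe that the original BJB analysis already tolerates counts that are only guaranteed to lie between the start-of-step and end-of-step alive populations (which is exactly the Fuzzy Counting sandwich), and inherit the $O(f/\sqrt{n})$ bound on the number of non-converging phases because the adversary must spend $\Omega(\sqrt{n})$ crashes per phase to keep the coin-flip fluctuation from pushing all survivors past a common threshold. Your time and communication accounting (polylog rounds and polylog amortized bits per epoch, times $O(f/\sqrt{n})$ epochs) matches the paper's.

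The one place you genuinely diverge is the enforcement of the $\alpha$-bias, and it is the place where your plan has a gap that you partly acknowledge but do not close. You propose to skew the adoption rule in \emph{every} epoch so that adopting $1$ requires the observed $1$-count to exceed an $\alpha$-dependent threshold. The bias condition, however, is an \emph{absolute, initial-configuration} condition (fewer than $\alpha n$ initial ones forces decision $0$), while your per-epoch test is against the \emph{current} alive population $N$; as you note, an adaptive adversary that crashes $0$-preferring processes can inflate the relative $1$-fraction, and conversely a per-epoch test against the absolute bound $\alpha n$ would destroy validity when all inputs are $1$ but crashes shrink the alive set below $\alpha n$. Your proposed fix (an ``inductive monotonicity argument'') is not spelled out and it is not clear it can work, precisely because the two thresholds (relative to $N$, absolute in $n$) pull in opposite directions. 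The paper avoids this entirely: it runs \textsc{FuzzyCounting} \emph{once}, at the very beginning, over the set of processes with input $1$, has each process compute $x_p = b_p \wedge [\texttt{ones}\ge \alpha n]$, and then runs the (unbiased) BJB-style loop on the $x_p$'s. The sandwich property guarantees that if fewer than $\alpha n$ processes start with $1$, every process gets $x_p=0$ and validity of the inner consensus forces the decision to $0$; no per-epoch $\alpha$-dependent thresholding is needed, and the inner loop's thresholds stay exactly BJB's constants. You should replace your per-epoch bias mechanism with this one-shot preprocessing step; the rest of your argument then goes through as in the paper.
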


Note that for $f=\Theta(n)$ the algorithm works in $\logO(\sqrt{n})$ rounds and uses $\logO(\sqrt{n})$ communication bits per process.
Observe also that the above result solves classic Consensus as well, and as a such, it is the first algorithm which improves on the amortized communication of 
Bar-Joseph's and Ben-Or's Consensus algorithm \cite{Bar-JosephB98}, which has been known as the best result up for over 20 years. The improvement is by a nearly linear factor $\Theta(n/\log^{13/2} n)$,
while being only $O(\log^3 n)$ away from the absolute lower bound on time complexity (also proved in \cite{Bar-JosephB98}).

\Paragraph{High-level idea of} \hspace*{-0.5em}$\alpha$\textsc{-BiasedConsensus}{\bf .}
%
The improvement 
comes from
replacing a direct communication, in which originally all processes were exchanging their candidate values, by procedure {\sc FuzzyCounting}. This deterministic procedure solves Fuzzy Counting problem, i.e., each process outputs a number between the starting and ending number of active processes, and does it in $O(\log^3 n)$ rounds and with $O(\log^7 n)$ communication bits per process, see Sections~\ref{sec:fuzzy-short},~\ref{sec:goss-fuz-cnt-short} and Theorem~\ref{thm:fuzzy-counting}.

First, processes run 
{\sc FuzzyCounting}
where the set of active processes consists of the processes 
with input value $1$.
Then, each process calculates logical $AND$ of the two values: its initial value and the logical value of formula ``$\texttt{ones} \ge \alpha \cdot n$'', where $\texttt{ones}$ is the number of $1$'s output by the {\sc FuzzyCounting} algorithm. Denote by $x_{p}$ the output of the logical $AND$ calculated by process $p$ -- it becomes $p$'s candidate value.

Next, processes run $O(f / \sqrt{n\log{n}})$ \textit{phases} to update their candidate values 
such that eventually every process keeps the same choice. To do so, in a round $r$ every process $p$ calculates, using the {\sc FuzzyCounting} algorithm, the number of processes with (current) candidate value $1$ and, separately, the number of processes with (current) candidate value $0$, denoted $O_{p}^{r}$ and $Z_{p}^{r}$ respectively. Based on these numbers, process $p$ either sets its candidate value to $1$, if the number $O_{p}^{r}$ is large enough, or it sets it to $0$, if the number $Z_{p}^{r}$ is large, or it replaces it with a random bit, if the number of zeros and ones are close to each other.  

In the Bar-Joseph's and Ben-Or's algorithm the numbers $Z_{p}^{r}$ and $O_{p}^{r}$ were calculate in a single round of all-to-all communication. However, we observe that because processes' crashes may affect this calculation process in 
an arbitrary way (the adversary could decide which messages of the recently crashed processes to deliver and which do not, see Section~\ref{s:model}) and also because messages are simply zeros and ones, this step can be replaced by any solution to Fuzzy Counting. 
In particular, the correctness and time complexity analysis of the original Bar-Joseph's and Ben-Or's algorithm captured the case when an arbitrary subset of 0-1 messages from processes alive in the beginning of this step and a superset of those alive at the end of the step could be received and counted -- and this can be done by our solution to the Fuzzy Counting problem.  


\Paragraph{Monte Carlo version for $f = n - 1$.}
$\alpha$\textsc{-BiasedConsensus} as described above is a Las Vegas algorithm with an expected time complexity $\tau = \logO(\sqrt{n})$, as is the original Bar-Joseph's and Ben-Or's algorithm on which it builds.
However, we can make it Monte Carlo, which is more suitable for application in \ParamConsensus{}, by forcing all processes to stop by time $const \cdot \tau$.
In such case, the worst-case running time will always be
while the correctness (agreement) will hold only~whp.
In order to be applied as a subroutine in the \ParamConsensus{}, we need to add 
one more adjustment, 
so that \ParamConsensus{} could guarantee correctness with probability $1$. 
Mainly, processes which do not decide by time $const \cdot \tau - 2$ initiate a 2-round switch of the {\em whole system of $\cP$ processes} to a deterministic consensus algorithm, that finishes in $O(n)$ rounds and uses $O(\polylog n)$ communication bits per process, e.g., from~\cite{CK-PODC09}.
Such switch between two consensus algorithms has already been designed and analyzed before, c.f.,~\cite{CK-PODC09}, and since this scenario happens only with polynomially small probability, the final time complexity of \ParamConsensus{} will be still $\logO(\sqrt{xn})$ and bit complexity $\logO(\sqrt{n/x})$ per process, both whp and expected.
%

\subsection{Improved Fault-tolerant Gossip Solution}
\label{sec:gossip-short}


The \ParamConsensus{} 
algorithm relies on a new (deterministic) solution to a well-known Fault-Tolerant Gossip problem, in which each non-faulty process has to learn initial rumors of all other non-faulty processes (while it could or could not learn some initial rumors of processes that crash during the execution).
Many solutions to this problem have been proposed (c.f.,~\cite{CK, AlistarhGGZ10}), yet, the best deterministic algorithm given in~\cite{CK} solves Fault-tolerant Gossip in $O(\log^3 n)$ rounds using $O(\log^4 n)$ point-to-point messages amortized per process. However, it requires $\Omega(n)$ amortized communication bits regardless of the size of rumors. We improve this result 
as follows:


\begin{theorem}[Section~\ref{sec:gossip-long}]
\label{thm:cheap2-gossip}
\Gossip{} solves deterministically the Fault-tolerant Gossip problem in $\logO(1)$ rounds using $\logO(|\cR|)$~amortized number of communication bits, where $|\cR|$ is the number of bits needed to encode the~rumors.
\end{theorem}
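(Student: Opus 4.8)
The plan is to prove Theorem~\ref{thm:cheap2-gossip} by organizing the dissemination as a fault-tolerant divide-and-conquer over a fixed balanced partition tree of the $n$ processes, reducing the whole task to $\cO(\log n)$ sequential invocations of a cheap \BipartiteGossip{} primitive, one per level of the tree. First I would fix a balanced binary partition tree $\cT$ of depth $\cO(\log n)$ whose leaves are singleton processes and whose internal nodes correspond to the unions of their children's process sets. Gossip proceeds bottom-up through the $\cO(\log n)$ levels; at an internal node with children blocks $A$ and $B$ we run $\BipartiteGossip(A,B)$, whose sole purpose is to let every non-faulty process in $A\cup B$ learn the union of the rumor sets currently held inside $A$ and inside $B$. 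Maintaining, as a loop invariant, that after processing a node every non-faulty process in its block knows the set of all distinct rumor values held by processes that were non-faulty in that block, the correctness of full \Gossip{} follows at the root by induction on the level. Since the collection of distinct rumor values is encodable in $|\cR|$ bits (as in the intended applications), every rumor set a process ever stores or forwards has size at most $|\cR|$, which is exactly what prevents the $\Omega(n)$ blow-up of the accumulated-union approach of~\cite{CK}.

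The heart of the argument is the implementation of $\BipartiteGossip(A,B)$ by \emph{local signaling} over a precomputed family of compact bipartite expanders $G_1,G_2,\dots$ between $A$ and $B$ of geometrically increasing degree $\delta_1<\delta_2<\cdots$, all $\cO(\log n)$. Each process starts on the sparsest overlay $G_1$ and repeatedly: (i) exchanges short, $\cO(\log n)$-bit \emph{signals} with its current overlay neighbors to test whether it still sits in a large, well-connected (``compact'') part of the graph; (ii) forwards its current rumor set only along those edges that passed the signaling test; and (iii) escalates to the next, denser overlay $G_{i+1}$ whenever signaling fails, i.e.\ too few neighbors respond. The structural fact I would invoke is the compactness of the $G_i$ in the sense of Lemma~\ref{lemma:survival-small-diameter}: from any sufficiently large surviving set one can extract a constant fraction inducing a subgraph of min-degree $\Theta(\log n)$, hence containing a connected sub-part of $\cO(\log n)$ diameter straddling both $A$ and $B$. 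Inside such a sub-part rumors flood to everyone in $\cO(\log n)$ rounds, and expansion rules out two large mutually separated compact components, so all survivors that succeed in signaling end up in a single component that collectively --- and therefore individually --- holds the entire union of rumors of $A\cup B$.

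The step I expect to be the main obstacle is the amortized communication bound against the adaptive adversary, precisely because it may keep crashing a process's overlay neighbors to force re-sends. Two things must be established together. Cheap signaling must be \emph{decoupled} from expensive rumor transmission: a process transmits the $\cO(|\cR|)$-bit rumor set only along edges certified live by the preceding $\cO(\log n)$-bit signal, so that no long message is ever wasted on a dead or disconnected neighbor. And every escalation to a denser graph $G_{i+1}$ must be \emph{charged} against a constant fraction of the block being freshly crashed: a process can fail signaling in $G_i$ only if the adversary has already destroyed the well-connected sub-part that Lemma~\ref{lemma:survival-small-diameter} guarantees at level $i$, which costs it many crashes. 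This bounds the number of escalations, and hence the number of rounds, by $\cO(\log n)$ per \BipartiteGossip{}, and the number of signal and rumor exchanges per process per level by $\cO(\polylog n)$.

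Combining over the $\cO(\log n)$ levels of $\cT$ then yields $\cO(\log^3 n)=\logO(1)$ rounds, matching the round bound, while the per-process communication sums to $\cO(\log n)$ levels times $\cO(\polylog n)$ exchanges times the $\cO(|\cR|)$-bit payload of each rumor message plus $\cO(\log n)$-bit signals, i.e.\ $\cO(|\cR|\cdot\polylog n)=\logO(|\cR|)$ amortized bits, as claimed. The delicate points I would devote the most care to are the charging argument that turns each overlay escalation into $\Omega$(block size) irrevocable crashes, and the expansion-based uniqueness of the large compact component, since together they are what simultaneously certify termination, correctness, and the near-optimal communication of \Gossip{}.
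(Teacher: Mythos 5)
Your proposal follows essentially the same route as the paper: a balanced recursive halving of $\cP$ reducing \Gossip{} to $\cO(\log n)$ parallel levels of \BipartiteGossip{}, which is in turn implemented by local signaling over nested families of compact expanders with escalation to denser overlays upon signaling failure, correctness via compactness/small-diameter survival sets plus expansion, and the communication bound via the observation that at most an $\cO(2^{-i})$ fraction of processes can ever reach overlay level $i$. The only cosmetic deviation is your decoupling of short signals from long rumor payloads (the paper simply piggybacks the rumor set on every message and bounds the total message count), which does not change the argument or the asymptotics.
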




\Paragraph{High-level idea of} \hspace*{-0.5em}\Gossip{\bf .}
The algorithm implements a distributed divide-and-conquer approach 
that utilizes the \BipartiteGossip{} deterministic algorithm, described in Section~\ref{sec:bipartite-gossip}, in the recursive calls. 
Each process takes the set $\cP$, an initial rumor $r$ and its unique name 
$p \in \cP$
as an input.
The processes split themselves into two groups of size at most $\ceil{n/2}$ each: 
the first $\ceil{n/2}$ processes with the smallest names make the group $\cP_{1}$, while the $n - \ceil{n/2}$ processes with the largest names constitute group $\cP_{2}$. Each of those two groups of processes solves Gossip separately, by evoking the \Gossip{} algorithm inside the group only. The processes from each group know the names of every other process in that group, hence the necessary conditions to execute the \Gossip{} recursively are satisfied. After the recursion finishes, a process in $\cP_{1}$ stores a set of rumors 
$\cR_{1}$ 
of processes from its group, and respectively, a process in $\cP_{2}$ stores a set of rumors 
$\cR_{2}$ 
of processes from its group. Then, the processes solve the Bipartite Gossip problem by executing the \BipartiteGossip{} algorithm on the partition $\cP_{1}$, $\cP_{2}$ and having initial rumors 
$\cR_{1}$ and $\cR_{2}$. 
The output of this algorithm is the final output of the \Gossip.
A standard inductive analysis of recursion and Theorem~\ref{thm:cheap-bi-gossip} stating correctness and $\logO(1)$ time and $\logO(|\cR|)$ amortized communication complexities of \BipartiteGossip{} imply Theorem~\ref{thm:cheap2-gossip}, which proof is deferred to Section~\ref{sec:gossip-long}.

\subsection{Fuzzy Counting}
\label{sec:fuzzy-short}

The abovementioned improvement of algorithm $\alpha$-\textsc{BiasedConsensus} over \cite{Bar-JosephB98} is possible because of designing and employing an efficient solution to a newly introduced Fuzzy Counting problem. 

\begin{definition}[Fuzzy Counting]
An algorithm solves Fuzzy Counting if
each process returns a number between the initial and the final number of active processes. Here, being active depends on the goal of the counting, e.g., all non-faulty processes, processes with initial value $1$, etc.
\end{definition}

\noindent Note that the returned numbers could be different across processes.
In Section~\ref{sec:goss-fuz-cnt-short} we design a deterministic algorithm {\sc FuzzyCounting} and prove the following:

\begin{theorem}[Section~\ref{sec:gossip-long}]
\label{thm:fuzzy-counting}
The \textsc{FuzzyCounting} deterministic algorithm solves the \Fuzzy Counting problem in $\logO(1)$ rounds, using $\logO(1)$ communication bits amortized per process.
\end{theorem}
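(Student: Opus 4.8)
The plan is to reduce Fuzzy Counting directly to the Fault-tolerant Gossip problem solved by \Gossip{} (Theorem~\ref{thm:cheap2-gossip}). In \textsc{FuzzyCounting}, every process runs a single execution of \Gossip{} whose rumor is one bit encoding whether the process is active with respect to the goal of the counting (e.g., having initial value~$1$). Since \Gossip{} inherently tags each rumor with the identifier of its originator, after the call each non-faulty process $p$ holds a set of pairs $(q,b_q)$ with $b_q\in\{0,1\}$ indicating the active status of $q$. Process $p$ then returns $\texttt{count}_p$, the number of distinct identifiers $q$ for which it learned $b_q=1$.

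The complexity bound is immediate from Theorem~\ref{thm:cheap2-gossip}: here the rumor is of constant size, $|\cR|=O(1)$, so the single \Gossip{} call runs in $\logO(1)$ rounds and uses $\logO(|\cR|)=\logO(1)$ amortized communication bits per process, with the $O(\log n)$ bits spent per identifier absorbed into the hidden polylogarithmic factors.

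The heart of the argument is the two-sided correctness bound. Let $A_{\mathrm{start}}$ denote the set of active processes alive at the beginning of the execution and $A_{\mathrm{end}}$ the set of active processes that remain non-faulty throughout; since being active is a fixed property and crashes only remove processes, we have $A_{\mathrm{end}}\subseteq A_{\mathrm{start}}$, and $|A_{\mathrm{end}}|$ and $|A_{\mathrm{start}}|$ are exactly the final and initial numbers of active processes. For the lower bound, the Gossip guarantee ensures that every process non-faulty for the entire execution, in particular every process in $A_{\mathrm{end}}$, has its rumor delivered to $p$, so $\texttt{count}_p\ge |A_{\mathrm{end}}|$. For the upper bound, an identifier $q$ can contribute to $\texttt{count}_p$ only if $b_q=1$, i.e., only if $q$ is one of the fixed active processes counted in $A_{\mathrm{start}}$; hence $\texttt{count}_p\le |A_{\mathrm{start}}|$. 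Combining the two inequalities yields $|A_{\mathrm{end}}|\le \texttt{count}_p\le |A_{\mathrm{start}}|$, which is precisely the Fuzzy Counting requirement, with counts allowed to differ across processes.

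The hard part will be pinning down the semantics of Gossip under crashes so that both bounds go through cleanly: the ``fuzzy'' slack comes exactly from the rumors of processes that crash mid-execution, whose delivery to $p$ is optional and adversary-controlled. The lower bound leans on the strong delivery guarantee for processes that survive to the end, while the upper bound leans only on the fact that no inactive or nonexistent process can ever inject an active rumor, since the adversary merely crashes processes and controls delivery of crashed processes' messages rather than fabricating rumors. Neither direction requires agreement among processes, which is why differing per-process outputs are acceptable and no additional coordination round is needed.
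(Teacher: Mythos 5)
There is a genuine gap in your complexity accounting, and it sits exactly at the point that makes Fuzzy Counting a nontrivial problem. You invoke Theorem~\ref{thm:cheap2-gossip} with ``$|\cR|=O(1)$'' because each individual rumor is one bit, but in that theorem $|\cR|$ is the number of bits needed to encode the \emph{collection} of rumors that must be carried in messages: inside \BipartiteGossip{} every \texttt{Exchange} and \textsc{LocalSignaling} message carries the full set $R$ of rumors learned so far, and at the top level of the divide-and-conquer each half hands the other half its entire accumulated rumor set. Your scheme requires each learned rumor to be tagged with its originator's identifier (otherwise the learned set collapses to a subset of $\{0,1\}$ and nothing can be counted), so the set being shipped around is a set of up to $n$ pairs $(q,b_q)$, i.e., $|\cR|=\Theta(n\log n)$. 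Plugging that into Theorem~\ref{thm:cheap2-gossip} gives $\logO(n)$ amortized bits per process, not $\logO(1)$ --- consistent with the paper's remark that general Gossip with distinct rumors inherently costs $\Omega(n)$ bits per process. Your correctness argument (the two-sided sandwich between $|A_{\mathrm{end}}|$ and $|A_{\mathrm{start}}|$) is fine, but the theorem's whole content is the communication bound, and that is what fails.

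The paper's proof avoids this by never transporting identities at all. \textsc{FuzzyCounting} modifies \Gossip{} so that each recursive call returns only two aggregated counters $\texttt{Z}$ and $\texttt{O}$; after the two halves $\cP_1,\cP_2$ recurse, \BipartiteGossip{} is run with the \emph{counter pairs} as the (only two) rumors, and a process outputs $\texttt{Z}_1+\texttt{Z}_2$, $\texttt{O}_1+\texttt{O}_2$ if it heard from the other side, and its own half's counters otherwise. Because sums replace unions of identifier sets, the payload at every level is $|\cR|=O(\log n)$ bits, which is what yields the $\logO(1)$ amortized communication. The fuzzy (rather than exact) guarantee is precisely the price paid for this aggregation: a process that misses the other half's counters undercounts, but never below the final number of active processes, and the inductive summing never overcounts above the initial number. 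To repair your proposal you would need to replace ``learn all tagged bits, then count'' with this kind of in-network aggregation.
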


\Paragraph{High-level idea of} \hspace*{-0.5em}\textsc{FuzzyCounting}{\bf .}
%
%
{\sc FuzzyCounting} uses the \Gossip{} algorithm with the only modification that now we require the algorithm the return the values $\texttt{Z}$ and $\texttt{O}$, instead of the set of learned rumors. We apply the same divide-and-conquer approach. That is, we partition $\cP$ into groups $\cP_{1}$ and $\cP_{2}$ and we solve the problem within processors of this partition.  Let $\texttt{Z}_{1}$, $\texttt{O}_{1}$ and $\texttt{Z}_{2}$, $\texttt{O}_{2}$ be the values returned by recursive calls on set of processes $\cP_{1}$ and $\cP_{2}$, respectively. Then, we use the \BipartiteGossip{} algorithm, described in Section~\ref{sec:bipartite-gossip}, to make each process learn values $\texttt{Z}$ and $\texttt{O}$ of the other group. Eventually, a process returns a pair of values $\texttt{Z}_{1} + \texttt{Z}_{2}$ and $\texttt{O}_{1} + \texttt{O}_{2}$ if it received the values from the other partition during the execution of \BipartiteGossip, or it returns the values corresponding to the recursive call in its own partition otherwise. 
It is easy to observe that during this modified execution processes must carry messages that are able to encode values $\texttt{Z}$ and $\texttt{O}$, thus in this have it holds that $|\cR| = O(\log{n})$.

\remove{
Consider the following modification of Gossip problem. There is given a function $f : 2^{\cP} \rightarrow \{0,1\}^{D}$ to all processes, such that for any two disjoint $A, B \subseteq \cP$ we have that $f(A) + f(B) = f(A \cup B)$ (\textit{additive property}). The processes want to calculate the output of this function on any set $A \subset \cP$ such that every process from $A$ is initially a non-faulty process and every process that returns an output must belong to $A$. For example, setting $f(A) := |A|$, for every $A \subseteq \cP$, in the aforementioned definition corresponds to solving Fuzzy Counting.

To solve this problem, we use the \textsc{CheapGossip} algorithm with the only modification that now we require the algorithm the return the value $f{A}$ for any valid $A$, rather than the set of learned rumors. If the function has the additive property, we can use again \textsc{CheapBipartiteGossip} to make each process learn about the value $f(B)$, which is the output computed by the other recursive call of the \textsc{CheapGossip} algorithm. Using the additive property of the function $f$ it is easy to see the the value $f(A) + f(B) = f(A \cup B)$ is a proper output to the original call. 
Again, because of the additive property, processs can implement the above rules using only $O(d)$ bits per message. The following modification applied to the function $f(A) : = |A|$ gives us a solution to Fuzzy Counting.
}

\subsection{Bipartite Gossip}
\label{sec:bipartite-gossip}

Our \Gossip{} and {\sc FuzzyCounting} algorithms use subroutine {\sc BipartiteGossip} that solves the following (newly introduced) problem.

\begin{definition}
Assume that there are only two different rumors present in the system,
each in at most $\lceil n/2 \rceil$ processes. The partition is known to each process, but the rumor in the other part is not.
We say that an algorithm solves Bipartite Gossip if every non-faulty process learns all rumors of other non-faulty processes in the considered setting.
\end{definition}

Bipartite Gossip is a restricted version of the general {\em Fault-tolerant Gossip} problem, which can be solved in $O(\log^3 n)$ rounds using $O(\log^4 n)$ point-to-point messages amortized per process, but requires $\Omega(n)$ amortized communication bits. In this paper, we give a new efficient deterministic solution to Bipartite Gossip, called {\sc BipartiteGossip}, which, properly utilized, leads to better solutions to Fault-tolerant Gossip and Fuzzy Counting.
More details and the proof of the following result are given in  Section~\ref{sec:bipartite-long}.

\begin{theorem}[Section~\ref{sec:bipartite-long}]
\label{thm:cheap-bi-gossip}
Given a partition of the set of processes $\cP$ 
into two groups $\cP_{1}$ and $\cP_{2}$ of size at most $\ceil{n / 2}$ each, 
deterministic algorithm {\sc BipartiteGossip} solves the Bipartite Gossip problem in $\logO(1)$ rounds and uses $\logO(n \cdot |\cR|)$ bits, where 
$|\cR|$ is the minimal number of bits needed to uniquely encode the two rumors.
\end{theorem}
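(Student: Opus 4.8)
The plan is to build \BipartiteGossip{} around a \emph{family} of bipartite overlay graphs $G_0, G_1, \ldots, G_L$ with $L=\cO(\log n)$, each joining $\cP_1$ to $\cP_2$, where the degree $\delta_\ell$ of $G_\ell$ grows geometrically with the level $\ell$ while the set size at which $G_\ell$ stays compact shrinks correspondingly. Each process maintains the set of rumors it has learned, initialized to the single rumor of its own part, and starts at level $0$. In every round it sends its current set along the edges of its current graph $G_\ell$ and merges everything it receives; since only two distinct rumors are ever present, a message carries at most $\cO(|\cR|)$ payload bits plus $\cO(\log n)$ control bits. A process obeys the \emph{local signaling} rule: while it still hears from at least a fixed fraction of its $\delta_\ell$ neighbors it remains at level $\ell$, but as soon as that count falls below threshold it declares its neighborhood compromised and \emph{escalates} to the denser graph $G_{\ell+1}$. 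I run each level for $\Theta(\log n)$ rounds, long enough for rumors to cross the small-diameter core before escalation is committed.

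For correctness I would show both rumors reach every non-faulty process. Fix a level $\ell$ and let $A_\ell$ be the set of processes still alive and active there. By the $(\alpha_\ell,\beta,\delta_\ell)$-compactness of $G_\ell$ — the same style of property used for $\cH$ in the overview, c.f.\ Lemma~\ref{lemma:survival-small-diameter} — whenever $|A_\ell|\ge\alpha_\ell$ one extracts a $\beta$-fraction \emph{core} $C_\ell\subseteq A_\ell$ inducing a subgraph of minimum degree $\ge\delta_\ell$ and, by expansion, of diameter $\cO(\log n)$. The members of $C_\ell$ keep enough live neighbors to pass local signaling and so never escalate; and because the overlay is bipartite and $C_\ell$ is connected across both parts, the $\Theta(\log n)$ rounds at level $\ell$ suffice to spread both part-rumors throughout $C_\ell$. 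Every process that does not land in a core is pushed up one level, and at the top level $G_L$ the graph is essentially complete on the few remaining processes, so any survivor there connects trivially. Hence each non-faulty process eventually lies in, or adjacent to, a core holding both rumors and therefore learns them.

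The round count is immediate: $\cO(\log n)$ levels, each of $\Theta(\log n)$ rounds, total $\logO(1)$. The step I expect to be the main obstacle is the \textbf{communication bound} $\logO(n\cdot|\cR|)$, since escalation to ever-denser graphs could in principle explode the bit count. The resolution is an amortization keyed to compactness: because $C_\ell$ has size $\ge\beta|A_\ell|$ and its members do not escalate, the active set contracts geometrically, $|A_{\ell+1}|\le(1-\beta)|A_\ell|\le(1-\beta)^{\ell+1}n$. Tuning the family so that $\delta_\ell$ grows like $(1-\beta)^{-\ell}$ — the standard density/compactness trade-off for such overlays — forces the product $|A_\ell|\cdot\delta_\ell=\cO(n)$ at every level, so the bits spent at level $\ell$ are $\cO(|A_\ell|\cdot\delta_\ell\cdot\log n\cdot|\cR|)=\cO(n\log n\cdot|\cR|)$; summing over the $\cO(\log n)$ levels gives $\logO(n\cdot|\cR|)$. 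The delicate point I must nail down is that the purely \emph{local} signaling test faithfully realizes the global core/non-core split: a core member's $\ge\delta_\ell$ neighbors lie in $A_\ell$ and are therefore themselves alive and responsive, so it passes the test and stays, whereas only the $\le(1-\beta)$-fraction outside every core escalates. Making this precise against an adversary that may selectively deliver the messages of just-crashed processes is where the bulk of the work lies — I would settle it by charging each forced escalation against the adversary's crash budget, arguing that driving a process below threshold at level $\ell$ costs $\Omega(\delta_\ell)$ crashes among its neighbors — and it is exactly what certifies that the contraction $|A_{\ell+1}|\le(1-\beta)|A_\ell|$, and hence the communication bound, genuinely holds.
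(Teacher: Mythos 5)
Your escalation mechanism is keyed to the wrong neighborhoods, and this breaks the communication bound. You use a single family of \emph{bipartite} overlays between $\cP_1$ and $\cP_2$ and let a process escalate when it stops hearing from enough of its (cross-partition) neighbors. Consider the adversary that crashes all but one process of $\cP_2$ in the first round while leaving $\cP_1$ intact (legal, since $f<n$). Then almost every process of $\cP_1$ loses all of its $\delta_0$ neighbors, fails the signaling test, and escalates; the same happens at every subsequent level, because a min-degree-$\delta_\ell$ core $C_\ell$ inside the set of \emph{alive} processes would have to contain $\cP_2$-nodes of degree at least $\delta_\ell$, and only one alive $\cP_2$-node exists. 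So your contraction invariant $|A_{\ell+1}|\le(1-\beta)|A_\ell|$ --- the linchpin of the bit bound --- fails outright: all $\Theta(n)$ processes of $\cP_1$ reach the top level, where the graph is essentially complete, and the algorithm sends $\Theta(n^2)$ bits. The underlying difficulty is that a purely bipartite overlay cannot distinguish ``the other side was decimated'' (the process must indeed raise its out-degree, but its own group has not shrunk, so the total edge count explodes) from ``my own group was decimated'' (the only case in which a higher degree is affordable).

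The paper resolves exactly this by splitting the overlay into two families: internal expanders $\cG_{\texttt{in}}$ spanned on each group separately, which alone drive the level/escalation decision via \textsc{LocalSignaling}, and external expanders $\cG_{\texttt{out}}$ spanned on $\cP_1\cup\cP_2$, used at the level dictated by the internal test. A process's degree then grows only when its \emph{own} group thins out, which is what makes the counting argument go through (Lemma~\ref{lemma:dense-are-small}: at most $2n/2^{i}$ processes ever reach level $i$). Correctness across the partition then needs only a single edge between the two internal survival sets $C_1$ and $C_2$, obtained from expansion of $\cG_{\texttt{out}}$ at the denser of the two groups' levels, after which the rumor spreads within each group along the small-diameter survival set; delivery to non-core survivors is a further separate step (Lemma~\ref{lemma:non-crashed-has-rumor}) using the dense-neighborhood guarantee of \textsc{LocalSignaling} --- your ``lies in, or adjacent to, a core'' shortcut does not by itself produce an actual message exchange occurring \emph{after} the core has learned the other rumor. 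Your round count, the use of compactness and expansion, and the level-counting amortization are all in the right spirit, but as designed the scheme does not achieve the claimed $\logO(n\cdot|\cR|)$ bits.
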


\Paragraph{{\em High-level idea of}} \hspace*{-0.5em}{\sc BipartiteGossip}{\bf .}
%
If there were no crashes in the system, it would be enough if processes span a bipartite expanding graph with poly-logarithmic degree on the set of vertices $\cP_{1} \cup \cP_{2}$ and 
exchange messages with their initial rumors in $\logO(1)$ rounds. In this ideal scenario the $O(\log{n})$ bound on the expander diameter suffices to allow every two process exchange information, while the sparse nature of the expander graphs contributes to the low communication bit complexity. However, a malicious crash pattern can easily disturb such a naive approach. To overcome this, in our algorithm processes -- rather than communicating exclusively with the other side of the partition -- also estimate the number of crashes in their own group. Based on its result, they are able to adapt the level of expansion of the bipartite graph between the two parts to the actual number of crashes.

The {\em internal} communication within group $P_1$ uses graphs 
from a family of $\Theta(\log{n})$ expanders: $\cG_{\texttt{in}} = \{G_{\texttt{in}}(0), \ldots,  G_{\texttt{in}}(\log{n})\}$, for $t=O(\log{n})$, spanned on the set of processes $\cP_1$ and such that $G_{\texttt{in}}(i) \subseteq  G_{\texttt{in}}(i+1)$, the degree and expansion parameter of the graphs double with the growing index, and the last graph is a clique. They select the next graph in this family every time they observe a significant reduction of non-faulty processes in~their~neighborhood. 
Initially, processes from $\cP_{1}$ span an expander graph $G_{\texttt{in}}(0)$ with $O(\log{n})$ degree on the set~$\cP_{1}$, in the sense that each process in $\cP_1$ identifies its neighbors in the graph spanned on $\cP_1$.
In the course of an execution, each process from $\cP_{1}$ keeps testing the number of non-faulty processes in its $O(\log{n})$ neighborhood in $G_{\texttt{in}}(0)$. If the number 
falls down below some threshold,
the process upgrades the used expanding graph 
by switching to the next graph from the family -- $G_{\texttt{in}}(1)$. 
The process continues testing, and switching graph to the next in the family if necessary, until the end of the algorithm.
The ultimate goal of this 'densification' of the overlay graph is to enable each process' communication with a constant fraction of other alived processes in $\cP_{1}$. Note here that this procedure of adaptive adjustment to failures pattern happens independently at processes in~$\cP_{1}$, therefore it may happen that processes in $\cP_1$ may have neighborhoods taken from different graphs in family $\cG_{\texttt{in}}$.

The {\em external} communication of processes from $\cP_{1}$ with processes from $\cP_{2}$ is 
strictly correlated with their estimation of the number of processes being alive in their $O(\log{n})$ neighborhood in $\cP_{1}$ using expanders in $\cG_{\texttt{in}}$, as described above. Initially, a process from $\cP_{1}$ sends its rumor according to other expander graph 
$G_{\texttt{out}}(0)$
of degree $O(\log{n})$, the first graph in another family of expanders graphs $\cG_{\texttt{out}} = \{G_{\texttt{out}}(0), \ldots,  G_{\texttt{out}}(t)\}$,
for $t=O(\log{n})$, spanned on the whole set of processes $\cP_1 \cup \cP_2$, such that $G_{\texttt{out}}(i) \subseteq  G_{\texttt{out}}(i+1)$, the degree and expansion parameter of the graphs double with the growing index, and the last graph is a clique.
Each time a process chooses a denser graph from family $\cG_{\texttt{in}}$ in the internal group communication, described in the previous two paragraphs, it also switches to a denser graph from family $\cG_{\texttt{out}}$ in the external communication with group $\cP_2$. The intuition is that if a process knows that the number of alive processes in its $O(\log{n})$ neighborhood in $\cP_1$ has been reduced by a constant factor since the last check,
it can afford an increase of its degree in external communication with group $\cP_{2}$ by the same constant factor, as the amortized message complexity should stay the same.

The above dynamic adjustment of internal and external communication degree allows to achieve asymptotically similar result as in the fault-free scenario described in the beginning, up to polylogarithmic factor. More details and the proofs of correctness and performance are in~Section~\ref{sec:bipartite-long}.

\subsection{Local Signalling}

{\sc LocalSignalling} is a specific deterministic algorithm, 
parameterized by a family of overlay graphs provided to the processes.
Processes start at the same time, but may be at different levels -- the level indicates which overlay graph is used for communication.
The name Local Signalling comes from the way it works -- similarly to distributed sparking networks, a process keeps sending messages (i.e., 'signalling')
to its neighbors in its current overlay graph as long as it receives 
enough number of messages from them.
Once a process fails to receive a 
sufficient
number of messages from processes 
that use
the same overlay graph or the previous ones, {\sc LocalSignaling} detects such anomaly and remembers a negative 'not surviving' result (to be returned at the end of the algorithm). Such process does not stop, but rather keeps signaling using less dense overlay graph, in order to help processes at lower level to survive. This non intuitive behavior is crucial in bounding the amortized bit complexity.
The algorithm takes $O(\log n)$ rounds.
Its goal is to leverage the adversary -- if the adversary does not fail many processes starting at a level $\ell$, some fraction of them will survive and exchange messages in $O(\log n)$ time and $O(\polylog n)$ amortized number of communication bits.
%
\remove{
The goal to achieve is to have a large number of non-faulty processes such that each of them has communicated with at least a fraction of processes alive 
\dk{$O(\log n)$ rounds ago.}
(Here ``communicate'' means a sequence of \dk{subsequently sent} messages along some path connecting the nodes.) \dk{It indicates that in the beginning of the procedure there was a compact (i.e., connected and small diameter) component in the overlay graph.} 
Local Signaling equipped with proper overlay graphs is a leverage over the adversary -- 
\dk{we design it the way that the adversary cannot prevent some fraction of alive processes to communicate using a (poly-)logarithmic time and amortized communication bits unless crashing a large fraction of processes.} 
}


More specifically, 
the algorithm run by process $p$ takes as in input: 

\noindent
i) the name of a process $p$ and a set of all processes in the system $\cP$; 

\noindent
ii) an expander-like overlay graph family $\cG = \{G(1), \ldots, G(t)\}$ spanned on $\cP$ such that: $t=O(\log n)$, $G(i) \subseteq  G(i+1)$, the degree and expansion parameters of the graphs double with the growing index, and the last graph is a clique. 
%
Two additional parameters $\delta$ and $\gamma$ describe a diameter and a maximal degree of the base graph $G(1)$, resp. See Section~\ref{sub:overlay}; 

\noindent
iii) the process' (starting) level $\ell_p$, which denotes the index of the graph from family $\cG$ the process currently uses to communicate; and 

\noindent
iv) the message to convey, $r$. 

\noindent
For a given round, let $\cT$ denote a communication graph 
$\cup_{p \in \cP} N_{G(\ell_{p})}(p)$,
that is, a graph with the set of vertices corresponding to $\cP$ and the set of edges determined based on neighbors of each vertex/process $p\in\cP$ from the graph $G(\ell_{p})$ corresponding to the current level $\ell_p$ of process $p$.
Processes exchange messages along this graph, and those who discover that the number of their alived neighbors with the same or higher level $\ell$ is below some threshold, decrease their level by $1$ (i.e., switch their overlay graphs to the previous one in the family). Those who do it at least once during the execution of {\sc LocalSignalling}, which takes $O(\log n)$ rounds, have 'not survived Local Signalling', others 'have survived'.

We will show that if all processes start \textsc{LocalSignaling} at the same time,
those who have survived Local Signalling must have had large-size $O(\log n)$-neighborhoods in graph $\cT$ 
in the beginning of the execution. 
Moreover, they were able to exchange messages with other surviving processes in their $O(\log n)$-neighborhoods, c.f.. Lemma~\ref{lemma:probing}.
We will also prove that the amortized bit complexity 
of the $\textsc{LocalSignaling}$ algorithm is $O(\polylog n)$ per process, c.f., Lemma~\ref{lem:signaling-complexity}. This is the most advanced technical part used in our algorithm -- its full description and detail analysis are given in Section~\ref{sec:local-signaling-short}.

\remove{

\subsection{New subroutines}

\FF
\Paragraph{\em New solution to Fault-tolerant Gossip.} 

In the \emph{Bipartite Gossip} problem, there are only two different rumors present in the system,
each in (roughly) half of the processes, the goal is every process to learn the rumor of every other non-faulty process. It is a restricted version of the general {\em Fault-tolerant Gossip} problem, which can be solved in $O(\log^3 n)$ rounds using $O(n\log^4 n)$ point-to-point messages, but requires $\Omega(n^2)$ communication bits. In this paper, we give a new efficient deterministic Bipartite Gossip algorithm; properly utilized it leads to better solutions to Fault-tolerant Gossip and Fuzzy Counting.

\begin{theorem*}[Section~\ref{sec:goss-fuz-cnt-short}]
There is a  deterministic algorithm that, given a partition of the set of processes $\cP$ 
into two groups $\cP_{1}$ and $\cP_{2}$ of size at most $\ceil{n / 2}$ each, solves Bipartite Gossip problem in $\logO(1)$ rounds and uses $\logO(n \cdot |\cM|)$ bits, where $\cM$ is the set of two input rumors given to processors and $|\cM|$ is the minimal number of bits needed to uniquely encode set $M$.
\end{theorem*}

In the general Fault-tolerant Gossip problem, each process is given an input rumor. The goal is every process to learn the rumor of every other \textit{non-faulty} process. Many solution to this problem have been known, in particular algorithms that solves Fault-tolerant Gossip in $O(\log^3 n)$ rounds using $O(n\log^4 n)$ point-to-point messages, but requires $\Omega(n^2)$ communication bits, assumed $KT$-$1$ model, were given in~\cite{CK,AlistarhGGZ10}. Our new deterministic solution improves over these results.

\begin{theorem*}[Section~\ref{sec:goss-fuz-cnt-short}]
There is a deterministic algorithm that known the set of processors $\cP$ of size $n := |\cP|$ 
solves Fault-tolerant Gossip problem in $\logO(1)$ rounds and uses $\logO(n \cdot |\cM|)$ bits for any number of faults $f < n$, where $\cM$ is the set of all input messages given to processors and $|\cM|$ is the minimal number of bits needed to uniquely encode set $M$. 
\end{theorem*}

\FF
\Paragraph{\em Solution to Fuzzy Counting.}

-------------------------

\ParamConsensus uses a fast solution to an $\alpha$-Biased Consensus as a subroutine. The latter version of Consensus additionally requires that value $0$ must be agreed on if less than $\alpha n$ processes start with value $1$.
Our solution to this problem also improves amortized communication of 
Bar-Joseph and Ben-Or algorithm \cite{Bar-JosephB98} by a nearly linear factor $\Theta(n/\log^{13/2} n)$,
while being only $O(\log^3 n)$ away from the absolute lower bound on time complexity (also proved in \cite{Bar-JosephB98}).

\remove{
Fuzzy Counting algorithm could replace explicit counting in the randomized consensus algorithm of Bar-Joseph and Ben-Or \cite{Bar-JosephB98}, and thus improve its  communication complexity by a nearly linear factor $\Theta(n/\log^{13/2} n)$,
while being only $O(\log^3 n)$ away from the absolute lower bound on the round number (also proved in \cite{Bar-JosephB98}).
}

\begin{theorem}[Section~\ref{sec:rand-consensus}]\label{thm:biased-cons}
$\alpha$-\textsc{BiasedConsensus} solves $\alpha$-Biased Consensus problem with probability~$1$, in $O(\sqrt{n} \log^{\frac{5}{2}}{n})$ 
time and $O(\sqrt{n} \log^{\frac{13}{2}}{n})$ 
amortized bit communication complexity, whp.
\end{theorem}

The abovementioned improvement of algorithm $\alpha$-\textsc{BiasedConsensus} over \cite{Bar-JosephB98} is possible because of designing and employing an efficient solution to a newly introduced Fuzzy Counting problem. In this problem, each process has to return a number between the initial and the final number of active processes. Here, being active depends on the goal of the counting, e.g., all non-faulty processes, processes with initial value $1$, etc. Note that the returned numbers could be different across processes.

\begin{theorem}[Section~\ref{sec:goss-fuz-cnt-short}]\label{thm:fuzzy-counting}
The \textsc{FuzzyCounting} deterministic algorithm solves the \Fuzzy Counting problem in $O(\log^{3}{n})$ rounds, using $O(\log^{7}{n})$ communication bits amortized per process.
\end{theorem}

Both the \ParamConsensus and \textsc{FuzzyCounting} algorithms rely on a new (deterministic) solution to a well-known Fault-Tolerant Gossip problem, in which 

\begin{theorem}[Section~\ref{sec:goss-fuz-cnt-short}]
\label{thm:cheap2-gossip}
\textsc{CheapGossip} solves Fault-tolerant Gossip in $O(\log^3 n)$ rounds, with $O(\log^5 n \cdot |\cR|)$~amor\-tized number of communication bits, where $|\cR|$ is the number of bits needed to encode the~rumors.
\end{theorem}

\remove{
We develop robust algorithms for Bipartite Gossip, Fuzzy Counting and Consensus, which are 
efficient
for {\em both} time (the number of rounds) and  communication (the total number of bits sent) complexities.
Our main technical tool is a Local Signaling procedure, which is designed, instantiated by specific families of overlay graphs (c.f., Section~\ref{sub:overlay}), and applied to other algorithms 
to simultaneously lower time and communication under crashes.
Based on it, we provide the first {\em efficient deterministic solution} to the Bipartite Gossip problem.

\begin{theorem}[In Section~\ref{sec:goss-fuz-cnt-short}]
\label{thm:cheap2-gossip}
\textsc{CheapGossip} solves Fault-tolerant Gossip in $O(\log^3 n)$ rounds, with $O(\log^5 n \cdot |\cR|)$~amor\-tized number of communication bits, where $|\cR|$ is the number of bits needed to encode the~rumors.
\end{theorem}

A modified version of the $\textsc{CheapGossing}$ results in an efficient solution to the newly introduced  Fuzzy Counting problem.

\vspace*{-1.15ex}
\begin{theorem}[In Section~\ref{sec:goss-fuz-cnt-short}]\label{thm:fuzzy-counting}
The \textsc{FuzzyCounting} algorithm solves the \Fuzzy Counting problem in $O(\log^{3}{n})$ rounds, using $O(n\cdot\log^{7}{n})$ communication bits.
\end{theorem}

}

}

\section{Previous and related work}

\FFF
\Paragraph{Early work on consensus.}
The \emph{Consensus} problem was introduced by Pease, Shostak and Lamport~\cite{PSL}.
Early work focused on {\em deterministic} solutions.
Fisher, Lynch and Paterson~\cite{FLP} showed that the problem is unsolvable in an asynchronous setting, even 
if one process may fail.
Fisher and Lynch~\cite{FL} showed that a synchronous solution requires $f+1$ rounds if up to~$f$~processes~may~crash.


The optimal complexity of consensus with crashes is known with respect to the time and the number of messages (or communication bits) when each among these performance metrics is considered separately. 
Amdur, Weber and Hadzilacos~\cite{AWH} showed that 
the amortized number of messages
per process is at least constant,
even in some failure-free execution.
\remove{%
Dwork, Halpern and Waarts~\cite{DHW} found a solution with $\cO(n\log n)$ messages but requiring an exponential time, and later
Galil, Mayer and Yung~\cite{GMY} developed an algorithm with $\cO(n)$ messages, thus showing that this amount of messages is optimal.
The drawback of the latter solution is that it runs in an over-linear $\cO(n^{1+\varepsilon})$ time, for any $0<\varepsilon<1$.
They also improved the communication to $O(n)$ bits, but the resulting algorithm was exponential in the number of rounds.
Chlebus and Kowalski~\cite{CK} showed that consensus can be
solved in  $\cO(f+1)$ time and with $\cO(n\log^2 f)$ messages
if only the number $n-f$ of non-faulty processors satisfies $n-f=\Omega(n)$.
It was later improved in~\cite{CK-PODC09} to $O(f)$ time and $O(n\; \polylog n)$ number of communication bits.
To summarize, when the number of crashes $f$ could be 
close to
$n$, 
}%
The best deterministic algorithm, given by Chlebus, Kowalski and Strojnowski in \cite{CK-PODC09}, solves consensus in asymptotically optimal time
$\Theta(n)$ and an amortized number of communication bits per process $O(\polylog n)$.

\Paragraph{Efficient randomized solutions against {\em weak adversaries}.}

Randomness proved itself useful to break a linear time barrier for time complexity. 
However, whenever randomness is considered, different types of an adversary generating failures could be considered. 
Chor, Merritt and Shmoys~\cite{CMS} developed constant-time algorithms for consensus against an {\em oblivious adversary} -- i.e., the adversary who knows the algorithm but has to decide which process fails and when before the execution starts.
Gilbert and Kowalski~\cite{GK-SODA-10} presented a randomized consensus algorithm that achieves optimal communication complexity, using $ \mathcal{O}(1) $ amortized communication bits per process 
and terminates in 
$ \mathcal{O}(\log n) $ time with high probability, tolerating up to $ f < n/2 $ crash failures. 

\Paragraph{Randomized solutions against (strong) {\em adaptive adversary}.}
Consensus against an adaptive adversary, considered in this paper, has been already known as more expensive than against
weaker adversaries.
The time-optimal randomized solution to the consensus problem was given by Bar-Joseph and Ben-Or~\cite{Bar-JosephB98}. Their algorithm works in $O(\frac{\sqrt{n}}{\log{n}})$ expected time and uses $O(\frac{n^{3/2}}{\log{n}})$ amortized communications bits per process, 
in expectation. 
They also proved 
optimality of their result with respect to the time complexity, while 
here we substantially improve the communication.

\Paragraph{Beyond synchronous crashes.}
It was shown that more severe failures or asynchrony 
could cause a substantially higher complexity.
Dolev and Reischuk~\cite{DR} and Hadzilacos and Halpern~\cite{HH} proved the $\Omega(f)$ lower bound on the amortized message complexity per process 
of deterministic consensus for {\em (authenticated) Byzantine failures}. 
King and Saia~\cite{KingS11} proved that under some limitation on the adversary and requiring termination only whp, the sublinear expected communication complexity $O(n^{1/2}\polylog{n})$ per process can be achieved even in case of Byzantine failures. 
Abraham et al.~\cite{AbrahamCDNP0S19} showed necessity of such limitations 
to achieve
subquadratic time complexity for Byzantine failures.

If {\em asynchrony} occurs, the recent result of Alistarh
et al.~\cite{AlistarhAKS18} showed how to obtain almost optimal communication complexity $O(n\log{n})$ per process (amortized) if less then $n/2$ processes may fail, which improved upon the previous result $O(n\log^2 {n})$ by Aspnes and Waarts~\cite{AspnesW-SICOMP-96} and is asymptotically almost optimal due to the lower bound $\Omega(n/\log^2 n)$ by Aspnes~\cite{Aspnes-JACM-98}.

\Paragraph{Fault-tolerant Gossip} was introduced by Chlebus and Kowalski~\cite{CK}.
They developed a deterministic algorithm solving 
Gossip
in time $\cO(\log^2 f)$ while using $\cO(\log^2 f)$ amortized messages per process, provided $n-f=\Omega(n)$.
They also showed a lower bound  $\Omega(\frac{\log n}{\log(n\log n)-\log f})$ on the number of rounds
in case $\cO(\polylog n)$ amortized messages are used per process.
In a sequence of papers~\cite{CK,GKS,CK-DISC06},
$O(\polylog n)$ message complexity, amortized per process, was obtained for any $f<n$, while keeping the polylogarithmic time complexity.
Note however that general Gossip requires $\Omega(n)$ communication bits per process for different rumors, as each process needs to deliver/receive at least one bit to all non-faulty processes.
%
Randomized gossip 
against an adaptive adversary is doable
w.h.p. in $O(\log^{2}{n})$ rounds 
using $O(\log^{3}{n})$ communication bits per process, for a constant number of rumors of constant size and for $f < \frac{n}{3}$ processes, c.f.,
Alistarh et al.~\cite{AlistarhGGZ10}.
%

\remove{A graph is said to be \emph{$\ell$-expanding}, or to be an \emph{$\ell$-expander}, if any two subsets of~$\ell$ nodes each are connected by an edge; this notion was first introduced by Pippenger~\cite{Pip}.
	There exist $\ell$-expanders  of the maximum degree $\cO(\frac{n}{\ell}\log n)$, as can be proved by the probabilistic method; such an argument was first used by Pinsker~\cite{Pin}. 
}

\BB
\section{Model and Preliminaries}
\label{s:model}

\B
In this section we discuss the message-passing model in which all our algorithms are developed and analyzed. It is the classic synchronous message-passing model with processes' crashes, c.f.,~\cite{AW,Bar-JosephB98}.

\Paragraph{Processes.}
There are $n$~synchronous processes, with synchronized clocks.
Let $\cP$ denote the set of all processes.
Each process has a unique integer ID in the set $\cP=[n]=\{1,\ldots,n\}$.
The set $\cP$ and its size~$n$ are known to all the processes (in the sense that it may be a part of code of an algorithm); it is also called a {\em KT-1} model in the literature~\cite{Peleg}.

\Paragraph{Communication.}
The processes communicate among themselves by sending messages.
Any pair of processes can directly  exchange messages in a round.
The point-to-point
communication mechanism is assumed to be reliable, in that messages are not lost nor corrupted while in transit.

\Paragraph{Computation in rounds.}
A computation, or an execution of a given algorithm, proceeds in consecutive rounds, synchronized among processes.
By a  \textit{round} we mean such a number of clock cycles that is sufficient to guarantee the completion of the following operations by a process: first, {\em multicasting a message} to an arbitrary set of processes (selected by the process during the preceding local computation in previous round or stored in the starting conditions); second, 
{\em receiving} the sent messages by their (non-faulty) destination processes;
third, performing {\em local computations}.

\Paragraph{Processes' failures and adversaries.}
Processes may fail by crashing. 
A process that has {\em crashed} stops any activity, and in particular does not send nor receive messages.
There is an upper bound $f<n$ on the number of crash failures we want to be able to cope with, which is known to all processes in that it can be a part of code of an algorithm.
We may visualize crashes as incurred by an omniscient {\em adversary} that knows the algorithm and has an unbounded computational power; the adversary decides which processes fail and when. The adversary knows the algorithm and is {\em adaptive} -- if it wants to
make a decision in a round, it knows the history of computation until that point.
However, the adversary does not know the future computation, 
which means that it does not know future random bits drawn by processes.
We do not assume failures to be clean, in the sense that when a process crashes while attempting to multicast a message, then some of the recipients may receive the message and some may not; this aspect is controlled by the adversary.
An {\em adversarial strategy} is a deterministic function, which assigns to each possible history that may occur in any execution some adversarial action in the subsequent round -- i.e., which processes to crash in that round and which of their last messages would reach the recipients.

\Paragraph{Performance measures.}

We consider time and bit communication complexities as performance measures of algorithms.
For an execution of a given algorithm against an adversarial strategy, we define its time and communication as follows.
\emph{Time} is measured by the number of rounds that occur by termination of the last non-faulty process. \emph{Communication} is measured by the total number of bits 
sent in point-to-point messages by termination of the last non-faulty process.
For randomized algorithms, both these complexities are random variables.
Time/Communication complexity of a distributed algorithm is defined as a supremum of time/communication taken over all adversarial strategies, resp.
Finally, time/communication complexity of a distributed problem is an infimum of all algorithms' time/communication complexities, resp.
In this work we present communication complexity in a form of an {\em amortized communication complexity} (per process), which is equal to the communication complexity divided by the number of processes $n$.


\Paragraph{Notation whp.}

We say that a random event occurs \emph{with high probability}, 
or
{\em whp}, if its probability can be lower bounded by $1-\cO(n^{-c})$ for 
a sufficiently large
positive constant $c$.
Observe that when a 
polynomial 
number of events occur whp
each, then 
their union occurs
with high probability~as~well.

\subsection{Overlay Graphs}
\label{sub:overlay}


We review the relevant notation and main theorems assuring existence of specific fault-tolerant compact expanders from~\cite{CK-PODC09}. We will use them as overlay graphs 
in the paper, to specify via which links the processors should send messages
in order to maintain small time and communication~complexities. Some properties of these 
graphs have already been observed in~\cite{CK-PODC09}, however we also prove a new property (Lemma~\ref{lemma:survival-small-diameter}) and use it for analysis of a novel Local Signalling procedure~(Section~\ref{sec:local-signaling-short}).

\Paragraph{Notation.}
Let $G=(V,E)$ denote an undirected graph.
Let $W\subseteq V$ be a set of nodes of~$G$.
We say that an edge $(v,w)$ of $G$ is \emph{internal for~$W$} if $v$ and~$w$ are both in~$W$.
We say that an edge $(v,w)$ of $G$ \emph{connects the sets~$W_1$ and $W_2$}
or \emph{is between $W_1$ and $W_2$}, for any disjoint subsets $W_1$ and~$W_2$ of~$V$, if one of its ends is in~$W_1$ and the other in~$W_2$.
The \emph{subgraph of~$G$ induced by~$W$}, denoted~$G|_W$, is the subgraph of~$G$ containing the nodes in~$W$ and all the edges internal for~$W$.
A node adjacent to a node~$v$ is a \emph{neighbor of~$v$} and the set of all the neighbors of a node~$v$ is the \emph{neighborhood of~$v$}.
$N^i_G(W)$ 
denotes
the set of all the nodes in~$V$ that are of distance at most~$i$ from some node in~$W$ 
in graph~$G$.
In particular, the (direct) neighborhood of~$v$ is denoted~$N_G(v)=N^1_G(v)$.

\Paragraph{Desired properties of overlay graphs.}
Let $\alpha$, $\beta$, $\delta$, $\gamma$ and~$\ell$ be positive integers and $0< \varepsilon <1$ be a real number.
The following definition extends the notion of a lower bound on a node degree:

\BB
\begin{description}
\item[\em Dense neighborhood:]
For a node $v\in V$, a set $S\subseteq N^{\gamma}_G(v)$  is said to be \emph{$(\gamma,\delta)$-dense-neighborhood for~$v$} if each node in~$S\cap N^{\gamma-1}_G(v)$ has at least $\delta$ neighbors in~$S$.
\end{description}
%

\BB
\noindent
We want our overlay graphs to have the following properties, for suitable parameters $\alpha$, $\beta$, $\delta$~and~$\ell$:

\BB
\begin{description}

\B
\item[\em Expansion:]
graph~$G$ is said to be  \emph{$\ell$-expanding}, or to be an \emph{$\ell$-expander}, if any two subsets of~$\ell$ nodes each are connected by an edge.
\B
\item[\em Edge-density:]
graph~$G$ is said to be \emph{$(\ell,\alpha,\beta)$-edge-dense} if, for any set $X\subseteq V$ of \emph{at least} $\ell$ nodes, there are at least $\alpha |X|$ edges internal for~$X$, and for any set $Y\subseteq V$ of \emph{at most} $\ell$ nodes, there are at most $\beta |Y|$ edges internal for~$Y$.
\B			
\item[\em Compactness:]
graph~$G$ is said to be \emph{$(\ell,\varepsilon,\delta)$-compact} if, for any set $B\subseteq V$ of at least $\ell$ nodes, there is a subset $C \subseteq B$ of at least $\varepsilon\ell$ nodes such that 
each node's degree in~$G|_C$ is at least~$\delta$. 
We call any such set~$C$ a \emph{survival set for~$B$}.
\end{description}

\BB
\Paragraph{Existence of overlay graphs.}

Let $\delta, \gamma, k$ be integers such that $\delta = 24\log n$,  $\gamma = 2\log n$ and $25\delta \le k \le \frac{2n}{3}$. 
Let $G(n,p)$ be an Erdős–Rényi random graph
of $n$ nodes, in which each pair of nodes is connected by an edge with probability~$p$, independently over all such pairs.

\begin{theorem}[\cite{CK-PODC09}]
	\label{theorem:overlay-graphs-exist}
	For every $n$ and $k$ such that $25\delta \le k \le \frac{2n}{3}$,
	a random graph 
	$G(n,24\delta/k)$ 
	satisfies all the below properties whp:
	
	\emph{(i)} it is $(k/64)$-expanding, 
\hspace*{4.8em}
	\emph{(iii)} it is $(k,3/4,\delta)$-compact, 

	\emph{(ii)} it is $(k/64,\delta/8,\delta/4)$-edge-dense,
\hspace*{0.5em}
	\emph{(iv)} the degree of each node is between $22\frac{n}{k}\delta$ and $26\frac{n}{k}\delta$.
\end{theorem}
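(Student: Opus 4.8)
The plan is to establish all four properties by the probabilistic method: for each I bound the probability that $G(n,p)$ with $p=24\delta/k$ violates it by some $n^{-\Omega(1)}$, and then take a union bound over the four (constantly many) properties. In every case the relevant random quantity is a Binomial — the degree of a node, the number of edges internal to a subset, or the number of edges crossing between two subsets — and the workhorse is a Chernoff bound on that Binomial combined with a union bound over the appropriate family of subsets. A recurring feature is that the binding instance is always the \emph{smallest} admissible subset, where the Chernoff exponent is weakest relative to the binomial-coefficient entropy; the constants ($\delta=24\log n$ and the thresholds $k/64,\delta/8,\delta/4$) are calibrated so that the exponent still dominates there.

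Property (iv) is the warm-up. Each degree is $\mathrm{Bin}(n-1,p)$ with mean $(n-1)p\approx 24\tfrac{n}{k}\delta$, which is $\Omega(\log n)$ since $k\le \tfrac{2n}{3}$ forces $\tfrac{n}{k}\ge\tfrac32$. Landing in the window $[22\tfrac{n}{k}\delta,\,26\tfrac{n}{k}\delta]$ is a constant relative deviation, so a two-sided Chernoff bound gives failure probability $e^{-\Omega(\frac{n}{k}\delta)}=n^{-\Omega(1)}$ (the hidden constant made large by the constant $24$ in $\delta$), and a union bound over the $n$ nodes finishes it. For property (i), I fix two disjoint sets of size $s=k/64$; the crossing edges form $\mathrm{Bin}(s^2,p)$ with mean $ps^2=\tfrac{24\delta}{64^2}k=\Omega(\delta^2)$ using $k\ge 25\delta$, so the probability of \emph{no} crossing edge is $(1-p)^{s^2}\le e^{-ps^2}$, and union bounding over the $\le e^{2s\ln(en/s)}$ pairs succeeds because $\tfrac{24\delta}{64}\gg 2\ln(en/s)$. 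For property (ii), I fix a set of size $x$ and control its internal edges $\mathrm{Bin}(\binom{x}{2},p)$ on both sides: the lower bound $\ge\tfrac{\delta}{8}x$ is needed only for $x\ge k/64$, where the mean $\tfrac{12\delta x^2}{k}\ge\tfrac{3\delta x}{16}$ sits a constant factor above the target, so a lower-tail Chernoff applies; the upper bound $\le\tfrac{\delta}{4}x$ is needed only for $x\le k/64$, where for $x\le\delta/2$ it holds automatically (the target exceeds $\binom{x}{2}$) and otherwise the target is a constant factor above the mean, so an upper-tail Chernoff applies. In each case the exponent scales like the mean $\Theta(\delta x^2/k)$, which must beat the entropy $O(x\ln(n/x))$ precisely at the binding size $x=k/64$.

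Property (iii), compactness, is the hard part and the one I would treat most carefully. First I would reduce to $|B|=k$ exactly, using monotonicity of the $\delta$-core under inclusion: the core of a subset is contained in the core of any superset, so if every size-$k$ set has a $\delta$-core of size $\ge\tfrac34 k$, so does every larger set. Fixing $|B|=k$, the survival set is exactly the $\delta$-core, obtained by repeatedly deleting a node of current degree $<\delta$; I must show at most $\tfrac14 k$ nodes are deleted. If more were, then listing the first $t=\lceil k/4\rceil$ deletions in peeling order $v_1,\dots,v_t$, each $v_i$ has fewer than $\delta$ neighbors in $B\setminus\{v_1,\dots,v_{i-1}\}$, a set of size $\ge\tfrac{3k}{4}$ and hence with expected neighbor count $\ge 18\delta$. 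The crucial observation — and the real obstacle — is that if edges are revealed \emph{in peeling order}, querying for each $v_i$ only its edges to $B\setminus\{v_1,\dots,v_i\}$, then these $t$ queried edge-sets are pairwise disjoint, so the $t$ low-degree events are \emph{independent}; each has probability $e^{-\Omega(\delta)}$ by a lower-tail Chernoff bound on a Binomial whose mean exceeds $18\delta$. Union-bounding over the choice of $B$, of the $t$-subset, and of its ordering costs $\binom{n}{k}\,k^{t}\le e^{O(k\log n)}$, which is dominated by the independence gain $e^{-\Omega(\delta t)}$, since $\delta t=\Theta(k\log n)$ carries a large constant. I expect this independence-by-ordered-revelation step to be the linchpin: without it the shared edges among deleted nodes break the product bound, and one is pushed into the weaker (and here insufficient) route of extracting a low-edge-density subset and contradicting property (ii).
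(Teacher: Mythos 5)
The paper offers no proof of this statement: Theorem~\ref{theorem:overlay-graphs-exist} is quoted verbatim from~\cite{CK-PODC09} and used as a black box, so there is no in-paper argument to compare yours against. Your proposal follows the standard---and essentially the only---route for such statements, which is also the route of the cited source: every property is a conjunction of events about Binomial edge counts, dispatched by a Chernoff bound plus a union bound over the relevant family of sets, with the binding case at the smallest admissible size. Your handling of compactness, the only genuinely delicate property, is correct: reduce to $|B|=k$ by monotonicity of the $\delta$-core, union-bound over ordered $t$-tuples of peeled vertices, and observe that revealing for each $v_i$ only its edges into $B\setminus\{v_1,\dots,v_i\}$ gives pairwise disjoint edge sets, hence independent low-degree events, each of probability $e^{-\Omega(\delta)}$ against a mean of about $18\delta$; the exponent $\Omega(\delta k)$ then dominates the $\binom{n}{k}\,k^{t}$ cost with room to spare. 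The one spot where you assert more than you verify is the edge-density calibration: at the binding size $|X|=k/64$ the mean internal-edge count is only $\frac{3}{16}\delta|X|$ against the targets $\frac{1}{8}\delta|X|$ and $\frac{1}{4}\delta|X|$, a relative deviation of $1/3$, so the generic Chernoff exponent is of order $\delta|X|/96$, which is within a small constant factor of (and, by a back-of-envelope count, possibly below) the entropy term $|X|\ln(en/|X|)$. That step cannot be waved through with ``the constants are calibrated''; it needs either a sharpened tail estimate or exactly the bookkeeping carried out in~\cite{CK-PODC09}. This caveat concerns constants in an imported theorem, not the soundness of your strategy, which is the right one.
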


We define an {\em overlay graph} $G(n, k, \delta, \gamma)$ as an arbitrary graph of $n$ nodes fulfilling the conditions of Theorem~\ref{theorem:overlay-graphs-exist}.
Graph $G(n, k, \delta, \gamma)$ can be computed locally (i.e., in a single round) and deterministically by each process.
Specifically, by Theorem~\ref{theorem:overlay-graphs-exist}, 
the class of graphs satisfying the four properties (i) - (iv) is large, therefore any deterministic search in the class of $n$-node graphs, applied locally by each process, returns the same overlay graph $G(n, k, \delta, \gamma)$ in all processes.\footnote{%
Recall that each round contributes~$1$ to the time complexity, no matter of the length of {\em local} computation.}

\begin{lemma}[\cite{CK-PODC09}] \label{lemma:sparse-to-dense}		
If graph~$G=(V,E)$ of $n$ nodes is $(k/64,\delta/8,\delta/4)$-edge-dense then any $(\gamma,\delta)$-dense-neighborhood for a node $v\in V$ has at least $k/64$ nodes, for $\gamma\ge 2\lg n$.
\end{lemma}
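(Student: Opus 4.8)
The plan is to argue by contradiction: assume the $(\gamma,\delta)$-dense-neighborhood $S$ for $v$ satisfies $|S| < k/64$, and derive a contradiction by showing that the local degree guarantee forces $S$ to grow exponentially with $\gamma$. The natural way to expose this growth is to slice $S$ by graph distance from $v$. For $0 \le i \le \gamma$ set $S_i = S \cap N^i_G(v)$, so that $\{v\} = S_0 \subseteq S_1 \subseteq \cdots \subseteq S_\gamma = S$, using that $v\in S$ (as is guaranteed by how dense-neighborhoods are instantiated) and that $S \subseteq N^\gamma_G(v)$ by definition. I would then establish a single-step expansion inequality $|S_{i+1}| \ge 2\,|S_i|$ for every $0 \le i < \gamma$ and iterate it.

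For the single-step inequality, fix $i < \gamma$. Every node $u \in S_i$ lies in $S \cap N^{\gamma-1}_G(v)$, so by the dense-neighborhood hypothesis it has at least $\delta$ neighbors in $S$; and since a neighbor of a node at distance $\le i$ from $v$ is at distance $\le i+1$, all these neighbors lie in $S_{i+1}$. Summing $\sum_{u\in S_i}|\{w\in S_{i+1}: (u,w)\in E\}| \ge \delta|S_i|$ and noting that this sum counts each edge internal for $S_{i+1}$ at most twice gives $2 e_{i+1} \ge \delta|S_i|$, where $e_{i+1}$ denotes the number of edges internal for $S_{i+1}$. On the other hand, $|S_{i+1}| \le |S| < k/64$, so the \emph{at most $k/64$} clause of $(k/64,\delta/8,\delta/4)$-edge-density applies and yields $e_{i+1} \le (\delta/4)\,|S_{i+1}|$. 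Chaining the two bounds, $\delta|S_i| \le 2 e_{i+1} \le (\delta/2)\,|S_{i+1}|$, cancels $\delta$ and produces $|S_i| \le \tfrac{1}{2}|S_{i+1}|$, which is the desired doubling.

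Iterating from $S_0=\{v\}$ then gives $|S| = |S_\gamma| \ge 2^\gamma \ge 2^{2\lg n} = n^2$, contradicting $|S|\le n$ for $n\ge 2$; hence $|S| \ge k/64$, as claimed.

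The point that needs the most care, and where I expect the only real subtlety to sit, is ensuring that the edge-density upper bound is legitimately applicable at \emph{every} layer: it is precisely the contradiction hypothesis $|S|<k/64$, inherited by each $S_{i+1}\subseteq S$, that licenses the ``$\le \beta|Y|$ for $|Y|\le\ell$'' clause, so the entire induction stays inside the regime where that clause holds. The secondary point is the base of the layering — that $v$ itself belongs to $S$, so that $S_0$ is nonempty and the doubling has something to amplify (without this the statement fails vacuously for $S=\emptyset$); this is supplied by the context in which dense-neighborhoods arise, where a process always counts itself. The counting and the combination of the two inequalities are routine; the conceptual content is simply that a local $\delta$-degree guarantee, set against the global sparsity enforced by edge-density, can only be reconciled if $S$ spreads across more than $\lg n$ distance layers, which an $n$-node graph cannot sustain once $\gamma \ge 2\lg n$.
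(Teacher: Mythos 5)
The paper imports this lemma from~\cite{CK-PODC09} and gives no proof of its own, so there is nothing to compare against; your argument is correct and is the standard one for this statement. The layering $S_i = S\cap N^i_G(v)$, the per-layer tension between the degree lower bound $\delta|S_i|\le 2e_{i+1}$ and the edge-density upper bound $e_{i+1}\le(\delta/4)|S_{i+1}|$ (legitimately applicable since $|S_{i+1}|\le|S|<k/64$ under the contradiction hypothesis), and the resulting doubling up to $2^{2\lg n}=n^2>n$ are all sound; you also correctly flag the one genuine caveat, namely that the statement is vacuous unless $S$ is anchored near $v$ (e.g.\ $v\in S$), which is how dense-neighborhoods are instantiated wherever the lemma is invoked.
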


\Paragraph{The new property.}

The key new property of overlay graphs with good expansion, edge-\-density and compactness is that survival sets 
in such graphs~have~small~diameters.
\BB
\begin{lemma}
\label{lemma:survival-small-diameter}	
If graph~$G=(V,E)$ of $n$ nodes is
$(\frac{k}{64})$-expanding, 
$(\frac{k}{64},\frac{\delta}{8},\frac{\delta}{4})$-edge-dense
and $(k,\frac{3}{4},\delta)$-compact, 
then for any set $B\subseteq V$ of at least $k$ nodes and for any two nodes $v,w$ from set $C$ being a survival set of $B$, the nodes $v,w$ are of distance at most $2\gamma + 1$ in graph $G_{|C}$, for any $\gamma\ge 2\lg n$. 
\end{lemma}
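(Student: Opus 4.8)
The plan is to reduce the diameter bound to two facts about a single node's reachable set computed \emph{inside} $G_{|C}$: that the radius-$\gamma$ ball around any node of $C$, grown using only edges internal to $C$, is already large (at least $k/64$ nodes), and that expansion then forces any two such balls to touch through an edge that itself lies in $G_{|C}$. Concretely, fix $v,w\in C$ and let $B_\gamma(v)=N^\gamma_{G_{|C}}(v)$ and $B_\gamma(w)=N^\gamma_{G_{|C}}(w)$. Since $C$ is a survival set for $B$, every node of $C$ has at least $\delta$ neighbors in $G_{|C}$, and both balls are subsets of $C$.

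The first and main step is to show $|B_\gamma(v)|\ge k/64$. I would run a layered (BFS) growth argument inside $G_{|C}$: write $B_i=N^i_{G_{|C}}(v)$ and $s_i=|B_i|$, and observe that every node at $G_{|C}$-distance at most $i-1$ has all of its (at least $\delta$) $G_{|C}$-neighbors inside $B_i$, so the number of edges internal to $B_i$ is at least $\tfrac12\,\delta\,s_{i-1}$. As long as $s_i\le k/64$, the $(k/64,\delta/8,\delta/4)$-edge-density bounds this same quantity from above by $\tfrac{\delta}{4}s_i$, and combining the two inequalities yields $s_i\ge 2s_{i-1}$. Hence the ball doubles at every level until it exceeds $k/64$; since $\gamma\ge 2\lg n$ gives $2^\gamma\ge n^2>n\ge s_\gamma$, the threshold $k/64$ must be crossed by level $\gamma$, i.e.\ $s_\gamma\ge k/64$. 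This is exactly the $G_{|C}$-analogue of Lemma~\ref{lemma:sparse-to-dense}, and I would either invoke that lemma after exhibiting a $(\gamma,\delta)$-dense-neighborhood for $v$ contained in $B_\gamma(v)$, or present the doubling estimate directly.

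The second step is to glue the two balls. Both $B_\gamma(v)$ and $B_\gamma(w)$ are subsets of $C$ of size at least $k/64$. If they intersect, any common node $c$ gives $\mathrm{dist}_{G_{|C}}(v,w)\le \mathrm{dist}_{G_{|C}}(v,c)+\mathrm{dist}_{G_{|C}}(c,w)\le 2\gamma$. Otherwise the $(k/64)$-expansion of $G$ guarantees an edge $(a,b)$ of $G$ with $a\in B_\gamma(v)$ and $b\in B_\gamma(w)$; since $a,b\in C$, this edge is internal to $C$ and therefore present in $G_{|C}$. Concatenating a shortest $G_{|C}$-path from $v$ to $a$ (length $\le\gamma$), the edge $(a,b)$, and a shortest $G_{|C}$-path from $b$ to $w$ (length $\le\gamma$) yields a $G_{|C}$-path of length at most $2\gamma+1$, proving the claim.

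The step I expect to be the main obstacle is the first one, and specifically the mismatch between the two notions of distance. The dense-neighborhood definition measures the ``interior'' by $N^{\gamma-1}_G(v)$ (using $G$-distance), whereas the degree guarantee I control is for nodes in the $G_{|C}$-interior $N^{\gamma-1}_{G_{|C}}(v)$; because $G_{|C}$-distances are at least $G$-distances, a node at $G_{|C}$-distance exactly $\gamma$ can still lie in $N^{\gamma-1}_G(v)$, yet need not have all $\delta$ of its $G_{|C}$-neighbors back inside $B_\gamma(v)$, so $B_\gamma(v)$ itself need not literally satisfy the dense-neighborhood condition. I would sidestep this by relying on the layered doubling estimate above, which only ever uses the degree of nodes \emph{strictly} inside the $G_{|C}$-ball (whose neighbors are automatically captured in the next layer) and thus never appeals to the ambiguous boundary nodes; this keeps the argument self-contained while still matching the edge-density machinery behind Lemma~\ref{lemma:sparse-to-dense}.
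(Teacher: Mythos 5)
Your proof is correct. Note that the paper states Lemma~\ref{lemma:survival-small-diameter} as ``the new property'' but does not actually include a proof of it in the text, so there is no in-paper argument to compare against; your write-up supplies the missing argument, and it does so using exactly the machinery the paper sets up. The two halves are both sound: the layered doubling estimate inside $G_{|C}$ (lower-bounding the edges internal to $B_i$ by $\tfrac12\delta s_{i-1}$ via the survival-set degree guarantee, upper-bounding them by $\tfrac{\delta}{4}s_i$ via edge-density while $s_i\le k/64$, and concluding $s_\gamma\ge k/64$ since otherwise $2^\gamma\ge n^2>n\ge s_\gamma$), and the gluing step, where either the two balls intersect (distance $\le 2\gamma$) or $(k/64)$-expansion yields an edge between them whose endpoints both lie in $C$, hence an edge of $G_{|C}$, giving distance $\le 2\gamma+1$. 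Your closing observation is also well taken: the ball $N^{\gamma}_{G_{|C}}(v)$ need not literally be a $(\gamma,\delta)$-dense-neighborhood in the sense of the paper's definition, because that definition measures the interior by $G$-distance while the degree guarantee you control is for nodes in the $G_{|C}$-interior; running the doubling directly, using only the degrees of nodes strictly inside the $G_{|C}$-ball, cleanly avoids this mismatch and is the right way to make the argument rigorous rather than citing Lemma~\ref{lemma:sparse-to-dense} as a black box.
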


\section{Parameterized Consensus: Trading Time for Communication}
\label{sec:param-short}

We first specify and analize algorithm $\textsc{ParameterizedConsensus}$, for a given parameter $x \in [1, \ldots, n]$\footnote{
Without loss of generality, 
we may assume that $x$ is a divisor of $n$. If it is not the case, we can always make $\ceil{x}$ groups of size $\ceil{\frac{n}{x}}$, which would not change the asymptotic analysis of the algorithm.
}
and a number of crashes $f<\frac{n}{10}$.
Later, in Section~\ref{sec:param-generalized}, we show how to generalize it to algorithm \ParamConsensus$^*$, which works correctly and efficiently for any number of crashes $f < n$.

\remove{

To the best of our knowledge, this is the first algorithm that makes a smooth transition between a class of algorithms with the optimal running time (c.f., Bar-Joseph's and Ben-Or's \cite{Bar-JosephB98} randomized algorithm that works in $O( \sqrt{\frac{n}{ \log{n} } } )$ rounds) and the class of algorithms with the almost linear bits complexity (c.f., Chlebus's, Kowalski's and Strojnowski's \cite{CK-PODC09} deterministic algorithm that uses $O(n\log^{4}{n})$ communication bits). 
}

\Paragraph{Notation and data structures.}
Let $p \in \cP$ denote the process executing the algorithm, while $b_{p}$ denote $p$'s input bit; $\cP,x,p,b_p$ are the input of the algorithm. 
Let $SP_{1}, \ldots, SP_{x}$ be a partition of the set $\cP$ of processes into $x$ groups of $\frac{n}{x}$ processes each. $SP_i$ is called a \textit{super-process}, and each $p\in SP_i$ is called its {\em member}. We also denote by $SP_{[p]}$ the super-process $SP_i$ to which $p$ belongs. A graph $\cH$ is an overlay graph $G(x,\frac{x}{3}, \delta_{x}, \gamma_{x})$,
which existence and properties are guaranteed in Theorem~\ref{theorem:overlay-graphs-exist} and Lemma~\ref{lemma:survival-small-diameter}, where $\delta_{x} := 24\log{x}, \gamma_{x} := 2\log{x}$. We uniquely identify vertices of $\cH$ with super-processes. We say that two super-processes, $SP_{p}$ and $SP_{q}$, are neighbors if vertices corresponding to them share an edge in $\cH$. For every two such neighbors, we denote by $SE(SP_{p}, SP_{q})$ 
an overlay graph $G(2 \frac{n}{x}, \frac{2n}{3x}, 24\log{\frac{2n}{x}}, 2\log{\frac{2n}{x}})$
which vertices we identify with the set $SP_{p} \cup SP_{q}$. 
($SE(SP_{p}, SP_{q})$ is a short form of \textit{super-edge} between $SP_{p}$ and $SP_{q}$.)
Again, for existence and properties of the above overlay graph we refer to Theorem~\ref{theorem:overlay-graphs-exist} and Lemma~\ref{lemma:survival-small-diameter}. Since the processes operate in $KT$-$1$ model, we 
can assume
that all objects mentioned in this paragraph can be computed locally by any process. 
Below is a pseudo-code of algorithm {\sc ParameterizedConsensus}.

\begin{algorithm}[h!]\label{fig:param-cons}
\SetAlgoLined
\SetKwInput{Input}{input}
\SetKwInput{Output}{output}
\Input{$\cP$, $x$, $p$, $b_p$}
calculate locally $\{SP_{1}, \ldots, SP_{x}\}$, $\cH$ \;
\texttt{candidate\_value} $\leftarrow$ \textsc{ParameterizedConsensus:Phase\_1}($\cP, \{SP_{1}, \ldots, SP_{x}\}, \cH, x, p, b_{p}$)\;
\texttt{confirmed} $\leftarrow$ \textsc{ParameterizedConsensus:Phase\_2}($\cP$, $\{SP_{1}, \ldots, SP_{x}\}$, $\cH$, $x$, $p$)\;
\uIf{$\texttt{confirmed} = 1$}
{
    $\texttt{CandidatesValues} \leftarrow \Gossip(\cP, p, \texttt{candidate\_value})$ \tcc*[r]{\textsc{Phase 3}} \label{alg:param:conf-gossip}
}
\uElse
{
    $\texttt{CandidatesValues} \leftarrow \Gossip(\cP, p, -1)$ \tcc*[r]{\textsc{Phase 3}} \label{alg:param:un-conf-gossip}
}
$\texttt{decision\_value} \leftarrow \text{any value of the set } \texttt{CandidatesValues} \text{ that differs from } -1 $ \;
\Return{\texttt{decision\_value}}
\caption{\textsc{ParameterizedConsensus}}
\end{algorithm}

\Paragraph{\em High-level idea of \ParamConsensus.}
We cluster processes into $x$ disjoint groups (super-processes) of $\frac{n}{x}$ processes each. 
Processes locally compute the super-process they belong to and overlay graphs.
Starting from this point, we 
view the system as a set of $x$  super-processes. 

In the beginning (see line 2 and Section~\ref{sec:phase1} for description of Phase~1), Phase~1 is executed in which super-processes flood value $1$ along an overlay graph $\cH$
of super-processes. The main challenge is to do it in $\logO(\sqrt{xn})$ rounds and $\logO(\sqrt{n/x})$ amortized communication per process whp. 

In Phase 2 (see line 3 and Section~\ref{sec:phase2} for description of Phase~2), super-processes estimate the number of operating super-processes in the neighborhood of radius $O(\log{x})$ in graph $\cH$. 
Members of those super-processes who estimate at least a certain constant fraction (we say that they ``survive''), set up variable $\texttt{confirmed}$ to $1$. The main challenge is to do it in $\logO(\sqrt{n/x})$ rounds and $\logO(\sqrt{nx})$ amortized communication per process whp.

Next, we discard the partition into $x$ super-processes. All processes execute a {\sc Gossip} algorithm. Processes that set up variable $\texttt{confirmed}$ to $1$
start the \Gossip\ algorithm with their initial value being the value of the super-process they belonged to. Other processes start with a null value (-1).
Because super-processes use graph $\cH$ for communication, which in particular satisfies $(x, \frac{3}{4}, \delta_{x})$-compactness property, we will prove that at the end of Phase 2 at least a constant fraction of non-faulty (i.e., their $\frac{3}{4}$ fraction of members are alive) super-processes survive. 
This implies that at least a constant fraction of processes begins the \Gossip\ algorithm with a not-null value. 
Because the not-null value results from a flooding-like procedure of value $1$ (if there is any in the system), we will be able to prove that, eventually, every process gets the same value, since at most a constant number of crashes can occur.

To preserve synchronicity, in the \ParamConsensus\ algorithm we use the Monte Carlo version of \textsc{BiasedConsensus} in both Phase 1 and Phase 2, see discussion in Section~\ref{sec:biased-consensus-short}. However, with a polynomial small probability, in this variant of Consensus some processes may not reach a decision value. To handle this very unlikely scenario, processes who have not decided in a run of \textsc{BiasedConsensus} alarm the whole system by sending a message to every other process. Then, the whole system switches to any  deterministic Consensus algorithm with $O(n)$-time and amortized communication bit complexities (c.f.,~\cite{AW}) and returns its outcome as the final decision. The latter part of alarming and the deterministic Consensus algorithm could use $\Theta(n)$ communication bits, however it happens only with polynomially small probability, see Section~\ref{sec:biased-consensus-short}; therefore, it does not affect the final amortized complexity of the \ParamConsensus\ algorithm whp. For the sake of clarity, we decide to not include this relatively straightforward 'alarm' scheme in the~pseudocodes.


\remove{
Although the first two of these three phases are reminiscent of the phases of Chlebus's, Kowalski's and Strojnowski's algorithm there are two cruxes. First, there is obviously a need to coordinate actions of members of each super-process in each phase. This could be done by any Consensus algorithm however this algorithm should be fast and message efficient in the same time. For example Bar-Joseph's and Ben-Or's algorithm is fast but its bits complexity is $O(n^{5/2})$ for a group of $n$ processes. Putting this algorithm as a subroutine would blow up the total bits complexity of \textsc{ParameterizedConsensus} to $O(\frac{n^2}{x}\sqrt{\frac{n}{x}})$ - a much worse bit complexity than we are able to achieve. To overcome this we propose in Section~\ref{sec:rand-consensus} a new Consensus algorithm that improves Bar-Joseph's and Ben-Or's results significantly. This algorithm stills maintains the optimal round up to the polylogarithmic factor but uses only $O(n^{3/2})$ bits in total.

Second, we need an efficient way to communicate between members of two super-processes that are endpoints of an edge in $\cH$. Luckily, we observed that this type of communication can be relaxed in the following sense: information should be conveyed from members of one super-process to another if and only if a constant fraction of members of the first super-processes, the sender, has not been crashed. This opens the possibility to use gossip-like protocols which are usually cheaper and faster than Consensus algorithm. To make advantage of this observation, in Section~\ref{} we give a new Gossip algorithm that solves Gossip in $O(\log^{?}{n})$ rounds and uses $O(n\log^{?}{n})$ messages. A special case of this algorithm, in which only messages of size $O(\log{n})$ are used, is then applied in \textsc{ParamConsenus}.
}

\subsection{Specification and Analysis of Phase 1}
\label{sec:phase1}



\begin{algorithm}[h!] \label{alg:param-cons:phase-1}
\SetAlgoLined
\SetKwInput{Input}{input}
\SetKwInput{Output}{output}
\Input{$\cP$, $\{SP_{1}, \ldots, SP_{x}\}$, $\cH$, $x$, $p$, $b_{p}$}
$\texttt{is\_active} \gets \texttt{true}$ \;
$\texttt{candidate\_value} \gets \frac{2}{3}\textsc{-BiasedConsensus}(p, SP_{[p]}, b_{p})$\; \label{alg:phase-1:zero-stable}
\For{$i \leftarrow 1$ \KwTo $x + 1$}
{
    \uIf{$\texttt{is\_active} = \texttt{true} \And \texttt{candidate\_value} = 1$}
    {
        $\texttt{candidate\_value} \leftarrow \frac{2}{3}\textsc{-BiasedConsensus}(p, SP_{[p]}, \texttt{candidate\_value})$ \;\label{alg:phase-1:first-stable}
    }
    \uElse
    {
        stay silent for $y=\logO(\sqrt{\frac{n}{x}})$ rounds \;
    }
    \uIf{$\texttt{is\_active} = \texttt{true} \And \texttt{candidate\_value} = 1$}
    {
        \ForEach{super-process $SP_{j}$ being a neighbor of $SP_{[p]}$ in $\cH$ \label{alg:phase-1:begin-comm}}
        {
            send $1$ to every member of $SP_{j}$ which is a neighbor of $p$ in $SE(SP_{[p]}, SP_{j})$ \; \label{alg:phase-1:direct-comm}
        }\label{alg:phase-1:end-comm}
        $\texttt{is\_active} \leftarrow \texttt{false}$ \;
    }
    \uIf{$p$ received a message containing $1$ in the previous round}
    {
        $\texttt{candidate\_value} \leftarrow 1$ \;
    }
}
\Return{
$\frac{1}{3}\textsc{-BiasedConsensus}(p, SP_{[p]}, \texttt{candidate\_value})$}
\caption{\textsc{ParameterizedConsensus:Phase\_1}}
\end{algorithm}

\paragraph{\em High-level idea.}
In the beginning, the members of every super-process agree on a single value among their input values. Once this is done, super-processes start a flooding 
procedure
navigated by an overlay graph $\cH$. 
$\cH$ should be an expander-like, regular graph with good connectivity properties, but a small degree of at most $O(\log{x})$. Intuitively, this can guarantee that regardless of the crash pattern there will exist a connected component, of a size being a constant fraction of all vertices, in $\cH$ consisting of super-processes that are still operating. The flooding processes is a sequential process of $O(x)$ \textit{phases}. A single super-process \textit{communicates}, that means it sends value $1$ to all its neighbors in $\cH$, in at most one phase only; either in the first phase, if the value its members agreed on in the beginning is $1$; or in the very first phase after the super-process received value $1$ from any of its neighbors in $\cH$. 
End of the flooding process encloses the Phase~1~of~the~algorithm.

Once members of a super-process get value $1$ for the first time, their use {\sc BiasedConsensus} to agree if value $1$ has been received or not. It is necessary due to crashes during the flooding process, yet it is not easy to implement with low amortized bit complexity. A pattern of crashes can result in some members of a super-process receiving value $1$ and some other not. One can require all members to execute {\sc BiasedConsensus} in each phase, but this will blow up the amortized bit complexity to $\logO(x\sqrt{\frac{n}{x}})$ whp. We in turn, propose to execute {\sc BiasedConsensus} only among this members who received value $1$ in the previous communication round (i.e. line~\ref{alg:phase-1:direct-comm}) and use the stronger properties of Biased Consensus problem to argue that the number of calls to the {\sc BiasedConsensus} algorithm will not be to large.

\Paragraph{Analysis.} 
Recall, that we say that a super-process \textit{communicates} with another super-process if \textit{any of its members} executes lines~\ref{alg:phase-1:begin-comm}-\ref{alg:phase-1:end-comm} of the Algorithm~\ref{alg:param-cons:phase-1}.
Trivially, from the Algorithm~\ref{alg:param-cons:phase-1} we get that each member of a super-process executes line~\ref{alg:phase-1:direct-comm} at most once, since if the line is executed then variable $\texttt{is\_active}$ will be changed to $\texttt{false}$, but the next lemma shows that we can expect more: members of a super-process preserve synchronicity in communicating with other members.

\begin{lemma} \label{lem:phase-1-sync-comm}
For every $i \in [x]$, there is at most one iteration of the main loop in which $SP_{i}$ communicates with any other super-process.
\end{lemma}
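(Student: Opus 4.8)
The plan is to derive the lemma from two ingredients: a \emph{monotonicity} fact---that each member permanently sets $\texttt{is\_active}\gets\texttt{false}$ in the very iteration in which it first communicates, so no member communicates twice---and the \emph{biasing} guarantee of $\frac{2}{3}\textsc{-BiasedConsensus}$, which forces a $\frac{2}{3}$-fraction of the super-process to communicate \emph{simultaneously} and thereby starves every later iteration of enough active members to communicate again. I would not try to maintain any synchronization of $\texttt{candidate\_value}$ across $SP_i$; the argument only needs the common output of each Consensus call and the irreversible shrinking of the active set.

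First I would pin down exactly when a member communicates. Write $m=n/x$ for the size of $SP_i$. A member $p\in SP_i$ executes lines~\ref{alg:phase-1:begin-comm}--\ref{alg:phase-1:end-comm} in iteration $j$ iff at the second conditional of that iteration it has $\texttt{is\_active}=\texttt{true}$ and $\texttt{candidate\_value}=1$; this is possible only if $p$ entered the first conditional with $\texttt{is\_active}=\texttt{true}$ and $\texttt{candidate\_value}=1$ (so it invoked $\frac{2}{3}\textsc{-BiasedConsensus}$ on line~\ref{alg:phase-1:first-stable}) and the value returned equals $1$. Let $A_j\subseteq SP_i$ be the set of members that invoke $\frac{2}{3}\textsc{-BiasedConsensus}$ in iteration $j$. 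By the Agreement property of Consensus (Theorem~\ref{thm:stable-cons}) all members of $A_j$ receive a common output $d_j\in\{0,1\}$; hence either every member of $A_j$ communicates in iteration $j$ (if $d_j=1$) or none does (if $d_j=0$). Thus $SP_i$ communicates in iteration $j$ precisely when $A_j\neq\emptyset$ and $d_j=1$, and in that case it is exactly the members of $A_j$ that communicate, each setting $\texttt{is\_active}\gets\texttt{false}$.

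Next comes the core counting step. Suppose $SP_i$ communicates in some iteration $i_0$, so $d_{i_0}=1$. By the biasing property of $\frac{2}{3}\textsc{-BiasedConsensus}$ (the $\alpha$-Biased Consensus guarantee with $\alpha=\tfrac{2}{3}$ over the $m$ members of $SP_{[p]}$), an output of $1$ is possible only if at least $\frac{2}{3}m$ of the members fed input $1$ into that invocation, counting every member that did not participate as contributing~$0$. Therefore $|A_{i_0}|\ge\frac{2}{3}m$, and all of these at least $\frac{2}{3}m$ members set $\texttt{is\_active}\gets\texttt{false}$ during iteration $i_0$. Since $\texttt{is\_active}$ is never reset to $\texttt{true}$, from this point on at most $m-\frac{2}{3}m=\frac{1}{3}m$ members of $SP_i$ remain active (possibly fewer, due to crashes), and only active members can lie in any future set $A_j$.

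Finally I would conclude. For any iteration $j>i_0$, only the fewer than $\frac{2}{3}m$ still-active members can input $1$ to $\frac{2}{3}\textsc{-BiasedConsensus}$, so strictly fewer than $\frac{2}{3}m$ ones are present; by the biasing property the output is forced to $d_j=0$, and hence $SP_i$ does not communicate in iteration $j$. Combined with the fact that within a single iteration communication is all-or-nothing for the participants, this shows $i_0$ is the unique iteration in which $SP_i$ communicates. The step I expect to be most delicate is the accounting in the biasing bound: one must treat non-participating (and already-deactivated) members as contributing input $0$, so that the threshold is measured against the full super-process size $m=n/x$; this is exactly what produces the $\frac{2}{3}$ versus $\frac{1}{3}$ gap that rules out a second communicating iteration. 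I would also note that, since \ParamConsensus{} invokes the Monte~Carlo variant of \textsc{BiasedConsensus}, the Agreement and biasing guarantees---and hence this lemma---hold with high probability.
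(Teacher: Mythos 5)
Your proposal is correct and follows essentially the same argument as the paper's proof: communication in an iteration forces the $\frac{2}{3}$-\textsc{BiasedConsensus} output to be $1$, which by the biasing guarantee means at least a $\frac{2}{3}$-fraction of $SP_i$ participated with value $1$; those members then all communicate (or crash) and irreversibly deactivate, leaving fewer than a $\frac{2}{3}$-fraction able to contribute $1$ in any later iteration, so every subsequent Biased Consensus call returns $0$. Your explicit treatment of non-participants as contributing input $0$ to the threshold is a slightly more careful rendering of the same accounting the paper performs.
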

\begin{proof}
Let us fix any super-process $SP_{i}$ and consider the \textit{first} round $r$ in which that super-processes communicates with another one. If such round does not exist the lemma holds. Now, if a member $p$ of the super-process $SP_{i}$ executes line~\ref{alg:phase-1:direct-comm}, it must have its variables $\texttt{is\_active}$ and $\texttt{candidate\_value}$ set to $\texttt{true}$ and $1$, respectively. In particular this means, that $p$ had to execute line~\ref{alg:phase-1:first-stable} before it reached line~\ref{alg:phase-1:direct-comm} in this iteration. Otherwise, its value $\texttt{candidate\_value}$ would be $0$. This, let us conclude that the $\frac{2}{3}$-\textsc{BiasedConsensus} algorithm executed in line~\ref{alg:phase-1:first-stable} returned value $1$. Since, we used a Biased Consensus algorithm, see Theorem~\ref{thm:stable-cons}, we have that at least $\frac{2}{3}|SP_{i}|$ members started the synchronous execution of line~\ref{alg:phase-1:first-stable}. It easily follows, that each of this members either became faulty or executed line~\ref{alg:phase-1:direct-comm} later in the same iteration of the main loop. If a member executed line~\ref{alg:phase-1:direct-comm}, it sets its variable \texttt{is\_active} to \texttt{false} and stays idle for the rest of the algorithm run, in particular it does not participate in any future run of the $\frac{2}{3}$-\textsc{BiasedConsensus} algorithm. The same holds if a member became a faulty process. It gives us that at most $\frac{1}{3}|SP_{i}|$ members of $SP_{i}$ can participate with value other than $0$. Since they always execute the $\frac{2}{3}$-\textsc{BiasedConsensus} algorithm, thus according to Theorem~\ref{thm:stable-cons}, the result will always be $0$. This proves that $SP_{i}$ will not communicate in any other round than $r$. 
\end{proof}

\begin{lemma} \label{lem:phase-1-same-return}
For every $i \in [x]$, members of a non-faulty super-process $SP_{i}$ return the same value in $\textsc{Phase\_1}$.
\end{lemma}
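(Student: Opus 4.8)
The plan is to reduce the entire claim to the Agreement guarantee of the single $\frac{1}{3}$-\textsc{BiasedConsensus} call that produces the return value. By the last line of Algorithm~\ref{alg:param-cons:phase-1}, the value a member $p\in SP_i$ returns from \textsc{Phase\_1} is exactly the output of the joint call $\frac{1}{3}$-\textsc{BiasedConsensus} run over the members of $SP_i$, with each member feeding in its own \texttt{candidate\_value}. Since $\frac{1}{3}$-\textsc{BiasedConsensus} solves Consensus (Theorem~\ref{thm:stable-cons}), and Consensus enforces Agreement, it suffices to certify that this final call is a legitimate synchronous binary Consensus instance on $SP_i$: all non-faulty members must (i) start it in the same round with (ii) a valid binary input.

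Condition (ii) is immediate: throughout Algorithm~\ref{alg:param-cons:phase-1}, \texttt{candidate\_value} is always a bit in $\{0,1\}$ -- it is initialized from a $\frac{2}{3}$-\textsc{BiasedConsensus} output and is only ever overwritten by another Biased Consensus output or set to $1$ upon receipt of a $1$ -- so the input to the final call is valid for binary Consensus.

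For (i), the key observation is that the main loop is deliberately round-padded so that every iteration consumes an identical, fixed number of rounds for every member, regardless of the branch taken. In each iteration a member either runs a $\frac{2}{3}$-\textsc{BiasedConsensus} -- whose Monte Carlo variant, used inside \ParamConsensus, is forced to terminate within a deterministic worst-case $y=\logO(\sqrt{n/x})$ rounds -- or it ``stays silent for $y$ rounds''; the subsequent send/receive exchange then costs the same fixed number of rounds for senders and idle members alike. Hence, irrespective of the adversary's crash pattern, all members surviving to the end of the loop reach the final $\frac{1}{3}$-\textsc{BiasedConsensus} in lockstep after exactly $x+1$ equally-timed iterations, so they all invoke the same synchronous Consensus instance together.

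With (i) and (ii) established, the final call is a well-formed synchronous binary Consensus over the members of $SP_i$, and its Agreement property forces all non-faulty members to output the same value, which is precisely what \textsc{Phase\_1} returns. Note that the non-faulty hypothesis on $SP_i$ is not needed for Agreement itself -- Consensus tolerates any $f<|SP_i|$ crashes -- but it ensures that enough members survive for the statement to be meaningful and for the whp-correctness of the Monte Carlo instance to survive the union bound in the surrounding analysis. The single point that genuinely demands care, and thus the main obstacle, is the synchronization argument (i): one must verify that the explicit silent-padding together with the enforced worst-case termination of the Monte Carlo Biased Consensus make the two control-flow branches take exactly equal time, since otherwise members could enter the final consensus at different rounds and the Agreement property would not apply.
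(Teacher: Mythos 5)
Your proposal is correct and follows essentially the same route as the paper: the paper's own proof is a one-sentence reduction to the Agreement property of the final $\frac{1}{3}$-\textsc{BiasedConsensus} call via Theorem~\ref{thm:stable-cons}. Your additional verification that all members enter that call in the same round (via the silent-padding and the forced worst-case termination of the Monte Carlo variant) is a sound elaboration of a step the paper leaves implicit, not a different argument.
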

\begin{proof}
Each member returns its decision based on the result of the Biased variant of a Consensus algorithm executed on the set of all members of its super-process, thus according to Theorem~\ref{thm:stable-cons}, each member must return the same value.
\end{proof}

\noindent Recall, that we defined a super-process \textit{non-faulty} if in the end of the  $\texttt{ParameterizedConsensus}$ algorithm at least $\frac{3}{4}$ of its members have not been crashed. In particular, the number of operating members is at least $\frac{3n}{4x}$ in every phase of the algorithm.

\begin{lemma}
There are no two non-faulty super-processes that are connected by an edge in $\cH$ but they members return different $\texttt{decision\_values}$ in the end of \textsc{Phase\_1}.
\end{lemma}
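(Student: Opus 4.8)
The plan is to argue by contradiction on the binary values: by Lemma~\ref{lem:phase-1-same-return} each non-faulty super-process returns a single well-defined bit, so it suffices to assume that a non-faulty super-process $SP_i$ returns $1$ and to show that every non-faulty neighbor $SP_j$ of $SP_i$ in $\cH$ returns $1$ as well. The backbone of the proof is the characterization, which I would establish first: \emph{a non-faulty super-process returns $1$ in \textsc{Phase\_1} if and only if at some iteration of the main loop it communicates}, i.e. a set of its members executes line~\ref{alg:phase-1:direct-comm}.

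For the direction "communicates $\Rightarrow$ returns $1$", observe that by the biased-consensus guarantees of Theorem~\ref{thm:stable-cons} a super-process reaches line~\ref{alg:phase-1:direct-comm} only after the $\frac{2}{3}$-\textsc{BiasedConsensus} of line~\ref{alg:phase-1:first-stable} outputs $1$, so at least $\frac{2}{3}$ of its members hold candidate value $1$ and turn inactive with that value fixed; as $SP_i$ is non-faulty at most $\frac14$ of its members crash, leaving more than $\frac{1}{3}$ surviving members with value $1$, whence the final $\frac{1}{3}$-\textsc{BiasedConsensus} returns $1$. The converse is the subtle part. Here I would first show that the flooding saturates strictly inside the loop: by Lemma~\ref{lem:phase-1-sync-comm} each super-process communicates at most once, and a super-process communicating in iteration $t>1$ must have received value $1$ in iteration $t-1$ from a neighbor that communicated in iteration $t-1$; chaining these predecessors produces distinct super-processes with strictly decreasing communication times, so every communication time is at most $x$ and \emph{no} value $1$ is sent or received in iteration $x+1$. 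Consequently any candidate value $1$ that survives to the final consensus must have been re-affirmed by a line~\ref{alg:phase-1:first-stable} consensus returning $1$ (a value set only in a receive block of some iteration $r\le x$ is re-tested in iteration $r+1\le x+1$ and, absent a subsequent communication, reset to $0$), and such an affirmation is exactly a communication. This yields "returns $1\Rightarrow$ communicates".

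With the characterization in hand I would transport the value across the edge $(SP_i,SP_j)$. Since $SP_i$ returns $1$ it communicates in some iteration $r\le x$, so a constant fraction of its members (at least $\frac{2}{3}-\frac14$ of $|SP_i|$ after crashes) send $1$ along the super-edge graph $SE(SP_i,SP_j)$. The key quantitative step uses the expansion supplied by Theorem~\ref{theorem:overlay-graphs-exist}: $SE(SP_i,SP_j)$ is $\tfrac{n}{96x}$-expanding, hence the set of surviving members of $SP_j$ that have \emph{no} sending neighbor must be smaller than $\tfrac{n}{96x}$, for otherwise it and the senders would be joined by an edge. Therefore more than $\big(\tfrac34-\tfrac{1}{96}\big)|SP_j|>\tfrac23|SP_j|$ surviving members of $SP_j$ set their candidate value to $1$ in iteration $r$. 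If $SP_j$ has not yet communicated, its members run the $\frac{2}{3}$-consensus of line~\ref{alg:phase-1:first-stable} in iteration $r+1$, which outputs $1$, so $SP_j$ communicates and by the characterization returns $1$. The boundary case $r=x$ is absorbed by the chain bound above: a communication time equal to $x$ already exhausts all $x$ super-processes, so $SP_j$ lies on that chain and has itself communicated, hence again returns $1$. This contradicts $SP_j$ returning~$0$.

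The main obstacle is the reverse direction of the characterization, namely excluding a non-faulty super-process that ends with a $\tfrac13$-majority of value-$1$ members that was never confirmed — a value that could in principle have arrived in the very last receive block; this is precisely what the saturation/chain argument rules out, and it is the reason the loop runs $x+1$ rather than $x$ iterations. The second delicate point is the quantitative expansion estimate, which must upgrade the bare "two large sets share an edge" into "more than a $\tfrac23$ fraction of $SP_j$ receives $1$", so that the $\frac{2}{3}$ threshold of line~\ref{alg:phase-1:first-stable} is provably met; the choice of the expansion parameter of $SE(SP_i,SP_j)$ and of the threshold $\frac{2}{3}$ are dictated by exactly this inequality.
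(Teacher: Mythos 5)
Your proof is correct and follows essentially the same route as the paper's: the same backward-induction chain of distinct communicating super-processes (used there to analyze the case of a communication in iteration $x+1$, used by you to cap all communication times at $x$), the same expansion-based transport of value $1$ across $SE(SP_i,SP_j)$ to push more than a $\tfrac{2}{3}$ fraction of the neighboring super-process above the threshold of line~\ref{alg:phase-1:first-stable}, and the same $\tfrac{2}{3}-\tfrac{1}{4}\ge\tfrac{1}{3}$ arithmetic for the final $\tfrac{1}{3}$-\textsc{BiasedConsensus}. Your repackaging of the argument as a ``returns $1$ iff communicates'' characterization is a clean reorganization, and your explicit appeal to the $\tfrac{n}{96x}$-expansion of the super-edge is actually more precise than the paper's citation of Lemma~\ref{lemma:sparse-to-dense} at the corresponding step, but the underlying ideas are identical.
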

\begin{proof}

Assume contrary, that there exist two super-processes $SP_{1}$, $SP_{2}$ that are connected by an edge in $\cH$, such that members $SP_{1}$ return value $0$, but members of $SP_{2}$ return $1$. For the super-process $SP_{2}$, the returned value is calculated based on values of variables \texttt{candidate\_value} of its members, which are in turn calculated based on output of the $\frac{2}{3}$-\textsc{BiasedConsenus} algorithm from line~\ref{alg:phase-1:first-stable} (optionally, it can be a result of the run of this algorithm in line~\ref{alg:phase-1:zero-stable}, but than line~\ref{alg:phase-1:first-stable} is executed as well). Therefore, there must be at least iteration of the main loop in which members of $SP_{2}$ agreed on $1$ in the line~\ref{alg:phase-1:first-stable}. Observe, that in the same iteration these members must communicate, i.e. execute line~\ref{alg:phase-1:direct-comm}. According to Lemma~\ref{lem:phase-1-sync-comm} there is at most one such iteration. Let $k$ be the number of this iteration.

Suppose that $k = x + 1$. This means that members of $SP_{2}$ have set variable \texttt{candidate\_value} to $1$ based on received messages from members of a neighboring super-process $SP_{a}$. We observe, this messages must be received in the preceding iteration, that is in the iteration $k - 1$, super-process $SP_{3}$ communicated with $SP_{2}$. By Lemma~\ref{lem:phase-1-sync-comm} there is at most on such iteration for $SP_{3}$, which in turn gives us that some other super-process must communicate with $SP_{3}$ in the iteration $k - 2$. By backwards induction and the fact the each super-process communicates at most once, we get a chain of \textit{distinct} super-processes $P_{2}, SP_{3}, \ldots, SP_{x + 2}$, such that for $j \in [3, x + 2]$ super-process $SP_{j}$ communicated with super-process $SP_{j - 1}$ in the iteration $r - j + 2$ of the main loop. The chain consists of $x$ distinct super-processes, thus $SP_{1}$ must belong to it. This in turn means that there exists an iteration of the main loop, in which members of $SP_{1}$ communicated by sending message $1$ to other super-processes, i.e. they executed lines~\ref{alg:phase-1:begin-comm}-\ref{alg:phase-1:end-comm}. 
If the communication happened in this iteration, by a retrospective reasoning, we can conclude the $\frac{2}{3}$-\textsc{BiasedConsensus} in line~\ref{alg:phase-1:first-stable}, that proceeds the communication must result in value $1$. From the property of the $\alpha$-Biased Consensus, we have that at least $\frac{2}{3}$ fraction of members of $SP_{1}$ started the line~\ref{alg:phase-1:first-stable} having value \texttt{candidate\_value} set to $1$. Because $SP_{1}$ is non-faulty, thus at least $\frac{2}{3} - \frac{1}{4} \ge \frac{1}{3}$ fraction of members from $SP_{1}$ remain non-faulty to the end of $\textsc{Phase\_1}$. Therefore, the outcome of the $\frac{1}{3}-$\textsc{BiasedConsensus} algorithm must be $1$ which is a contradiction with the assumption that members of $SP_{1}$ have set the variable \texttt{decision\_value} to $0$.

Consider now the case where $k < x + 1$. We shall show that members of $SP_{2}$ send message $1$ to sufficiently many members of $SP_{1}$ in iteration $k + 1 \le x + 1$ to influence the value their return. 
Provided that $SP_{2}$ communicates in round $k$, we observe that members must also execute line~\ref{alg:phase-1:first-stable}, and to make the communication possible, the $\frac{2}{3}-$\textsc{BiasedConsensus} must return value~$1$. From the properties of $\alpha-$Biased Consensus, we get that at least $\frac{2}{3}$ fraction of members of $SP_{2}$ started the Consensus algorithm with $\texttt{candidate\_value}$ set to $1$. Since $SP_{2}$ is non-faulty, thus at least $\frac{2}{3} - \frac{1}{4} = \frac{5}{12}$ fraction of members of $SP_{2}$ took part in sending messages to members of $SP_{1}$ in lines~\ref{alg:phase-1:begin-comm}-\ref{alg:phase-1:end-comm}.
The graph $SE(SP_{2}, SP{1})$ satisfies properties of Theorem~\ref{theorem:overlay-graphs-exist}; in particular, by Lemma~\ref{lemma:sparse-to-dense}, we get that at least $\frac{11}{12}$ members of $SP_{1}$ received message $1$ in the iteration $k$. 
Using the fact that $SP_{1}$ is non-faulty, we argue that at least $\frac{11}{12} - \frac{1}{4} = \frac{2}{3}$ members of $SP_{1}$ participated in the run of the $\frac{2}{3}$-\textsc{BiasedConsensus} algorithm in the next $k + 1$ iteration. These processes preserved the variable $\texttt{candidate\_value}$ set to $1$ since sufficiently many members started and finished the run of the $\frac{2}{3}$-\textsc{BiasedConsensus} algorithm. These members eventually take part in the execution of the $\frac{1}{3}$-\textsc{BiasedConsensus} algorithm at the end of \textsc{Phase\_1}. Since their start the execution having $\texttt{candidate\_value}$ set to $1$ and they do not crash, thus the result of the $\frac{1}{3}$-\textsc{BiasedConsensus} must be $1$. This proves the lemma.
\end{proof}

\noindent 
From the previous lemma we can immediately conclude.

\begin{lemma} \label{lem:stable_groups}
Members of each connected component of $\cH$ formed by a non-faulty super-processes return the same $\texttt{decision\_values}$ in the end of \textsc{Phase\_1}.
\end{lemma}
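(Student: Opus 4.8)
The plan is to derive this as an immediate transitive consequence of the preceding lemma together with Lemma~\ref{lem:phase-1-same-return}. First I would invoke Lemma~\ref{lem:phase-1-same-return} to observe that every member of a fixed non-faulty super-process returns the same value at the end of \textsc{Phase\_1}, so it is meaningful to speak of \emph{the} decision value of a non-faulty super-process; denote it $d(SP)$. The preceding lemma then states precisely that $d(SP_i)=d(SP_j)$ whenever $SP_i,SP_j$ are non-faulty super-processes joined by an edge of $\cH$.

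Next I would fix an arbitrary connected component $K$ of the subgraph of $\cH$ induced by the non-faulty super-processes, and take any two super-processes $SP,SP'\in K$. By definition of a connected component of this induced subgraph, there is a path $SP=Q_0,Q_1,\ldots,Q_m=SP'$ in which every $Q_\ell$ is a non-faulty super-process and every consecutive pair $Q_\ell,Q_{\ell+1}$ is connected by an edge of $\cH$. Applying the preceding lemma to each edge of this path gives $d(Q_\ell)=d(Q_{\ell+1})$ for all $\ell$, and chaining these equalities yields $d(SP)=d(SP')$. Since $SP,SP'$ were arbitrary in $K$, all members of all super-processes in $K$ return the same $\texttt{decision\_value}$, which is exactly the claim.

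The only point requiring a moment of care---rather than a genuine obstacle---is the insistence that the connecting path stays entirely inside the non-faulty component: the preceding lemma speaks only about pairs of \emph{non-faulty} adjacent super-processes, so the argument would break if an intermediate $Q_\ell$ on the path were faulty. This is precisely guaranteed by choosing the path within the subgraph induced on the non-faulty super-processes (equivalently, within the component $K$), so no additional work is needed. I do not anticipate any hard step here; the lemma is a direct corollary obtained by transitivity along paths, with all of the Phase~1 substance already absorbed into the preceding edge-level lemma.
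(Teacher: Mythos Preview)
Your proposal is correct and matches the paper's approach: the paper simply states that the lemma follows immediately from the preceding edge-level lemma, and your argument spells out exactly that transitive chaining along a path of non-faulty super-processes inside the component (together with Lemma~\ref{lem:phase-1-same-return} to make the per-super-process value well-defined). There is nothing more to add.
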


\begin{lemma} \label{lem:phase-1}
The \textsc{Phase\_1} part of the $\textsc{ParameterizedConsensus}$ algorithm takes $\logO(x\sqrt{n / x})$ rounds and uses $\logO(n \sqrt{n / x} \log{n})$ bits whp.
\end{lemma}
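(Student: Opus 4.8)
The plan is to bound the round count and the bit count separately, in both cases reducing everything to the cost of a single invocation of $\frac{2}{3}\textsc{-BiasedConsensus}$ (or $\frac{1}{3}\textsc{-BiasedConsensus}$) on a super-process of $\frac{n}{x}$ members. By Theorem~\ref{thm:stable-cons} applied with the substitution $n \mapsto \frac{n}{x}$ and failure bound $f \mapsto \frac{n}{x}$, each such call runs in $\logO(\sqrt{n/x})$ rounds and costs $\logO(\sqrt{n/x})$ amortized bits per member, i.e.\ at most $\frac{n}{x}\cdot\logO(\sqrt{n/x})$ bits in total, whp. The value $y=\logO(\sqrt{n/x})$ in the ``stay silent'' branch is chosen to match the (Monte Carlo) worst-case length of one consensus call, so that all members of a super-process stay synchronized across iterations regardless of whether they invoke the consensus or idle.

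For the time bound I would add up the rounds along the single synchronized timeline: one initial call in line~\ref{alg:phase-1:zero-stable}, then $x+1$ iterations of the main loop, each consisting of either a consensus call or an equally long silent wait followed by $O(1)$ rounds of direct signalling (line~\ref{alg:phase-1:direct-comm}), and finally one closing call. Each of these $\Theta(x)$ stages takes $\logO(\sqrt{n/x})$ rounds, giving a total of $(x+1)\cdot\logO(\sqrt{n/x})=\logO(x\sqrt{n/x})=\logO(\sqrt{nx})$ rounds.

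The communication bound is the crux, and the main obstacle is to show that the expensive consensus calls are not invoked too often; the naive bound, in which every member runs the consensus in every iteration, would yield $\logO(x\sqrt{n/x})$ amortized bits, which is too much. Instead I would establish that each super-process $SP_i$ executes line~\ref{alg:phase-1:first-stable} in only $O(\log x)$ iterations. The key observation is that a member reaches line~\ref{alg:phase-1:first-stable} in iteration $j$ only if it is active with $\texttt{candidate\_value}=1$ at the start of that iteration; since the consensus result overwrites $\texttt{candidate\_value}$ (setting it to $0$ whenever the call returns $0$, and turning the member inactive whenever it returns $1$), this can happen only when $j=1$ (via the initial consensus of line~\ref{alg:phase-1:zero-stable}) or when the member received a $1$-message at the end of iteration $j-1$. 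By Lemma~\ref{lem:phase-1-sync-comm} each neighbouring super-process communicates in at most one iteration, and $SP_i$ has $O(\delta_x)=O(\log x)$ neighbours in $\cH$ by the degree bound of Theorem~\ref{theorem:overlay-graphs-exist}; hence its members receive $1$-messages in at most $O(\log x)$ distinct iterations, so line~\ref{alg:phase-1:first-stable} is invoked within $SP_i$ at most $1+O(\delta_x)=O(\log x)$ times.

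Summing over the $x$ super-processes, the main loop triggers $O(x\log x)$ consensus calls, the opening and closing lines add $x$ calls each, and every call costs at most $\frac{n}{x}\cdot\logO(\sqrt{n/x})$ bits; together these contribute $O(x\log x)\cdot\frac{n}{x}\cdot\logO(\sqrt{n/x})=\logO(n\sqrt{n/x})$ bits. The remaining traffic is the direct signalling of line~\ref{alg:phase-1:direct-comm}: in its single communicating iteration each member sends one bit along each of its $O(\log x)$ super-edge graphs $SE(\cdot,\cdot)$, each of degree $O(\log(n/x))$ by Theorem~\ref{theorem:overlay-graphs-exist}, for a total of $O(n\log^2 n)$ bits, which is negligible. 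This yields the claimed $\logO(n\sqrt{n/x}\log n)$ bit complexity. Finally, since only polynomially many (namely $O(x\log x)$) consensus calls occur and the estimate of Theorem~\ref{thm:stable-cons} holds whp for each, a union bound shows that both the time and the communication bounds hold~whp.
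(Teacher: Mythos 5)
Your proposal is correct and follows essentially the same route as the paper's own proof: the time bound by summing $\Theta(x)$ synchronized stages of length $\logO(\sqrt{n/x})$ each, and the communication bound by the key observation that, thanks to Lemma~\ref{lem:phase-1-sync-comm} and the degree bound $\delta_x$ of $\cH$, each super-process invokes the consensus of line~\ref{alg:phase-1:first-stable} only $O(\log x)$ times, with the line~\ref{alg:phase-1:direct-comm} traffic bounded separately as negligible. No substantive differences from the paper's argument.
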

\begin{proof}
The upper bound on the number of rounds follows from the observation that in each iteration of the main loop of the $\textsc{Phase\_1}$ algorithm, the execution of $\frac{2}{3}$-\textsc{BiasedConsensus} in line~\ref{alg:phase-1:first-stable} takes $\logO(\sqrt{n / x})$ rounds whp, by Theorem~\ref{thm:stable-cons}, and every other instruction is just a single round communication. By applying union bound over all iterations of the main loop, we get the the total running time of the $\textsc{Phase\_1}$ algorithm is $\logO(x\sqrt{n / x})$ whp.

To get the correct upper bound on the bit complexity of the algorithm we will first bound the number iterations in which members of a super-process $SP_{[p]}$ call the $\frac{2}{3}$-\textsc{BiasedConsensus} algorithm in line~\ref{alg:phase-1:first-stable}. 
Observe, that this line is executed only if both variables $\texttt{is\_active}$ and $\texttt{candidate\_value}$ are set to $\texttt{true}$ and $1$ respectively. Also, from the pseudocode of $\textsc{Phase\_1}$ it follows that before the current iteration ends either $\texttt{is\_active}$ will be set to $\texttt{false}$, or $\texttt{candidate\_value}$ will be changed to $0$. Thus, to execute line~\ref{alg:phase-1:first-stable} in any future iteration, members of $SP_{[p]}$ must receive a message $1$ from a member of a neighbor of $SP_{[p]}$ in $\cH$, since this is the only way to re-set $\texttt{candidate\_value}$ to $1$. But the crux is, that this cannot happen more than $\delta_{x}$ number of times. Indeed, in Lemma~\ref{lem:phase-1-sync-comm} we proved that members of each super-processes send messages to other members at most once. Moreover, they do this in the very same round. Since $SP_{p}$ has no more than $\delta_{x}$ neighbors in $\cH$ we get that $\frac{2}{3}$-\textsc{BiasedConsensus} algorithm in line~\ref{alg:phase-1:first-stable} can be executed at most this number of times among members of $SP_{[p]}$. Combining this fact with the complexity bounds given in Theorem~\ref{thm:stable-cons}, we get the the total number of bits used for all runs of the $\frac{2}{3}$-\textsc{BiasedConsensus} algorithm among members of $SP_{p}$ is $O(\delta_{x} \cdot (n / x)\sqrt{n / x} \log^{4}{n / x})$ whp.

Members may communicate in only one other way, by sending single bits in line~\ref{alg:phase-1:direct-comm}. However, by entering the if clause containing this line, a member must change its variable $\texttt{is\_active}$ from $\texttt{true}$ to $\texttt{false}$. Since this operation is irrevocable, the line~\ref{alg:phase-1:direct-comm} may be executed at most one by each process. For communication between super-processes, $SP_{[p]}$ and $SP_{[q]}$, we used sparse graphs $SE(SP_{[p]}, SP_{[q]})$ of degree $\delta_{n / x} = O(\log(n / x)$. Also the degree of $SP_{[q]}$ in $\cH$ is $\delta_{x}$ which gives us the in total members of $SP_{[p]}$ use $O((n / x) \cdot \log(n / x) \cdot \delta_{x})$ bits for the second type of communication.

Summing the two above estimation over all super-processes gives us the claimed upper bound on the number of bits used by $\texttt{Phase\_1}$.
\end{proof}

\subsection{Specification and Analysis of Phase 2}
\label{sec:phase2}


\remove{We say that a super-process \textit{survives} \textsc{Phase\_2} if it is non-faulty and all its members return value $1$ in $\textsc{Phase\_2}$. Observe, that this definition is consistent because members of a super-processes change their variable $\texttt{is\_active}$ based on outputs of the $\alpha\text{-stable-}\textsc{BiasedConsensus}$ algorithm. Thus the members always return the same bit in the end of $\textsc{Phase\_2}$.
}

\begin{algorithm}[h!]
\SetAlgoLined
\SetKwInput{Input}{input}
\SetKwInput{Output}{output}
\Input{$\cP$, $\{SP_{1}, \ldots, SP_{x}\}$, $\cH$, $x$, $p$}
\uIf{$\frac{3}{4}\textsc{-BiasedConsensus}(p, SP_{[p]}, 1)$ = 1}
{
    $\texttt{is\_active} \leftarrow \texttt{true}$ \label{alg:phase-2:is-active-ini}
}
\uElse
{
    $\texttt{is\_active} \leftarrow \texttt{false}$ \tcc*[r]{stage $i$}
}
\For{$i \leftarrow 1$ \KwTo $\gamma_{x}$}
{
    \uIf{$\texttt{is\_active} = \texttt{true}$}
    {
        $SN \gets \emptyset$ \;
        \ForEach{super-process $SP_{j}$ being a neighbor of $SP_{[p]}$ in $\cH$}
        {
            $N_{j}\gets$ \Gossip($SP_{[p]} \cup SP_{j}, p, p$) \; \label{alg:phase-2:cheap-gossip}
            $SN \leftarrow SN \cup N_{j}$ \;
        }
        \lIf{$|SN| > \delta_{x}$}
        {
            $\texttt{many\_superprocesses} \leftarrow 1$
        }
        \lElse
        {
            $\texttt{many\_superprocesses} \leftarrow 0$
        }
        $\texttt{survived} \leftarrow \frac{2}{3}\textsc{-BiasedConsensus}(p, SP_{[p]}, \texttt{many\_superprocesses})$ \;
        \If{$\texttt{survived} = 0$}
        {
            $\texttt{is\_active} \leftarrow \texttt{false}$
        }
    }
}
\Return{\texttt{is\_active}\tcc*[r]{a bit indicating whether $p$'s super-process survived}}
\caption{\textsc{ParameterizedConsensus}:Phase\_2}
\end{algorithm}

\paragraph{\em High-level idea.}

In Phase 2,
non-faulty super-processes estimate the number of operating super-processes in the neighborhood of radius $O(\log{x})$ in graph $\cH$. 
Those who estimate at least a certain constant fraction, set up variable $\texttt{confirmed}$ to $1$.
In order to achieve that, each super-process keeps signaling all its neighbors in $\cH$ in $\gamma_x=O(\log{x})$ stages until at least a constant fraction of them signaled its activity in preceding stage. A super-process that has been signaling during all stages is said \textit{to survive}. We will prove 
that, thanks to suitably chosen connectivity properties of $\cH$, at least a constant fraction of super-processes survives. Members of these super-processes will influence the final decision of the whole system in the following Phase 3.

\Paragraph{Analysis.}

\begin{lemma}\label{lem:fraction-survives}
At least $\frac{1}{2}$ super-processes are non-faulty and survive $\textsc{Phase\_2}$ of the \textsc{Parameterized\-Consensus} algorithm.
\end{lemma}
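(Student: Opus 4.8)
The plan is to exhibit a large set of non-faulty super-processes that is never deactivated, driven by the compactness of $\cH$. First I would convert the crash budget into a count of non-faulty super-processes: a super-process becomes faulty only once more than $\frac14\cdot\frac{n}{x}$ of its members crash, so $f<\frac{n}{10}$ allows fewer than $\frac{4x}{10}=\frac{2x}{5}$ faulty super-processes, whence the set $B$ of non-faulty super-processes has $|B|>\frac{3x}{5}$. Every super-process in $B$ keeps at least $\frac34\cdot\frac{n}{x}$ live members throughout the execution, so when its members run the initial $\frac{3}{4}\textsc{-BiasedConsensus}$ with all inputs equal to $1$, at least an $\frac34$-fraction of ones is present; by Validity together with the biasing rule (Theorem~\ref{thm:stable-cons}) the output is $1$, and every member of a $B$-super-process sets $\texttt{is\_active}\gets\texttt{true}$ at the start of Phase 2.

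Next I would justify treating super-processes as atomic units in the stage loop. Within one stage the estimate $SN$ records the neighbouring super-processes from which $p$ received, via the joint \Gossip\ (Theorem~\ref{thm:cheap2-gossip}), the rumours of enough live members, and the resulting activity bit is resolved by a $\frac{2}{3}\textsc{-BiasedConsensus}$; consequently all live members of a super-process agree on \texttt{survived}, so the whole super-process stays active or is deactivated together. Moreover, for a non-faulty $SP_i$, every non-faulty neighbour $SP_j$ that is still active in that stage contributes its live members to the joint \Gossip\ on $SP_i\cup SP_j$, so by the correctness of \Gossip\ \emph{every} live member of $SP_i$ learns $SP_j$ and counts it in $SN$. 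Hence the activity estimate of a $B$-super-process is at least the number of its still-active $B$-neighbours, which lets me model Phase 2 restricted to $B$ as a peeling process on $\cH|_B$: a node is removed in a stage exactly when its number of still-active neighbours falls to at most $\delta_x$.

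The heart of the argument is to bound the peeled set $P\subseteq B$ by the compactness threshold $\ell$ of $\cH$ (Theorem~\ref{theorem:overlay-graphs-exist}(iii)). I would argue $|P|<\ell$ by contradiction: if $|P|\ge\ell$, compactness applied to $P$ yields $C'\subseteq P$ in which each node has more than $\delta_x$ neighbours inside $C'$ (the strict-versus-nonstrict gap with the removal threshold $\delta_x$ is absorbed by instantiating the compactness degree one above $\delta_x$); taking the \emph{first} node of $C'$ to be deactivated, all of its $C'$-neighbours are still active at that moment, so it sees more than $\delta_x$ active neighbours and cannot be removed, a contradiction. Therefore $|P|<\ell$, and the set of non-faulty super-processes that keep $\texttt{is\_active}=\texttt{true}$ through all $\gamma_x$ stages has size at least $|B|-\ell>\frac{3x}{5}-\ell$, which exceeds $\frac{x}{2}$ since the compactness threshold of $\cH$ is a constant fraction of $x$ below $\frac{x}{10}$. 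A clean induction over stages shows these survivors are in fact stable from the outset -- each always sees more than $\delta_x$ active neighbours among the other survivors -- so the bound on $\gamma_x$ plays no role in their survival.

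I expect the main obstacle to be the interface between this clean peeling abstraction and the actual fault-prone primitives: the count $SN$ is produced by a crash-prone \Gossip\ and then filtered through $\frac{2}{3}\textsc{-BiasedConsensus}$, while faulty-but-still-active super-processes and partial message delivery of crashing members could in principle make a would-be survivor undercount its active neighbours and be wrongly deactivated. The resolution is precisely to carry out the entire argument inside $B$: every neighbour the argument relies on is non-faulty and permanently alive, hence its members always participate in the joint \Gossip\ and are always counted, so every live member of a survivor inputs $1$ to the $\frac{2}{3}\textsc{-BiasedConsensus}$ and Validity forces $\texttt{survived}=1$. The only remaining care is the bookkeeping of constants guaranteeing $\frac{3x}{5}-\ell\ge\frac{x}{2}$, which is exactly where the small compactness threshold of $\cH$ is used.
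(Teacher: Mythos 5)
Your reduction of the combinatorial core is sound and, on the interface with the fault-prone primitives, you are actually more explicit than the paper: the observations that (i) non-faulty neighbours always get counted because \Gossip{} on $SP_i\cup SP_j$ delivers their members' rumors to every live member of $SP_i$, (ii) all live members of such an $SP_i$ therefore input $1$ to the $\frac{2}{3}\textsc{-BiasedConsensus}$ so Validity forces $\texttt{survived}=1$, and (iii) the whole argument must be run inside the set $B$ of non-faulty super-processes, are exactly the points the paper's proof leaves implicit. Where you genuinely diverge is in how compactness is used. The paper applies compactness \emph{positively}, directly to the set $S$ of non-faulty super-processes: it extracts a survival set $C\subseteq S$ whose size is a constant fraction of $|S|$ (the paper claims $|C|>\frac{5}{6}|S|=\frac{x}{2}$), and then the same ``first node of $C$ to be deactivated'' induction you use for $C'$ shows that all of $C$ stays active through every stage. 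You instead apply compactness \emph{negatively}, to the peeled set $P$ of deactivated non-faulty super-processes, concluding $|P|<\ell$ and counting survivors as $|B|-\ell$.

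This dual use of compactness is where the quantitative gap appears. In Section~\ref{sec:param-short} the graph $\cH$ is instantiated as $G(x,\frac{x}{3},\delta_x,\gamma_x)$, so by Theorem~\ref{theorem:overlay-graphs-exist}(iii) its compactness threshold is $\ell=\frac{x}{3}$ --- not ``a constant fraction of $x$ below $\frac{x}{10}$'' as your last step assumes. Your bound then gives only $|B|-\ell>\frac{3x}{5}-\frac{x}{3}=\frac{4x}{15}$ surviving super-processes, which falls short of the claimed $\frac{x}{2}$. The structural reason is that subtracting the peeled set inherently loses the additive term $\ell$, whereas the paper's direct extraction of a survival set from $S$ loses only a multiplicative factor $\varepsilon$ of $|S|$; with $|S|>\frac{6}{10}x$ the multiplicative route can reach $\frac{x}{2}$ while the subtractive route cannot once $\ell=\frac{x}{3}$. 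To repair your version you would either need $\cH$ instantiated with the smaller threshold $\frac{x}{64}$ that the overview in Section~2 mentions, or you should switch to the paper's positive application of compactness to $B$ itself. (To be fair, the paper's own constants here are not beyond reproach either --- it invokes $\varepsilon=\frac56$ where the definition supplies $\frac34$ --- but its proof strategy is the one that makes the target constant reachable.)
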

\begin{proof}
The lemma follows from the connectivity properties of the graph $\cH$. We say that a super-process becomes \textit{inactive} whenever its members set the variable $\texttt{is\_active}$ to \texttt{false}. Observe that this definition is consistent since the variable is always an output of the $\alpha$-\textsc{BiasedConsensus} algorithm.

Let $S$ be the set of non-faulty super-processes. Because adversary can crash at most $\frac{1}{10}$ processes, thus $|S| > \frac{6}{10}x$. First, we see that every super-process belonging $S$ starts \textsc{Phase\_2} with $\texttt{is\_active}$ being set to $1$, since during the entire execution it has at least $\frac{3}{4}$ fraction of non-faulty processes. The compactness property of $\cH$ ensures that there exists a survival set $C \subset S$, $|C| > \frac{5}{6}|S| = \frac{1}{2}x$. Because each super-process of $C$ has at least $\delta_{x}$ neighbors in $C$ (i.e. other super-processes connected with it by an edge in $C$), thus members of super-processes from $C$ receive at least $\delta_{x}$ different rumors when they execute Fault-tolerant Gossip algorithm in line~\ref{alg:phase-2:cheap-gossip}, in each iteration of the main loop. Therefore, every super-process from $C$ survives \textsc{Phase\_2}. 
\end{proof}

\begin{lemma} \label{lem:phase-2}
The \textsc{Phase\_2} part of the $\textsc{ParameterizedConsensus}$ algorithm takes $\logO(\sqrt{n / x})$ rounds and uses $\logO(n\sqrt{nx})$ bits whp.
\end{lemma}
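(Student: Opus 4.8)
The plan is to bound the two complexity measures separately, and in each case account for two distinct types of communication occurring in \textsc{Phase\_2}: the calls to \Gossip\ in line~\ref{alg:phase-2:cheap-gossip} and the calls to $\frac{2}{3}\textsc{-BiasedConsensus}$ (both the initialization one and the one inside the main loop). The main loop runs for $\gamma_x = \Theta(\log x)$ iterations, so the per-iteration costs will be multiplied by a polylogarithmic factor, which is harmless under the $\logO(\cdot)$ notation.

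For the round complexity, I would observe that each iteration of the main loop does two things sequentially. First, for each of the at most $\delta_x = O(\log x)$ neighbors in $\cH$, it invokes \Gossip\ on the union of two super-processes, which has at most $2\frac{n}{x}\le n$ vertices; by Theorem~\ref{thm:cheap2-gossip} each such call takes $\logO(1)$ rounds, and since these $\delta_x$ calls can be pipelined or simply run in sequence, the total is still $\logO(1)$ rounds. Second, the single call to $\frac{2}{3}\textsc{-BiasedConsensus}$ on a super-process of $\frac{n}{x}$ members runs, by Theorem~\ref{thm:stable-cons} applied with group size $\frac{n}{x}$ (and the worst case $f = \Theta(n/x)$ crashes within the group), in $\logO(\sqrt{n/x})$ rounds whp. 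Adding these over the $\gamma_x = \logO(1)$ iterations, and applying a union bound over all iterations and all super-processes (a polynomial number of whp events), gives the claimed $\logO(\sqrt{n/x})$ round bound whp.

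For the bit complexity I would fix a single super-process $SP_{[p]}$ and total its contribution. Per iteration, each member participates in $\delta_x = O(\log x)$ calls to \Gossip, each on a group of size at most $2\frac{n}{x}$; by Theorem~\ref{thm:cheap2-gossip}, \Gossip\ uses $\logO(|\cR|)=\logO(1)$ amortized bits per process (the rumors here are process names, so $|\cR|=O(\log n)$), contributing $\frac{n}{x}\cdot\delta_x\cdot\logO(1)=\logO(n/x)$ bits to that super-process per iteration. The call to $\frac{2}{3}\textsc{-BiasedConsensus}$ on $\frac{n}{x}$ members uses, by Theorem~\ref{thm:stable-cons}, $\logO(\sqrt{n/x})$ amortized bits per member, hence $\frac{n}{x}\cdot\logO(\sqrt{n/x})=\logO((n/x)^{3/2})$ bits per super-process per iteration. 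Multiplying by $\gamma_x=\logO(1)$ iterations and summing over all $x$ super-processes yields $x\cdot\logO((n/x)^{3/2})=\logO(x\cdot(n/x)^{3/2})=\logO(n\sqrt{n/x})$ total bits from the \textsc{BiasedConsensus} calls, which dominates the $\logO(n)$ total from the \Gossip\ calls. Noting $n\sqrt{n/x}=\sqrt{n^2\cdot n/x}=\sqrt{n^3/x}$ and $n\sqrt{nx}=\sqrt{n^3 x}$, I would double-check the stated bound $\logO(n\sqrt{nx})$; since $\sqrt{n^3/x}\le\sqrt{n^3 x}$ for $x\ge 1$, the dominating term is safely absorbed into the claimed $\logO(n\sqrt{nx})$ (indeed the true leading term is the slightly smaller $\logO(n\sqrt{n/x})$, and the stated bound is a valid, if loose, upper bound).

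The main obstacle I anticipate is being careful about what ``$n$'' and ``$f$'' denote when Theorems~\ref{thm:stable-cons} and~\ref{thm:cheap2-gossip} are applied to \emph{sub}-groups rather than the full system: \textsc{BiasedConsensus} is invoked on super-processes of size $\frac{n}{x}$, so its guarantees must be re-read with the group size substituted, and one must verify that the within-group failure count is at most a constant fraction of $\frac{n}{x}$ so that the $\logO(\sqrt{n/x})$ bound genuinely applies. Because the global adversary crashes at most $\frac{n}{10}$ processes, a non-faulty super-process retains at least $\frac{3}{4}$ of its members, so the effective failure fraction inside a non-faulty super-process is bounded by a constant, making Theorem~\ref{thm:stable-cons} applicable with the group size $\frac{n}{x}$. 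The remaining work is the routine union bound over the polynomially many whp events to convert the per-call guarantees into a single whp statement for the whole phase.
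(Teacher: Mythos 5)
Your proposal is correct and follows essentially the same route as the paper's proof: both decompose the cost into the \Gossip{} calls and the $\textsc{BiasedConsensus}$ calls, invoke Theorems~\ref{thm:cheap2-gossip} and~\ref{thm:stable-cons} on groups of size $O(n/x)$, multiply by the $\logO(1)$ iterations and $x$ super-processes, and arrive at the dominating term $\logO(n\sqrt{n/x})$, which (as the paper also implicitly does) is absorbed into the looser stated bound $\logO(n\sqrt{nx})$.
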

\begin{proof}
We separately calculate running time of each sub-algorithm used in $\textsc{Phase\_2}$. According to Theorem~\ref{thm:stable-cons} each run of the $\alpha$-\textsc{BiasedConsensus} algorithm on a group consisting of members of a single super-process lasts $\logO(\sqrt{n/x})$ whp. The \Gossip\ algorithm is executed on a group of processes that has size $2(n / x)$ and from Theorem~\ref{thm:cheap2-gossip} we conclude that this single execution has running time $\logO(1)$. Since in \textsc{Phase\_2} we repeat the aforementioned subroutines $\delta_{x}$ times we get that total running time is $\logO(\sqrt{n / x})$ whp.s

Each execution of the $\alpha$-\textsc{BiasedConsensus} algorithm costs $\logO(\frac{n}{x}\sqrt{\frac{n}{x}})$ bits whp. Members of a single super-process execute $\gamma_{x} + 1 = 2\log(x) + 1$ instances of the $\alpha$-$\textsc{BiasedConsensus}$ algorithm. Since there is $x$ super-processes in total, thus the $\textsc{Phase\_2}$ algorithm uses $\logO(n\sqrt{n / x})$ bits for evokes of the $\alpha-$Biased Consensus algorithm. The other communication bits processes generate only by participating in the \Gossip\ algorithm with members of neighbors of its super-process, c.f executing line~\ref{alg:phase-2:cheap-gossip}. Members of a single super-process participate in $\delta_{x}$ parallel executions of Gossip in a single iteration, since this is the degree of every vertex in $\cH$. Each execution of the \Gossip\ algorithm uses $\logO(\frac{n}{x}|\cR|)$ bits according to Theorem~\ref{thm:cheap2-gossip}, where $|\cR|$ denotes the number of bits needed to encoded all initial rumors. However, in our case there are two initial rumors of size $O(\log{n})$ - the two identifiers of the super-processes sharing an edge in $\cH$. 
Since we have only $\gamma_{x}$ iterations and $x$ super-processes, we get the total number of bits used for the evokes of \Gossip\ algorithm is $\logO(n)$. Therefore, the number of bits used by processes in \textsc{Phase\_2} is $\logO(n\sqrt{n / x} + n) = \logO(n\sqrt{nx})$ whp, as claimed.
\end{proof}

\subsection{Analysis of algorithm~\textsc{ParameterizedConsensus}}
\label{sec:param-consensus-final}

\begin{lemma} \label{lemma:super-are-same}
The value \texttt{candidate\_value} is the same among all members of super-processes that survived \textsc{Phase\_2}.
\end{lemma}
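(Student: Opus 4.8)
The plan is to show that the decision is frozen after Phase~1 and that all surviving super-processes occupy a single value-homogeneous region of $\cH$. First I would record that \texttt{candidate\_value} is the return value of \textsc{Phase\_1} and is never modified inside \textsc{Phase\_2} (which manipulates only \texttt{is\_active}, \texttt{survived} and \texttt{many\_superprocesses}); hence it is fixed once Phase~2 begins. By Lemma~\ref{lem:phase-1-same-return} all members of one super-process agree on \texttt{candidate\_value}, so it suffices to prove that all \emph{surviving super-processes} share one value. Finally, by Lemma~\ref{lem:stable_groups}, any two non-faulty super-processes in the same connected component of the subgraph of $\cH$ induced by non-faulty super-processes carry equal \texttt{candidate\_value}; thus the whole statement reduces to the claim that \textbf{all surviving super-processes belong to one connected component of non-faulty super-processes.}

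To prove this I would establish two facts. \emph{(i) Every survivor sits in a large component.} Fix a survivor $v$. Because $v$ stayed active through all $\gamma_{x}=2\log x$ stages, in every stage it saw more than $\delta_{x}$ active neighbours. I would turn this into a $(\gamma_{x},\delta_{x})$-dense-neighbourhood for $v$: include every super-process $u$ reachable from $v$ by a path $v=u_{0},u_{1},\dots,u_{m}=u$ with $m\le\gamma_{x}$ and $u_{i}$ active at stage $\gamma_{x}-i$. The per-stage survival test at each interior $u_{i}$ supplies more than $\delta_{x}$ admissible continuations, so the resulting set $S$ meets the dense-neighbourhood condition and $S\subseteq N^{\gamma_{x}}_{\cH}(v)$. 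Since $\cH=G(x,x/3,\delta_{x},\gamma_{x})$ is $(x/192,\delta_{x}/8,\delta_{x}/4)$-edge-dense (Theorem~\ref{theorem:overlay-graphs-exist} with $k=x/3$) and $\gamma_{x}\ge 2\lg x$, Lemma~\ref{lemma:sparse-to-dense} gives $|S|\ge x/192$. As $S$ is connected to $v$ inside $\cH$ through these super-processes, $v$'s component has at least $x/192$ vertices.

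\emph{(ii) There is at most one large component.} Here I would use that $\cH$ is $(x/192)$-expanding: any two disjoint vertex sets of size $x/192$ are joined by an edge. If two \emph{distinct} components of non-faulty super-processes each had at least $x/192$ vertices, the guaranteed edge would connect a non-faulty super-process of one to a non-faulty super-process of the other, placing them in the same component --- a contradiction. Combining (i) and (ii): every survivor lies in a component of size $\ge x/192$, and only one such component exists, so all survivors share this component and hence, by Lemma~\ref{lem:stable_groups}, one \texttt{candidate\_value}, which by the first reduction equals the common value of all their members.

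I expect the main obstacle to be reconciling ``active during some stage of Phase~2'' with ``non-faulty at the very end of the algorithm,'' the notion used in Lemmas~\ref{lem:stable_groups} and~\ref{lem:fraction-survives}. The dense-neighbourhood $S$ in step (i) is populated by super-processes that were active at earlier stages, and I must certify that these retain the $\frac{3}{4}$ fraction of alive members needed to be genuine non-faulty vertices of the component (equivalently, recast steps (i)--(ii) on the graph induced by super-processes active at the start of Phase~2 and then transfer the conclusion back, using that crashes are monotone). This is precisely where the global crash budget $f<\frac{n}{10}$ and the compactness parameters of $\cH$ must be pinned down; once the component in step (i) is certified non-faulty, the expansion argument in step (ii) and the reduction go through verbatim.
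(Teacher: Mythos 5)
Your proposal follows the same skeleton as the paper's proof: freeze \texttt{candidate\_value} after Phase~1, reduce via Lemma~\ref{lem:stable_groups} to showing that all survivors of Phase~2 lie in a single connected component, obtain for each survivor a large connected set by unrolling the per-stage survival test into a $(\gamma_{x},\delta_{x})$-dense-neighbourhood (this is exactly the ``reasoning similar to Lemma~\ref{lemma:probing}'' that the paper invokes), and then argue that only one such large component can exist. The one genuinely different sub-step is uniqueness: you lower-bound the survivor's component by $x/192$ via edge-density (Lemma~\ref{lemma:sparse-to-dense}) and rule out a second component using the $(x/192)$-expansion of $\cH$, whereas the written proof instead extracts a surviving component of size at least $\frac{1}{2}x$ from compactness (essentially Lemma~\ref{lem:fraction-survives}) and concludes by pigeonhole, two disjoint components of size exceeding $\frac{1}{2}x$ being impossible. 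Your variant is in fact the one the authors sketch in their high-level overview of Phase~3, and it does not lean on the $f<\frac{n}{10}$ budget for this step; the paper's variant buys a cleaner constant. The obstacle you flag at the end --- that the dense-neighbourhood is populated by super-processes that were merely \emph{active} at intermediate stages of Phase~2, while Lemma~\ref{lem:stable_groups} is stated for super-processes that are \emph{non-faulty}, i.e.\ retain a $\frac{3}{4}$ fraction of members at the end of the whole execution --- is real, but the paper's own proof elides it in the same way (it asserts that the component ``consists of survived super-processes'' and only observes that survivors are non-faulty at the end of Phase~1). Closing it amounts to re-stating Lemma~\ref{lem:stable_groups} for super-processes that keep a $\frac{3}{4}$ fraction of members through the end of Phase~1, which is all its proof actually consumes; so your write-up is no less complete than the original on this point, and with that patch your argument goes through.
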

\begin{proof}
If a super-process $SP_{i}$ has survived \textsc{Phase\_2}., than it had been continuously communicating with at least $\delta_{x}$ other super-processes for at least $\gamma_{x}$ stages. From the similar reasoning to this in proof of Lemma~\ref{lemma:probing} and the choice of $\cH$ to be $(x, \frac{3}{4}, \gamma_{x})$-compact, we conclude that $SP_{i}$ must belong to a connected component in $\cH$ consisting of survived super-processes of size $\frac{1}{2}x$ at least in the end of \textsc{Phase\_2}. Observe, that every super-processes that survived must be a non-faulty processes in the end of \textsc{Phase\_1}, c.f. line~\ref{alg:phase-2:is-active-ini} of \textsc{Phase\_2}.
According to Lemma~\ref{lem:stable_groups}, members of all super-processes belonging to the same connected component of non-faulty super-processes in $\cH$ share the same $\texttt{candidate\_value}$. Since there can be at most one connected component of non-faulty super-processes of size $ > \frac{1}{2}x$, we see that all members of survived super-processes have the same value of the variable $\texttt{candidate\_value}$.

\end{proof}

\begin{lemma} \label{lem:param-3-cond}
The algorithm~\textsc{ParameterizedConsensus} satisfies validity, agreement and termination conditions.
\end{lemma}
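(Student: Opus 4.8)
The plan is to verify the three consensus requirements by combining the structural lemmas already established for the individual phases with the correctness of the \Gossip\ subroutine, and to obtain the probability-$1$ guarantee through the deterministic fallback described in Section~\ref{sec:biased-consensus-short}. First I would condition on the high-probability event $\cE$ that every invocation of \textsc{BiasedConsensus} and of \Gossip\ succeeds; since there are only polynomially many such invocations, $\cE$ holds whp by a union bound. Under $\neg\cE$ some process fails to decide inside a \textsc{BiasedConsensus} call, raises the alarm, and the whole system switches to the deterministic $O(n)$-round algorithm of~\cite{CK-PODC09}, which satisfies validity, agreement and termination deterministically. Hence it suffices to establish the three properties on $\cE$, after which they hold with probability~$1$.

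For termination I would first note that each phase runs for a bounded number of rounds (Lemmas~\ref{lem:phase-1} and~\ref{lem:phase-2} and Theorem~\ref{thm:cheap2-gossip}), so every non-faulty process reaches the final decision step. The one thing to check is that the set \texttt{CandidatesValues} returned by the Phase~3 \Gossip\ contains a value different from $-1$ at each non-faulty process. By Lemma~\ref{lem:fraction-survives} at least $\tfrac12 x$ super-processes are non-faulty and survive, so at least one non-faulty process sets \texttt{confirmed}$=1$ and enters \Gossip\ with a rumor in $\{0,1\}$; by the correctness of \Gossip\ (Theorem~\ref{thm:cheap2-gossip}) every non-faulty process learns that rumor, so the selection of a value $\neq -1$ is well defined at each non-faulty process.

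For agreement, the crux is to show that exactly one non-null value is ever present in the Phase~3 gossip. Since \texttt{is\_active} (hence \texttt{confirmed}) is the output of a \textsc{BiasedConsensus} executed within each super-process, all members of a super-process carry the same \texttt{confirmed} bit, so the processes entering \Gossip\ with a non-null rumor are exactly the members of the surviving super-processes. By Lemma~\ref{lemma:super-are-same} all these members share one common value $v^*$ of \texttt{candidate\_value}, so every non-null rumor equals $v^*$ while every remaining process gossips $-1$. Consequently, for each non-faulty process \texttt{CandidatesValues} contains $v^*$ and no value other than $v^*$ and $-1$ (even a process crashing mid-\Gossip\ had a rumor in $\{v^*,-1\}$), which forces \texttt{decision\_value}$=v^*$ at every non-faulty process.

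For validity, since inputs are binary it suffices to treat the uniform-input cases. If all inputs are $0$, validity of \textsc{BiasedConsensus} in line~\ref{alg:phase-1:zero-stable} makes every \texttt{candidate\_value} equal $0$, the flooding guard (\texttt{is\_active} and \texttt{candidate\_value}$=1$) is never satisfied, so $v^*=0$ and the decision is $0$; if all inputs are $1$, the same validity forces \texttt{candidate\_value}$=1$ already after line~\ref{alg:phase-1:zero-stable} and it is preserved through line~\ref{alg:phase-1:first-stable} and the final call, so $v^*=1$ and the decision is $1$. (More generally, a \texttt{candidate\_value} of $1$ arises only from a \textsc{BiasedConsensus} output or from flooding, both of which trace back, by validity of \textsc{BiasedConsensus} and the fact that flooding only forwards an already present $1$, to some input equal to $1$; and a decision of $0$ excludes the all-$1$ case, hence some input is $0$.) I expect the agreement step to be the main obstacle, precisely because it requires pinning down that all survivors form a single \texttt{candidate\_value} class and that no stray non-null rumor from a process crashing during \Gossip\ can contaminate another process's decision; this is exactly where Lemmas~\ref{lemma:super-are-same} and~\ref{lem:fraction-survives}, together with the crash-tolerant guarantee of \Gossip, are indispensable.
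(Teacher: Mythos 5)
Your proposal is correct and follows essentially the same route as the paper's proof: agreement via Lemmas~\ref{lemma:super-are-same} and~\ref{lem:fraction-survives} to show that only one non-null rumor (held by a surviving constant fraction of processes) enters the Phase~3 \Gossip, correctness of \Gossip{} from Theorem~\ref{thm:cheap2-gossip}, and termination from Lemmas~\ref{lem:phase-1} and~\ref{lem:phase-2} together with the Monte Carlo/deterministic-fallback mechanism. Your treatment of validity and of the well-definedness of the final selection step is somewhat more explicit than the paper's, but the underlying argument is the same.
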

\begin{proof}
The validity conditions follows from the fact, that processes always manipulate only the values that were given to them as in input.

For the agreement condition, we first observe that by Lemma~\ref{lemma:super-are-same}, all members of super-processes that survived share the same value of the \texttt{candidate\_value} variable. Also, we observe that only members of survived super-processes feed the execution of the \Gossip\ algorithm with an initial rumor different than $-1$. In particular, we have that the \Gossip\ in lines~\ref{alg:param:conf-gossip} and~\ref{alg:param:un-conf-gossip} (Phase 3) is executed with at most two different initial rumors, $-1$ and $\texttt{candidate\_value}$ of members of survived super-processes. By Lemma~\ref{lem:fraction-survives}, a fraction of at least $\frac{1}{2}$ super-processes survived. The 
total number of members belonging to this set is $\frac{1}{2}\cdot\frac{n}{x}\cdot x = \frac{1}{2}n$. Since, no more than $\frac{1}{10}n$ processes crash in the course of \textit{the whole} execution, at 
least $\frac{1}{2}n - \frac{1}{10}n = \frac{2}{5}n > 0$ will be non-faulty in the end of \Gossip\ algorithm execute in line~\ref{alg:param:un-conf-gossip} of the main algorithm. By Theorem~\ref{thm:cheap2-gossip}, we conclude that every non-faulty process learns the value of $\texttt{candidate\_value}$ that was the input to this \Gossip\ algorithm. This gives the agreement condition.

The termination follows immediately, given the fact that \textsc{Phase\_1} and \textsc{Phase\_2} terminate with probability $1$,\footnote{Recall here, that we obtained the probability $1$ of termination by applying the Monte Carlo version of \textsc{BiasedConsensus} algorithm.} by Lemma~\ref{lem:phase-1} and Lemma~\ref{lem:phase-2}. The \Gossip\ algorithm is deterministic and by Theorem~\ref{thm:cheap2-gossip} it terminates in $\logO(1)$ rounds.%
\end{proof}

\begin{theorem}
\label{thm:param-restricted}
For any $x \in [1,n]$ and any number of crashes $f<\frac{n}{10}$, \ParamConsensus\ solves Consensus with probability~$1$,
in $O(\sqrt{nx}\; \polylog{n})$ time
and $O(\sqrt{\frac{n}{x}}\; \polylog{n})$ amortized bit communication~complexity, whp, using $O(\sqrt{\frac{n}{x}}\; \polylog{n})$ random bits per process.
\end{theorem}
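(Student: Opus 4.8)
The plan is to read this as the assembly theorem that bundles together the three per-phase results already in hand: \emph{correctness} from Lemma~\ref{lem:param-3-cond}, and the \emph{time and communication budgets} from Lemmas~\ref{lem:phase-1} and~\ref{lem:phase-2} together with the cost of the single Phase~3 \Gossip{} call (Theorem~\ref{thm:cheap2-gossip}). First I would dispatch correctness: Lemma~\ref{lem:param-3-cond} already shows that \ParamConsensus{} satisfies validity, agreement and termination with probability~$1$ (the probability~$1$ coming from the deterministic fallback triggered when the Monte~Carlo \textsc{BiasedConsensus} fails to decide), so for the qualitative part of the statement nothing beyond invoking it is required.

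For the \emph{time} bound I would simply add the phases. Phase~1 runs in $\logO(\sqrt{xn})$ rounds and Phase~2 in $\logO(\sqrt{n/x})$ rounds, each whp (Lemmas~\ref{lem:phase-1},~\ref{lem:phase-2}), while the single Phase~3 \Gossip{} on all $n$ processes takes $\logO(1)$ rounds (Theorem~\ref{thm:cheap2-gossip}); the sum is dominated by Phase~1, giving $\logO(\sqrt{xn})$. Since only polynomially many subroutine invocations carry a whp guarantee, a union bound, justified by the observation in the preliminaries that a polynomial number of whp events jointly occur whp, keeps the overall time bound whp. For \emph{communication} I would again sum per phase and then amortize: Phases~1 and~2 together send $\logO(n\sqrt{n/x})$ bits whp (Lemmas~\ref{lem:phase-1},~\ref{lem:phase-2}), and the Phase~3 \Gossip{}, carrying $O(1)$-bit rumors (the candidate value or $-1$), contributes $\logO(n)$ bits by Theorem~\ref{thm:cheap2-gossip}. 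Dividing the total $\logO(n\sqrt{n/x})$ by $n$ yields the claimed amortized $\logO(\sqrt{n/x})$ bits per process.

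The randomness bound is where I would be most careful, and I expect it to be the only genuinely non-mechanical step. Randomness is consumed solely inside \textsc{BiasedConsensus}, and each such call on a super-process of $n/x$ members uses $\logO(\sqrt{n/x})$ random bits per process (at most one per round over its $\logO(\sqrt{n/x})$ rounds, Theorem~\ref{thm:stable-cons}); \Gossip{} is deterministic. The subtlety is the \emph{number} of calls a process genuinely participates in: a naive count over the $x+1$ iterations of Phase~1 would give $\logO(\sqrt{xn})$ random bits and miss the target. Here I would invoke the sharper counting already inside the proof of Lemma~\ref{lem:phase-1}, namely that a super-process re-enters line~\ref{alg:phase-1:first-stable} at most $\delta_x = O(\log x)$ times (once per neighbour in $\cH$), and that Phase~2 runs only $\gamma_x + 1 = O(\log x)$ \textsc{BiasedConsensus} instances per super-process. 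Hence each process takes part in $O(\log x) = \logO(1)$ calls, for a total of $\logO(\sqrt{n/x})$ random bits. Finally I would note that the deterministic fallback, which alone could cost $\Theta(n)$ bits and $\Theta(n)$ rounds, fires only with polynomially small probability, so it affects neither the whp time and communication bounds nor the expected complexities, which completes the proof.
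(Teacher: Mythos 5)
Your proposal is correct and follows essentially the same route as the paper's proof: correctness via Lemma~\ref{lem:param-3-cond}, time and communication by summing Lemmas~\ref{lem:phase-1} and~\ref{lem:phase-2} with the Phase~3 \Gossip{} cost from Theorem~\ref{thm:cheap2-gossip}, and the randomness bound by counting $O(\log x)$ \textsc{BiasedConsensus} calls per super-process at $\logO(\sqrt{n/x})$ random bits each. Your explicit remarks on the union bound, the sharper call-count needed to avoid the naive $\logO(\sqrt{xn})$ randomness estimate, and the low-probability deterministic fallback are all consistent with (and slightly more careful than) the paper's own write-up.
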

\begin{proof}
By Lemma~\ref{lem:param-3-cond} we already know that the \textsc{ParameterizedConsensus} algorithm is a solution to the Consensus problem.

By Lemma~\ref{lem:phase-1} and Lemma~\ref{lem:phase-2} we get the time and bit complexity of $\textsc{Phase\_1}$ and $\textsc{Phase\_1}$. By Theorem~\ref{thm:cheap2-gossip}, we have that a single execution of a \Gossip\ algorithm takes $\logO(1)$ rounds and $\logO(1)$ communication bits amortized per process, given that there can be only two different rumors of size $\logO(1)$ each, as we argued in Lemma~\ref{lem:param-3-cond}. These bounds together give us the desired complexity of the \textsc{ParameterizedConsensus} algorithm.

A single run of the $\alpha$-\textsc{BiasedConsensus} algorithm on members of a super-processes generates $\logO(\frac{n}{x}\sqrt{\frac{n}{x}})$ random bits, since each member generates at most one random bit per every round of the algorithm, see Section~\ref{sec:biased-consensus-short}. Since, the processes execute at most $\logO{x}$ runs of the $\alpha$-\textsc{BiasedConsensus} algorithm, thus the total number of random bits used is $\logO(n\sqrt{\frac{n}{x}})$ which implies $\logO(\sqrt{\frac{n}{x}})$ amortized random bit complexity.
\end{proof}

\subsection{Generalization to any number of failures.}
\label{sec:param-generalized}

In this subsection we highlight main ideas that generalize the \textsc{ParameterizedConsensus} algorithm to work in the presence of {\em any} number of crashes $f < n$. We call the resulting algorithm \ParamConsensus$^*$.
We exploit the concept of epochs in a similar way to~\cite{Bar-JosephB98,CK-PODC09}. In short, the first and main epoch (in our case, \ParamConsensus\ followed by {\sc BiasedConsensus} described in Section~\ref{sec:biased-consensus-short}) is repeated $O(\log n)$ times, each time adjusting expansion/density/probability parameters by factor equal to~$\frac{9}{10}$. The complexities of the resulting algorithm are multiplied by logarithmic factor. More details are given below.

Consider a run of the \textsc{ParameterizedConsensus} algorithm, as described and analyzed in previous sub-sections. Let us analyze the state of the system at the end of \textsc{ParameterizedConsensus} algorithm if more than $\frac{n}{10}$ crashes have occurred. In the end, there exist two group of processes, those that have $\texttt{decision\_value}$ set to $-1$ (i.e., the last \Gossip\ has not been successful in their case), and those who have $\texttt{decision\_value}$ set to a value from $\{0, 1\}$. Observe, that if at most $\frac{n}{10}$ processes were faulty, then we already proved in Theorem~\ref{thm:param-restricted} that the first of these sets would be empty and there could be only one value in $\{0,1\}$ taken by alive processes. Thus, we can extend the run of the \textsc{ParameterizedConsensus} by an execution of $\frac{1}{2}$\textsc{-BiasedConsensus} among members of each super-processes, separately for different super-processes, to make them agree if there exists a member of the super-process who had received a null value in the last \Gossip\ execution. 
A single run of \textsc{ParameterizedConsensus} followed by the run of $\frac{1}{2}$-\textsc{BiasedConsensus} is called an \textit{epoch}.
Based on the 
output of the $\frac{1}{2}$\textsc{-BiasedConsensus}, the members of each super-process decide whether they keep the agreed candidate value as decision final value and stay idle in the next epoch, or they continue to the next epoch.
There are three key properties here. First, because the decision of entering next epoch is made based on an output to Biased Consensus, it is consistent among members of a single super-process. Second, in the good scenario, i.e., when only less than $\frac{n}{10}$ processes crashed, every process will start the run of the $\frac{1}{2}$\textsc{-BiasedConsensus} with the same value, yet different than a null-value. From validity condition, all processes stay idle.
Third, a non-faulty super-process at the end of Phase~2 actually implies that there was a majority of non-faulty other super-processes in its $O(\log n)$ neighborhood, regardless of the number of failures (c.f., Lemma~\ref{lemma:probing} -- thus, only one value in $\{0,1\}$ can be confirmed in the whole system as long as at least one process remains alive, whp.

In the next epoch, super-processes that are not idle, repeat the 
\textsc{ParameterizedConsensus} algorithm, but tune its parameters to adjust to the larger number of crashes (i.e., smaller fraction of alive processes). They use: 
\begin{itemize}
\item
a graph $\cH_{1}$, instead of $\cH$, which is roughly $\frac{10}{9}$ denser (i.e a graph $G()$) compared to graph $\cH$ used in the previous epoch, 
\item
new threshold $\alpha_{1} := \frac{2}{3} \cdot \frac{9}{10}$ for evoking  \textsc{BiasedConsensus} algorithm, 
\item
they loose the parameter in the definition of a non-faulty super-process by a factor of $9/10$. 
\end{itemize}
In general, processes repeats this process of 'densification' in subsequent $\Theta(\log{n})$ epochs. Eventually, one of this epochs must be successful, otherwise the number of crashed process would exceed $n / (1 / 10)^{\Theta(n)} > n$. On the other hand, each time we 'densify' graph $\cH$, i.e., we take an overlay graph $\cH_i$ from the family of overlay graphs as defined in Section~\ref{sub:overlay} but with expansion and density parameters adjusted by factor $\left( \frac{9}{10} \right)^i$,
we are guaranteed that only a fraction of previously alive processes execute the next epoch. As density and expansion parameters in the family of overlay graphs are inversely proportional, 
we conclude
that in each epoch the amortized bit complexity stays at the same level of $O(\sqrt{\frac{n}{x}})$. Therefore, in cost of multiplying both, the time complexity and the amortized bit complexity by a factor of $\Theta(\log{n})$, we are able to claim Theorem~\ref{thm:param}.

\begin{reptheorem}{thm:param}[Strengthened Theorem~\ref{thm:param-restricted}]
For any $x \in [1,n]$ and the number of crashes $f<n$, \ParamConsensus$^*$ 
solves Consensus 
with probability~$1$,
in $O(\sqrt{nx}\; \polylog{n})$ 
time
and $O(\sqrt{\frac{n}{x}}\; \polylog{n})$ 
amortized bit communication~complexity, whp, using $O(\sqrt{\frac{n}{x}}\; \polylog{n})$ random bits per process.
\end{reptheorem}


\section{Randomized $\alpha$-Biased Consensus} 
\label{sec:rand-consensus}
The $\alpha$-$\textsc{BiasedConsensus}$ algorithm generalizes and improves the $\textsc{SynRan}$ algorithm of Bar-Joseph and Ben-Or~\cite{Bar-JosephB98}.  For this part, we purposely use the same notation as in~\cite{Bar-JosephB98} for the ease of comparison.

First, processes run Fuzzy Counting (i.e. use the \textsc{FuzzyCounting} algorithm from Section~\ref{sec:goss-fuz-cnt-short}) where the set of active processes consists of this processes which the input value to the $\alpha$-Biased Consensus is $1$. Then, each process calculates logical $AND$ of the two values: its initial value and $\texttt{ones} \ge \alpha \cdot n$, where $\texttt{ones}$ is the number of $1$'s output by the Fuzzy Counting algorithm. Denote $x_{p}$ the output of the logical $AND$ calculated by process $p$.

In the following processes solves an $\alpha$-Biased Consensus on $x_p$. Each process $p$ starts by setting its current choice $b_{p}$ to $x_{p}$. The value $b_{p}$ in the end of the algorithm indicates $p$'s decision. Now, processes use $O(f / \sqrt{n\log{n}})$ \textit{phases} to update their values $b_{p}$ such that eventually every process keeps the same choice. To do so, in a round $r$ every process $p$ calculates the number of processes that current choice is $1$ and the number of processes that current choice is $0$, denoted $O_{p}^{r}$ and $Z_{p}^{r}$ respectively. Based on these numbers, process $p$ either sets $b_{p}$ to $1$, if the number $O_{p}^{r}$ is large enough; or it sets $b_{p}$ to $0$, if the number $Z_{p}^{r}$ is large; or it replaces $b_{p}$ with a random bit, if the number of zeros and ones are close to each other.  
In Bar-Joseph's and Ben-Or's the numbers $Z_{p}^{r}$ and $O_{p}^{r}$ were calculate in a single round all-to-all of communication. However, we observed that because processes' crashes may affect this calculation process in almost arbitrary way, this step can be replaced by any solution to Fuzzy Counting. That holds, because Fuzzy Counting exactly captures the necessary conditions that processes must fulfill to simulate the all-to-all communication, that is it guarantees that candidate values of non-faulty processes are included in the numbers $O_{p}^{r}$ and $Z_{p}^{r}$ calculated by any processor $p$. Thus, rather than using all-to-all communication, our algorithms utilizes the effective \textsc{FuzzyCounting} algorithm where active processes are those who have their current choice equal $1$. The output of this algorithm serves as the number $O_{p}^{r}$, while the number $Z_{p}^{r}$ is just $n - O_{p}^{r}$. For the sake of completeness, we also provide the pseudocode of the algorithm. We conclude the above algorithm in the Theorem~\ref{thm:stable-cons}.

\begin{algorithm}[h]\label{fig:group-cons}
\SetAlgoLined
\SetKwInput{Input}{input}
\SetKwInput{Output}{output}
\Input{$\cP$, $p$, $b_{p}$, $\alpha$}
\Output{a consensus value}
\lIf{\ul{$\textsc{FuzzyCounting}(\cP, p, b_{p}) > \alpha \cdot |\cP|$}}{\ul{$x_{p} \leftarrow b_{p} \ \& \ 1$}}
\lElse{\ul{$x_{p} \leftarrow 0$}}
$r:=1$; $N_{-1}^r = N_0^r = n$; $\texttt{decided} = FALSE$ \;
\While{$TRUE$}
{
	\ul{participate in \textsc{CheapCounting} execution with input bit being set to $b_p$; let $O_p^r$, $Z_p^r$ be the numbers of ones and zeros (resp.) returned by \textsc{CheapCounting}}\;
	$N_p^r = Z_p^r + O_p^r$\;
	\If{$(N_p^r < \sqrt{n/ \log n })$}
	{
	    send $b_p$ to all processes, receive all messages sent to $p$ in round $r+1$\;
	    implement a deterministic protocol for $\sqrt{n/ \log n}$ rounds\;
	}
	\If{$\texttt{decided} = TRUE$}
	{
	    \texttt{diff} $=$ $N_p^{r-3}$ $N_i^r$\;
        \lIf{(\texttt{diff} $\leq N_p^{r-2}/10$)}{STOP}
        \lElse{\texttt{decided} $= FALSE$}
	}
	\lIf{$O_p^r > (7N_p^r-1)/10$}{$b_p = 1$, decided $= TRUE$}
	\lElseIf{$O_p^r > (6N_p^r-1)/10$}{$b_p = 1$}
	\lElseIf{$Z_p^r = 0$}{$b_p = 1$}
	\lElseIf{$O_p^r < (4N_p^r-1)/10$}{$b_p = 0$, decided $= TRUE$}
	\lElseIf{$O_p^r < (5N_p^r-1)/10$}{$b_p = 0$}
	\lElse{set $b_p$ to $0$ or $1$ with equal probability}
	
    $r := r + 1$\;
}

\Return{$b_{p}$}
\caption{$\alpha$-\textsc{BiasedConsensus}. The part in which our algorithm differs from the \textsc{SynRyn} algorithm from~\cite{Bar-JosephB98} algorithm is underlined.}
\end{algorithm}

\begin{reptheorem}{thm:stable-cons}
The $\alpha$\textsc{-BiasedConsensus} algorithm solves $\alpha$-Biased Consensus with probability $1$. The algorithm has expected running time $O(f / \sqrt{n} \cdot \log^{5/2}n)$ and the expected amortized bit complexity $O(f / \sqrt{n} \cdot \log^{13/2}n)$, for any number of crashes $f < n$.
\end{reptheorem}

\noindent Setting $\alpha := \frac{1}{2}$ we get a better randomized solution to classic Consensus problem.
\begin{corollary}
The $\frac{1}{2}$-\textsc{BiasedConsensus} algorithm is a solution to Consensus. The algorithm satisfies agreement and validity with probability 1, has expected running time $O(f / \sqrt{n} \cdot \log^{5/2}n)$, and the expected amortized bit complexity $O(f / \sqrt{n} \cdot \log^{13/2}n)$, for any number of crashes $f < n$.
\end{corollary}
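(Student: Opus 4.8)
The plan is to show that $\alpha$-\textsc{BiasedConsensus} is, up to two surgical modifications, the \textsc{SynRan} algorithm of Bar-Joseph and Ben-Or~\cite{Bar-JosephB98}, and that each modification preserves exactly the properties their analysis relies upon. I would organize the argument around three claims: the biasing property, the correctness (validity, agreement, probability-$1$ termination) inherited from~\cite{Bar-JosephB98}, and the complexity bounds obtained by charging the per-phase cost to \textsc{FuzzyCounting}.

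First I would establish the biasing property. The preprocessing step sets $x_p = b_p \wedge (\texttt{ones} \ge \alpha\cdot |\cP|)$, where $\texttt{ones}$ is the output of \textsc{FuzzyCounting} run on the processes whose input is $1$. By Theorem~\ref{thm:fuzzy-counting}, $\texttt{ones}$ lies between the final and the initial number of such processes; in particular it never exceeds the initial number of $1$-inputs. Hence, if fewer than $\alpha\cdot n$ processes start with $1$, then at every process the test fails and $x_p = 0$, so all processes enter the main loop with choice $0$; validity of the core loop then forces the decision $0$, as required by the definition of $\alpha$-Biased Consensus.

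Second, and this is the crux, I would argue that replacing the single all-to-all counting round of \textsc{SynRan} by calls to \textsc{FuzzyCounting} does not change the correctness analysis. The key observation is that the original analysis of~\cite{Bar-JosephB98} never used the all-to-all exchange beyond the aggregate counts $O_p^r$ and $Z_p^r$, and was already robust to the adversary delivering an \emph{arbitrary} subset of the last messages of processes crashing during that round. Concretely, the only property needed is the \emph{sandwich invariant}: for every process $p$, the count $O_p^r$ (resp.\ $Z_p^r$) is at least the number of processes holding $1$ (resp.\ $0$) that survive the counting step and at most the number holding it at its start -- precisely the guarantee of \textsc{FuzzyCounting}. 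I would therefore restate the two invariants driving the Bar-Joseph--Ben-Or argument and check each against the sandwich invariant: (i) \emph{agreement persistence} -- if all non-faulty processes hold the same value $v$ when a phase begins, then every count is an overwhelming majority for $v$ (all surviving $v$-processes are counted), so the thresholds keep every process at $v$ and, via the \texttt{decided}/\texttt{diff} stabilization test, let them halt on $v$; and (ii) \emph{progress under a coin phase} -- when counts are balanced a process flips a fair coin, and an adaptive adversary can bias a phase only by crashing $\Omega(\sqrt{n\log n})$ processes, so at most $O(f/\sqrt{n\log n})$ phases can be adversarially controlled. The deterministic fallback triggered when $N_p^r < \sqrt{n/\log n}$ guarantees termination once too few processes remain. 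With these invariants re-established, the martingale/valence argument of~\cite{Bar-JosephB98} carries over verbatim and yields validity, agreement, and termination with probability $1$, together with an expected bound of $O(f/\sqrt{n\log n})$ phases.

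Finally I would obtain the complexity by multiplying the phase count by the per-phase cost. Each phase performs a constant number of \textsc{FuzzyCounting} calls, each costing $\logO(1)$ rounds and $\logO(1)$ amortized communication bits by Theorem~\ref{thm:fuzzy-counting}; substituting the explicit factors $O(\log^3 n)$ rounds and $O(\log^7 n)$ bits per process gives expected time $O(f/\sqrt{n\log n})\cdot O(\log^3 n) = O(f/\sqrt{n}\cdot\log^{5/2}n)$ and expected amortized communication $O(f/\sqrt{n\log n})\cdot O(\log^7 n) = O(f/\sqrt{n}\cdot\log^{13/2}n)$, matching the claim. The main obstacle I anticipate is the second step: one must verify that every place where~\cite{Bar-JosephB98} invokes the counts -- the threshold comparisons and the multi-round \texttt{diff} stabilization test used for halting -- depends only on the sandwich invariant and not on any finer structure (such as all processes seeing identical counts, or a particular process's value being counted), since \textsc{FuzzyCounting} deliberately allows different processes to return different numbers.
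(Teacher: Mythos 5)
Your proposal matches the paper's own argument: the paper likewise treats $\alpha$-\textsc{BiasedConsensus} as Bar-Joseph--Ben-Or's \textsc{SynRan} with the all-to-all counting round replaced by \textsc{FuzzyCounting}, justifies correctness by observing that their analysis already tolerates counts sandwiched between the population alive at the start and the population surviving the counting step, and obtains the stated bounds by multiplying the $O(f/\sqrt{n\log n})$ phase count by the $O(\log^3 n)$-round, $O(\log^7 n)$-bit cost of each counting call, then sets $\alpha=\tfrac{1}{2}$. Your explicit isolation of the two invariants (agreement persistence and coin-phase progress) and of the ``sandwich'' property is in fact somewhat more detailed than the paper's prose sketch, which asserts the reduction without restating them.
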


\Paragraph{Monte Carlo version.}
The original algorithm $\alpha$-\textsc{BiasedConsensus} has the expected running time $O(\sqrt{n}\log^{13/2}n)$. However, we can force all processes to stop by that time multiplied by a constant.
In such case, the worst-case running time will be always $\logO(\sqrt{n})$ while the correctness (agreement) will hold only~whp.

\section{Gossip and Fuzzy Counting} 
\label{sec:goss-fuz-cnt-short}

In this section we design and analyze an algorithm, called \Gossip\, which, given a set of processes $\cP$, solves the Gossip problem in $\logO(1)$ rounds and uses $\logO(|\cR|)$ communication bits amortized per process, where $|\cR|$ is the number of bits needed to encode initial rumors of all processes. A small modification of this algorithm will result in a solution to the Fuzzy Counting problem with the same time and only logarithmically larger bit complexity. 

\subsection{Bipartite Gossip}
\label{sec:bipartite-long}

We start by giving a solution to Gossip problem in a special case, called \textit{Bipartite Gossip}, in which processes are partitioned into two groups $\cP_{1}$ and $\cP_{2}$ each of size $\ceil{n / 2}$ at most. Processes starts with at most two different initial rumors $r_{1}$ and $r_{2}$ such that processes of each group share the same initial rumor. The partition and the initial rumor is assumed to be an input to the algorithm. The goal of the system is still to achieve Gossip.

\Paragraph{High level idea of algorithm}\hspace*{-1em} {\sc BipartiteGossip}{\bf .} 
If there were no crashes in the system, it would be enough if processes span a bipartite expanding graph with poly-logarithmic degree on the set of vertices $\cP_{1} \cup \cP_{2}$ and for $\logO(1)$ rounds exchange messages with their initial rumors. In this ideal scenario the $O(\log{n})$ bound on the expander diameter suffices to allow every two process exchange information, while the sparse nature of the expander graphs contributes to the small bit complexity. However, a malicious crash pattern can easily disturb such naive approach. To overcome this, in our algorithm processes will adapt to the number of crashes they estimate in their group, by communicating over denser expander graphs from a family of $\Theta(\log{n})$ expanders: $\cG_{\texttt{in}} = \{G_{\texttt{in}}(0), \ldots,  G_{\texttt{in}}(\log{n}),\}$, every time they observe a significant reduction of non-faulty processes in their neighborhood. 

Initially, processes from $\cP_{1}$ span an expander graph with $O(\log{n})$ degree on the set $\cP_{1}$, denoted $G_{\texttt{in}}(0)$. In the course of execution each process from $\cP_{1}$ will test the number of non-faulty processes in the $O(\log{n})$ neighborhood in $G_{\texttt{in}}(0)$. If the number appeared to be too small, the process will upgrade the expanding graph it uses by doubling its degree, namely it switches to the next graph from the family - $G_{\texttt{in}}(1)$. From now on, this process will use this denser graph for the testing. The ultimate goal of this 'densification' is to enable each process communication with a constant fraction of alived other process from $\cP_{1}$. Note here, that this process of adaptive adjustment to failures pattern happens independently for processes in $\cP_{1}$.  

The communication of processes from $\cP_{1}$ with processes from $\cP_{2}$ is strictly correlated with their estimation of the number of processes being alive in their $O(\log{n})$ neighborhood in part $\cP_{1}$. Initially, a process from $\cP_{1}$ sends its rumor according to other expander graph $G_{\texttt{out}}^{0}$ of degree $O(\log{n})$, the first graph from family of expanders graphs $\cG_{\texttt{out}} = \{G_{\texttt{out}}(0), \ldots,  G_{\texttt{out}}(\log{n})\}$ and each time the process chooses a denser graph from family $\cG_{\texttt{in}}$ it also switches to a denser graph from family $\cG_{\texttt{out}}$. The intuition is that if a process knows that its number of neighbours in $O(\log{n})$ neighborhood has been reduced by a constant factor since it checked it last time, it can afford an increase of its degree in communication with $\cP_{2}$ by the same constant factor, as the amortized message complexity should stay the same.

\Paragraph{Estimating the number of alive processes in $O(\log{n})$ neighborhoods.} In the heart of the above method lies an algorithm, called \textsc{LocalSignaling} that for each process $p$, tests the number of other alive processes in $p$'s neighborhood of radius $O(\log{n})$. As a side result, it also allows to exchange a message with these neighbors. The algorithm takes as in input: a set of all processes in the system $\cP$, an expander-like graph family $\cG = \{G(0), \ldots, G_{t}\}$ spanned on $\cP$, together with two parameters $\delta$ and $\gamma$, describing a diameter and a maximal degree of the base graph $G(0)$; the name of a process $p$; the process' level $\ell$ which denotes which graph from family $\cG$ the process uses to communicate; and the message to convey $r$. Let $\cT$ denote a graph $\cup_{v \in \cP} N_{G_{\ell_{v}}}(v)$, that is a graph with set of vertices corresponding to $\cP$ and set of edges determined based on neighbors of each vertex from a graph on the proper level. Provided that \textsc{LocalSignaling} is executed synchronously on the whole system it returns whether the process $p$ was connected to a constant number of other alived processes at the beginning of the execution accordingly to graph $\cT$. Assumed that, the algorithm guarantees that $p$'s message reached all these processes and vice versa - messages of these processes reached $p$. On the other hand, we will prove that the amortized bit complexity of a synchronous run of the $\textsc{LocalSignaling}$ algorithm is $\logO(n)$. This is the most advanced technical part used in our algorithm. It's full description and detailed analysis is given in Section~\ref{sec:local-signaling-short}.

\Paragraph{\textsc{BipartiteGossip} algorithm and its analysis.}
In this paragraph we give a pseudocode of the \textsc{BipartiteGossip} algorithm which implements the idea discussed before. We start by formal description of utilized graphs and connected to them subroutines.

The graphs used by processs are grouped into two families: $\cG_{\texttt{in}}$ and $\cG_{\texttt{out}}$. Denote $t = \floor{\log{n}}$, $\delta = 2 \log{n}$, $\gamma = 24 \log{n}$. Consider a process $p$; it gets as an input the partition of set $[n]$ into groups $P_{1}$, $P_{2}$, hence it can determine the group it belongs to. %
The family $\cG_{\texttt{in}} = \{G_{\texttt{in}}(0), \ldots, G_{\texttt{in}}(t + 1)\}$ serves for communication {\em inside} each group. 

A single graph $G_{\texttt{in}}(i)$, for $i \in \{0, \ldots, t\}$, is a union of $G(n/2, \frac{n}{3\cdot2^{j}}, \delta, \gamma)$, over $j \in \{0, \ldots, i\}$, of graphs given in the Theorem~\ref{theorem:overlay-graphs-exist} with nodes being the processes in $p'$s group, that is $G_{\texttt{in}}(i) = \bigcup_{j = 0}^{j = i} G(n/2, \frac{n}{3\cdot2^{j}}, \delta, \gamma)$.
Graph $G_{t+1}$ is a clique with nodes being the processes of $p$'s~group.

The family $\cG_{\texttt{out}} = \{G_{\texttt{out}}(0), \ldots, G_{\texttt{out}}(t + 1)\}$ serves for communication {\em outside} each group.
A single graph $G_{\texttt{out}}(i)$, for $i \in \{0, \ldots, t\}$, is a union of $G(n, \frac{2n}{3\cdot2^{j}}, \delta, \gamma)$, over $j \in \{0, \ldots, i\}$, of graphs given in the Theorem~\ref{theorem:overlay-graphs-exist} with nodes being all the processes, that is $G_{\texttt{out}}(i) = \bigcup_{j = 0}^{j = i} G(n, \frac{2n}{3\cdot2^{j}}, \delta, \gamma)$.
Graph $G_{t+1}$ is a clique with nodes being all the processes.

Observe, that those families and parameters $t, \delta, \gamma$ are deterministic and can be precomputed by each process, assumed the knowledge of partition $P_{1}$ and $P_{2}$. As a such, they are assumed to be known to the algorithm on every stage of the algorithm.

\textbf{The} \texttt{Exchange} \textbf{communication scheme for a graph $G$, used in the \textsc{BipartiteGossip} algorithm:} 
This communication scheme takes two rounds. In the first round $p$ sends a message containing a bit and the set $R$, being a set of all learned so far rumors by $p$, to every process in the set $N_{G}(p)$ that is not faulty according to $p$'s view on the system. The receiver treats such a message as both a request and an increment-knowledge message. In the second round, $p$ responses to all the received requests by sending $R$ to each sender of every request received in the previous round.

\begin{algorithm}[H]
\SetAlgoLined
\SetKwInput{Input}{input}
\SetKwInput{Output}{output}
\SetKwFor{RepTimes}{repeat}{times}{}
\Input{partition $\cP_{1}$, $\cP_{2}$; $p$, $r$}
\Output{set $R$ of learned rumors, initially set to $\{r\}$}
\For{$i \leftarrow 1$ \KwTo $2t$}
{
    \RepTimes{3} 
    {
        \label{alg:bp-gossip:second-loop} 
        do \textsc{Exchange} on graph $\cG_{\texttt{out}}(i + 1)$\; \label{alg:bp-gosip:external-com}
        \RepTimes{$2\gamma + 1$}
        {
            \label{alg:bp-gossip:internal-com-loop}
            do \textsc{Exchange} on graph $\cG_{\texttt{in}}(i + 7)$\;
        }
        \RepTimes{t + 2}
        {
            do \textsc{Exchange} on graph $\cG_{\texttt{in}}(i + 2)$\; \label{alg:bp-gossip:internal-local}
            $\texttt{survived} \leftarrow \textsc{LocalSignaling}(p, \cG_{\texttt{in}}, i, \delta, \gamma, R)$\;
            \If{$\texttt{survived} = \texttt{false}$}
            {
                $i \leftarrow \min(i + 1, t + 1)$
            }
        }
    } 
}
\Return{$R$}
\caption{\textsc{BipartiteGossip}}
\end{algorithm}

\Paragraph{Analysis of correctness.}

We call a single iteration of the main loop of the $\textsc{BipartiteGossip}$ algorithm an \textit{epoch}.
First, we show that if in a single epoch a big fraction of processes from the groups $P_{1}$ and $P_{2}$ worked correctly, then 
by the end of the epoch every process has learned both rumors $r_{1}$ and $r_{2}$.
Let $\cE$ be an epoch.
Let $\texttt{BEGIN}_{1}$ ($\texttt{BEGIN}_{2}$) be the set of processes from the group $P_{1}$ (group $P_{2}$ respectively) that were non-faulty before the epoch $\cE$ started. Let $\texttt{END}_{1}$ ($\texttt{END}_{2})$ be the set of those processes from the group $P_{1}$ (group $P_{2}$ respectively) that were non-faulty after the epoch $\cE$ ended. We assume that epoch $\cE$ is such that: 
\begin{quote}
	$|\texttt{END}_{1}| > \frac{1}{3}|\texttt{BEGIN}_{1}|$ and $|\texttt{END}_{2}| > \frac{1}{3}|\texttt{BEGIN}_{2}|$.
\end{quote}

\begin{lemma} \label{lemma:level-lower-bound}
After the first iteration of the loop from line~\ref{alg:bp-gossip:second-loop} in epoch $\cE$, each non-faulty process from the group $P_{1}$ is on level $j_{\p} \ge \log\big(\frac{n}{3\cdot64\cdot|\texttt{BEGIN}_{1}|}\big)$.
\end{lemma}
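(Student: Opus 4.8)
The plan is to prove this level lower bound by contradiction, reducing it to a single structural claim: a non-faulty process $p \in P_1$ can never \emph{survive} a \textsc{LocalSignaling} call while sitting on a level $j$ strictly below the target $j^\star := \log\big(\frac{n}{3\cdot 64\cdot|\texttt{BEGIN}_1|}\big)$. Granting this claim, the lemma follows from the dynamics of the innermost repeat-$(t+2)$ loop (line~\ref{alg:bp-gossip:internal-local} and below): a process' level is non-decreasing within an epoch, and every \textsc{LocalSignaling} call returning \texttt{false} raises $i$ by one (capped at $t+1$). If $p$ stayed below $j^\star$ during the whole first iteration of the repeat-$3$ loop of line~\ref{alg:bp-gossip:second-loop}, then at each of its $t+2$ \textsc{LocalSignaling} calls it would be below $j^\star$, hence fail to survive, hence increment its level. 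But starting from level $\ge 1$ and incrementing on each call, $p$ would reach $j^\star$ after fewer than $t+2$ calls — here one uses $j^\star < \log n \le t+1$, which holds since $|\texttt{BEGIN}_1|\ge 1$, so the cap at $t+1$ does not obstruct reaching $j^\star$ — contradicting the assumption. Thus $j_p \ge j^\star$.

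The heart of the argument is the structural claim, which I would establish from the correctness guarantee of \textsc{LocalSignaling} (Lemma~\ref{lemma:probing}): if $p$ survives at level $j$, then $p$'s alive neighborhood of radius $\gamma$ in the overlay graph active at level $j$ contains a $(\gamma,\delta)$-dense-neighborhood $S$ for $p$ consisting entirely of processes non-faulty at the time of the call. The component realizing level $j$ is $G(n/2,k_j,\delta,\gamma)$ with $k_j=\frac{n}{3\cdot 2^{j}}$, which by Theorem~\ref{theorem:overlay-graphs-exist} is $(k_j/64,\delta/8,\delta/4)$-edge-dense, so Lemma~\ref{lemma:sparse-to-dense} forces $|S|\ge k_j/64$. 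Since every node of $S$ is non-faulty at that moment and the alive population of $P_1$ only shrinks during the epoch, crashes being permanent give $S\subseteq\texttt{BEGIN}_1$; hence
\[
|\texttt{BEGIN}_1| \;\ge\; |S| \;\ge\; \frac{k_j}{64} \;=\; \frac{n}{3\cdot 64\cdot 2^{j}} .
\]
Rearranging yields $2^{j}\ge \frac{n}{3\cdot 64\cdot|\texttt{BEGIN}_1|}$, i.e.\ $j\ge j^\star$, which is exactly the contrapositive of the claim.

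I expect the main obstacle to be pinning down precisely what the Boolean output of \textsc{LocalSignaling} certifies and matching it to the hypotheses of Lemma~\ref{lemma:sparse-to-dense}. Two points need care. First, the graphs $G_{\texttt{in}}(j)$ in the family $\cG_{\texttt{in}}$ are \emph{unions} of overlay graphs over levels $0,\dots,j$, whereas the bound in Lemma~\ref{lemma:sparse-to-dense} rests on the edge-density upper bound (small sets have few internal edges), which is a property of a single overlay component; I would therefore apply the dense-neighborhood bound to the specific component $G(n/2,k_j,\delta,\gamma)$ at level $j$ rather than to the union, and verify that the survival certificate of \textsc{LocalSignaling} is witnessed within that component. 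Second, the offsets between the abstract level and the concrete graphs the algorithm communicates over ($G_{\texttt{in}}(i+2)$, $G_{\texttt{in}}(i+7)$) must be reconciled with the level parameter fed to \textsc{LocalSignaling}, so that the $k_j$ above is the one producing the stated constant $3\cdot 64$. Once Lemma~\ref{lemma:probing} is read as ``survival $\Rightarrow$ an alive $(\gamma,\delta)$-dense-neighborhood in the level-$j$ component,'' the remaining inequalities are the routine rearrangement displayed above.
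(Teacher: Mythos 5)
Your proposal is correct and follows essentially the same route as the paper's proof: both arguments combine the pigeonhole observation that among the $t+2$ \textsc{LocalSignaling} calls a process ending at a low level must have survived one at that level, with the chain Lemma~\ref{lemma:probing} (survival yields a $(\gamma,\delta)$-dense-neighborhood of processes in $\texttt{BEGIN}_1$) plus Lemma~\ref{lemma:sparse-to-dense} to force $|\texttt{BEGIN}_1|\ge \frac{n}{3\cdot 64\cdot 2^{j}}$ and hence $j\ge\log\bigl(\frac{n}{3\cdot 64\cdot|\texttt{BEGIN}_1|}\bigr)$. Your restructuring into a standalone ``no survival below $j^\star$'' claim is only a cosmetic difference, and your caveat about applying the edge-density bound to a single overlay component rather than the union $G_{\texttt{in}}(j)$ is a point the paper itself elides.
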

\begin{proof}
Assume, to the contrary, that there is a process $p \in P_1$ being on level $j_{\p}$ strictly smaller then $\log\big(\frac{n}{3\cdot64\cdot|\texttt{BEGIN}_{1}|}\big)$ at the end of phase $1$ of epoch $\cE$. Since in each iteration of the loop~\ref{alg:bp-gossip:second-loop}, an instance of $\textsc{LocalSignaling}$ is executed $t+2 = |\cG_{\texttt{in}}| + 2$ times, 
process $p$ must have survived at least one $\textsc{LocalSignaling}$ execution while being at that or a smaller level. 
In this execution, process $p$ was using graph $G_{\texttt{in}}(j_{p})$ that satisfies the conditions of Theorem~\ref{theorem:overlay-graphs-exist} with parameter $k_{j_{p}} := \frac{n}{3\cdot2^{j_{p}}}$. From the specific properties of the \textsc{LocalSignaling} algorith, i.e. Lemma~\ref{lemma:probing} in Section~\ref{sec:local-signaling-short}
point $1$, we conclude that $p$ must have a $(\gamma, \delta)$-dense-neighborhood in $G_{j_{p}} \cap \texttt{BEGIN}_{1}$. A property of the overlay graphs, Lemma~\ref{lemma:sparse-to-dense}, says that any $(\gamma, \delta)$-dense-neighborhood in the graph $G_{j_{p}}$ has at least $\frac{n}{64\cdot3\cdot2^{j_{p}}}$ nodes. Given that $j_{p} < \log\big(\frac{n}{3\cdot64\cdot|\texttt{BEGIN}_{1}|}\big)$, we conclude that the size of the $(\gamma, \delta)$-dense-neighborhood of $p$ in $G_{j_{p}} \cap \texttt{BEGIN}_{1}$ is at least $\frac{n}{64\cdot3\cdot2^{j_{p}}} > |\texttt{BEGIN}_{1}|$. This gives a contradiction with the fact that the set $\texttt{BEGIN}_{1}$ contains all non-faulty process from the group $P_{1}$.
\end{proof}

\begin{lemma} \label{lemma:fraction-know}
There exists a set $C_{1} \subseteq \texttt{END}_{1}$ of size at least $\frac{|\texttt{BEGIN}_{1}|}{4}$ such that after the second iteration of the loop~\ref{alg:bp-gossip:second-loop} of epoch $\cE$ each process $p$ from set $C_{1}$ has the other rumor $r_{2}$ in its set $\cR$.
\end{lemma}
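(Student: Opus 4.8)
I would exploit the fact that the two passes of the loop in line~\ref{alg:bp-gossip:second-loop} (within the fixed epoch $\cE$) serve different purposes. The first pass is a \emph{warm-up}: its sole function is to drive every surviving process of $P_1$ up to a level at which its overlay graphs are dense enough to match the unknown number of crashes, which is precisely the content of Lemma~\ref{lemma:level-lower-bound} (each non-faulty $p \in P_1$ ends iteration~$1$ at level $j_p \ge \log(\frac{n}{3 \cdot 64 \cdot |\texttt{BEGIN}_1|})$). The second pass is the one that actually transports $r_2$ into $P_1$: its external \textsc{Exchange} in line~\ref{alg:bp-gosip:external-com} pulls $r_2$ across the cut, and the repeated internal \textsc{Exchange} in line~\ref{alg:bp-gossip:internal-com-loop} floods it through a well-connected core. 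The target constant $\tfrac14$ I would produce as $\tfrac34 \cdot \tfrac13$, namely by extracting a $\tfrac34$-compact survival core of the surviving set $\texttt{END}_1$, whose size already exceeds $\tfrac13 |\texttt{BEGIN}_1|$ by the standing epoch assumption.

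Concretely, I would first convert the level bound of Lemma~\ref{lemma:level-lower-bound} into graph parameters. At level $j_p$ the graph used for internal flooding is $G_{\texttt{in}}(j_p+7)$, one of whose summands is an overlay graph $G(n/2,k,\delta,\gamma)$ with $k = \frac{n}{3 \cdot 2^{j_p+7}}$; the $+7$ offset is chosen exactly so that $k$ drops to order $|\texttt{BEGIN}_1|$ with enough slack that the compactness threshold of $G_{\texttt{in}}(j_p+7)$ sits below $|\texttt{END}_1|$. Applying the $(k,\tfrac34,\delta)$-compactness of this graph to $\texttt{END}_1$ then yields a survival core $C_1 \subseteq \texttt{END}_1$ of size $|C_1| \ge \tfrac34 |\texttt{END}_1| > \tfrac14 |\texttt{BEGIN}_1|$ in which every node has degree at least $\delta$ in $G_{\texttt{in}}(j_p+7)|_{C_1}$. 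Since this graph is also expanding and edge-dense, Lemma~\ref{lemma:survival-small-diameter} gives that any two nodes of $C_1$ lie within distance $2\gamma+1$ in $G_{\texttt{in}}(j_p+7)|_{C_1}$. Because the families are nested, $G_{\texttt{in}}(j_{\min}+7) \subseteq G_{\texttt{in}}(j_p+7)$ for every $p$, so I can run the entire argument at the common baseline level $j_{\min}$, and processes sitting at \emph{higher} levels only help (denser graphs, smaller expansion thresholds).

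The delivery step uses expansion on the other family. Every non-faulty process of $P_2$ holds $r_2$ from the outset, so $\texttt{END}_2$ is a large set of $r_2$-knowers that survives the epoch. The external graph $G_{\texttt{out}}(j_p+1)$ at the stabilized level is $\ell$-expanding with $\ell$ of order $|\texttt{BEGIN}_1|$, and the same level offsets are calibrated so that $\ell$ falls below both $|C_1|$ and $|\texttt{END}_2|$; expansion then forces an edge of $G_{\texttt{out}}$ between the two, and the external \textsc{Exchange} of line~\ref{alg:bp-gosip:external-com} delivers $r_2$ to at least one node of $C_1$. I would finish by invoking the small diameter of $C_1$: the $2\gamma+1$ repetitions of the internal \textsc{Exchange} in line~\ref{alg:bp-gossip:internal-com-loop} flood $r_2$ from that seed to all of $C_1$, one hop per repetition, with every core node remaining non-faulty (it lies in $\texttt{END}_1$) and retaining degree at least $\delta$. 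Hence by the end of the second iteration every process of $C_1$ has $r_2$ in its set $R$, which is exactly the claim.

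The main obstacle, rather than any single estimate, is reconciling the \emph{independent} level dynamics with one clean survival core. Since processes climb levels locally through \textsc{LocalSignaling}, I must rule out both that a core node drops below the baseline level mid-iteration (breaking the flooding path) and that the testing graph $G_{\texttt{in}}(i+2)$ of line~\ref{alg:bp-gossip:internal-local} certifies a core incompatible with the denser flooding graph $G_{\texttt{in}}(i+7)$. Both are controlled by the nesting $G_{\texttt{in}}(i) \subseteq G_{\texttt{in}}(i+1)$ together with the offsets $+2$ and $+7$, which are picked so that a node surviving \textsc{LocalSignaling} at level $i+2$ (hence owning a $(\gamma,\delta)$-dense neighbourhood of size at least $k/64$ by Lemma~\ref{lemma:probing} and Lemma~\ref{lemma:sparse-to-dense}) still belongs to the survival core of the denser $G_{\texttt{in}}(i+7)$. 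Making these integer offsets line up against the $\tfrac13$ and $\tfrac34$ fractions, so that the expansion threshold genuinely drops below the core size, is where the real bookkeeping lies.
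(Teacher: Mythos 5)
Your proposal matches the paper's own proof in all essentials: it derives the level lower bound from Lemma~\ref{lemma:level-lower-bound}, extracts $C_{1}$ as a $\tfrac34$-survival set of $\texttt{END}_{1}$ in $G_{\texttt{in}}(i+7)$ via compactness (giving $\tfrac34\cdot\tfrac13=\tfrac14$), uses expansion of $G_{\texttt{out}}$ at the stabilized level to force a cross edge delivering $r_{2}$ to one node of $C_{1}$, and floods it through $C_{1}$ using the $2\gamma+1$ internal \textsc{Exchange} repetitions together with the small-diameter property of Lemma~\ref{lemma:survival-small-diameter}. The only cosmetic difference is that the paper routes the cross edge through a survival core $C_{2}\subseteq\texttt{END}_{2}$ rather than all of $\texttt{END}_{2}$, which does not change the argument.
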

\begin{proof}
Let $i = \ceil{\log{\frac{n}{3\cdot64\cdot|\texttt{BEGIN}_{1}|}}}$. From Lemma~\ref{lemma:level-lower-bound} we know that from in the beginning of the second iteration of the loop~\ref{alg:bp-gossip:second-loop} of epoch $\cE$ each process is at level at least $i$. Therefore, starting from the second iteration of this loop, each process uses graph $G_{\texttt{in}}(i + 7)$ (or a denser graph in the family $\cG_{\texttt{in}}$) to communicate within processes from the same group.  The set $\texttt{END}_{1}$, viewed as a set of nodes in the graph $G_{\texttt{in}}(i + 7)$, is of size at least $\frac{|\texttt{BEGIN}_{1}|}{3}$. 
Now, we constructed graph $G_{\texttt{in}}(i + 7)$ such that it is $(k_{i + 7}, 3/4, \delta)$-compact, where  $k_{i + 7} := \frac{n}{3\cdot2^{i + 7}}$. Because $i \ge \log{\frac{n}{3\cdot64\cdot|\texttt{BEGIN}_{1}|}}$, thus $k_{i+7} < \frac{|\texttt{BEGIN}_{1}|}{3} < |\texttt{END}_{1}|$. Therefore, there exists a survival set $C_{1}$ in graph $G_{\texttt{in}}(i + 7)$ being a subset of $\texttt{END}_{1}$. The size of $C_{1}$ is at least $|\texttt{END}_{1}| \cdot  3/4 > \frac{|\texttt{BEGIN}_{1}|}{4}$.

Analogical reasoning proves that after the first iteration of the loop~\ref{alg:bp-gossip:second-loop} of epoch $\cE$ each process from set $P_{2}$ is on level $j \ge \ceil{\log{\frac{n}{3\cdot64\cdot|\texttt{BEGIN}_{2}|}}}$ and there exists a set $C_{2} \subseteq \texttt{END}_{2}$, such that $|C_{2}| > \frac{|\texttt{BEGIN}_{2}|}{4}$.
Without loss of generality assume that $j \ge i$ (the communication between non-faulty processes is bi-directional). 
The processes from set $C_{2}$ use overlay graph $G_{\texttt{out}}(j + 1)$ to communicate with the group $\cP_{1}$ in the beginning of the second iteration of the loop~\ref{alg:bp-gossip:second-loop} in epoch $\cE$, i.e. to execute line~\ref{alg:bp-gosip:external-com}. This graph is $(\frac{n}{3\cdot64\cdot2^{j}})$-expanding. Since $j \ge i$, both sets $C_{1}$ and $C_{2}$ are of size at least $\frac{n}{3\cdot64\cdot2^{j}}$. Hence, due to the graph expansion and proper sizes of $C_1,C_2$, there exists an edge between $C_{1}$ and $C_{2}$ in graph $G_{\texttt{out}}(j + 1)$. Thus, the call of the \textsc{Exchange} communication schemes on graph $G_{\texttt{out}}(j + 1)$, that takes place in line~\ref{alg:bp-gosip:external-com} of epoch $\cE$, results in at least one process from set $C_{1}$ knowing the other rumor $r_{2}$. 

From another property of the overlay graphs, Lemma~\ref{lemma:survival-small-diameter}, we know that every other pair of processes in $C_{1}$ are in distance $2\gamma + 1$ in  graph $G_{\texttt{in}}(i+7)_{|C_{1}}$. Therefore, after the execution of the loop in line~\ref{alg:bp-gossip:internal-com-loop} in the second iteration of the loop~\ref{alg:bp-gossip:second-loop} in epoch $\cE$, each process from $C_{1}$ knows the other rumor $r_{2}$.
\end{proof}

\begin{lemma}
\label{lemma:non-crashed-has-rumor}
After the epoch $\cE$ ends, \textit{each} process from the set $\texttt{END}_{1}$ knows the other rumor $r_{2}$.
\end{lemma}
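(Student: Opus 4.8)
The plan is to build on Lemma~\ref{lemma:fraction-know}, which already produces a large, internally well-connected core $C_1 \subseteq \texttt{END}_1$ with $|C_1| \ge \frac{|\texttt{BEGIN}_1|}{4}$, all of whose members hold the foreign rumor $r_2$ once the second iteration of the loop in line~\ref{alg:bp-gossip:second-loop} completes. It then remains to analyze the third (final) iteration of that loop and show that its internal phase --- the $2\gamma+1$ successive \textsc{Exchange} operations on $\cG_{\texttt{in}}$ in line~\ref{alg:bp-gossip:internal-com-loop} --- floods $r_2$ from $C_1$ to \emph{every} process of $\texttt{END}_1$. Since each \textsc{Exchange} is bidirectional (it carries the accumulated rumor set both in the request and in the response), a rumor present in $C_1$ at the start of the iteration reaches, after $s$ \textsc{Exchange} rounds, every process joined to $C_1$ by a path of length at most $s$ through processes that stay non-faulty throughout the iteration. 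Thus the whole lemma reduces to one distance claim: every $q \in \texttt{END}_1$ lies within $2\gamma+1$ hops of $C_1$ in the communication graph induced by the non-faulty processes of the iteration.

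To prove the distance claim I would condition on the final level $\ell_q$ of $q$. If $q$ has been pushed to the top level, its internal graph is the clique on $\cP_1$, so $q$ exchanges directly with $C_1$ and learns $r_2$ in a single round. Otherwise $q$ is non-faulty and survives its last \textsc{LocalSignaling} call at level $\ell_q$; by the survival guarantee of Lemma~\ref{lemma:probing} together with Lemma~\ref{lemma:sparse-to-dense}, $q$ then has a $(\gamma,\delta)$-dense-neighborhood $D_q$ of non-faulty processes with $|D_q| \ge \frac{n}{64\cdot 3\cdot 2^{\ell_q}}$. The internal \textsc{Exchange} performed by $q$ in the third iteration is carried over $G_{\texttt{in}}(\ell_q+7)$, and the $+7$ offset is exactly what makes $|D_q|$ a factor $2^7$ larger than the expansion threshold $\frac{n}{64\cdot 3\cdot 2^{\ell_q+7}}$ of that graph; moreover, by the level lower bound of Lemma~\ref{lemma:level-lower-bound} and $|C_1|\ge\frac{|\texttt{BEGIN}_1|}{4}$, the core $C_1$ also exceeds this threshold. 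The expansion property of Theorem~\ref{theorem:overlay-graphs-exist} then forces an edge between $D_q$ and $C_1$ inside $G_{\texttt{in}}(\ell_q+7)$. Combining the at-most-$\gamma$-hop reach of $q$ into its dense neighborhood with the small-diameter property of the dense core (Lemma~\ref{lemma:survival-small-diameter}) keeps the total distance from $q$ to a node already holding $r_2$ below $2\gamma+1$, which finishes the claim and hence the lemma.

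The main obstacle will be the level bookkeeping, because different processes may be exchanging over different graphs of the family $\cG_{\texttt{in}}$: the crossing edge produced by expansion lives in $G_{\texttt{in}}(\ell_q+7)$, yet it only transmits a rumor if at least one of its endpoints actually communicates over a graph containing $G_{\texttt{in}}(\ell_q+7)$, i.e. sits at level at least $\ell_q$. I expect to resolve this by arguing along the path from $q$ outward, using that $q$ itself operates at level $\ell_q$ (so its own \textsc{Exchange} graph is precisely $G_{\texttt{in}}(\ell_q+7)$) and that Lemma~\ref{lemma:level-lower-bound} forces every relevant process onto a level high enough for the shared graph to be contained in its own \textsc{Exchange} graph. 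Checking simultaneously that the single graph $G_{\texttt{in}}(\ell_q+7)$ (i) certifies the at-most-$\gamma$ hop of $q$ into $D_q$, (ii) realizes the small diameter of $C_1$ through Lemma~\ref{lemma:survival-small-diameter}, and (iii) has expansion threshold below both $|D_q|$ and $|C_1|$ --- all squeezed out of the one slack factor $2^7$ --- is the delicate part; the remaining hop count over the $2\gamma+1$ bidirectional \textsc{Exchange} rounds is then routine.
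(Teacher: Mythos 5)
Your transport mechanism is genuinely different from the paper's, and it has a gap that I do not think can be closed in the form you sketch. You route $r_2$ from $C_1$ to $q$ through the $2\gamma+1$ \textsc{Exchange} rounds on $G_{\texttt{in}}(\cdot+7)$ in line~\ref{alg:bp-gossip:internal-com-loop}, i.e.\ through the block that \emph{precedes} the \textsc{LocalSignaling} calls of the third iteration, while the certificate you use for the path --- the $(\gamma,\delta)$-dense-neighborhood $D_q$ and the level $\ell_q$ --- is produced by a \textsc{LocalSignaling} instance that $q$ survives \emph{later}, inside the subsequent loop of $t+2$ calls. Levels only increase over time, so at the time of the flooding block $q$ and the members of $D_q$ may sit at levels strictly below $\ell_q$; Lemma~\ref{lemma:probing} only guarantees that they have level at least $\ell_q$ by the start of the survived instance. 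Consequently their \textsc{Exchange} graphs during the flooding block correspond to those \emph{earlier, smaller} levels and are therefore subgraphs of $G_{\texttt{in}}(\ell_q+7)$: neither the expansion edge between $D_q$ and $C_1$ (which you place in $G_{\texttt{in}}(\ell_q+7)$) nor the internal edges of $D_q$ (which live in $G_{\texttt{in}}(\ell_q)$) are guaranteed to be used by either endpoint at that time. Your proposed fix --- that ``$q$ itself operates at level $\ell_q$, so its own \textsc{Exchange} graph is precisely $G_{\texttt{in}}(\ell_q+7)$'' --- is false for the flooding block (there $q$ still has its start-of-iteration level), and in any case it would only cover edges incident to $q$, not the intermediate hops of the path through $D_q$, whose endpoints are lower-bounded only by the global level bound of Lemma~\ref{lemma:level-lower-bound}, not by $\ell_q$.

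The paper sidesteps exactly this timing problem by using a different carrier. Only \emph{one} crossing edge between $C_1$ and the dense neighborhood $D$ has to be realized, namely in the single \textsc{Exchange} on $G_{\texttt{in}}(i_p+2)$ at line~\ref{alg:bp-gossip:internal-local} immediately preceding the survived instance; at that moment every member of $D$ already has level at least $i_p$, so the $D$-side endpoint does communicate over a graph containing $G_{\texttt{in}}(i_p+2)$, and the required sizes of $D$ and $C_1$ against the expansion threshold of $G_{\texttt{in}}(i_p+2)$ follow from Lemmas~\ref{lemma:sparse-to-dense}, \ref{lemma:fraction-know} and~\ref{lemma:level-lower-bound}. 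The remaining transport from $D$ to $p$ is then performed \emph{inside} the survived \textsc{LocalSignaling} itself, whose guarantee (Lemma~\ref{lemma:probing}, point~2) explicitly bundles delivery of the rumor set of every node of the dense neighborhood to $p$ with the survival certificate, so no separate hop-count or level argument is needed there. If you want to keep your flooding-based route you would have to exhibit a dense neighborhood that is valid (with matching levels) at the start of the flooding block, which the algorithm's structure does not obviously provide.
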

\begin{proof}
Consider any process $p$ from the set $\texttt{END}_{1}$. In the third iteration of the loop~\ref{alg:bp-gossip:second-loop} in the epoch $\cE$, there exists at least one round in which that process survives the procedure $\textsc{LocalSingaling}$. Assume that $p$ survives that instance of the Local Signaling with the level set to $i_{p}$. A property of the \textsc{LocalSignaling} algorithm, i.e. Lemma~\ref{lemma:probing} point $2$, from Section~\ref{sec:local-signaling-short}, gives us that there exists a $(\gamma, \delta)$-dense-neighborhood of process $p$ in the graph $G_{\texttt{in}}(i_{p})$ consisting of processes that are non-faulty and at the level at least $i_{p}$ at the beginning of the third iteration of the loop~\ref{alg:bp-gossip:second-loop}. Moreover, the $(\gamma, \delta)$-dense-neighborhood is such that $p$ received the set $\cR$ of any node from the set in this instance of the \textsc{LocalSignaling} algorithm. 
Let $D$ be the set of those processes that constitute the $(\gamma, \delta)$-dense-neighborhood. From Lemma~\ref{lemma:sparse-to-dense}
we know that the size of $D$ is at least $\frac{n}{64\cdot3\cdot2^{i_{p}}}$. 

The graph $G_{\texttt{in}}(i_{p} + 2)$ used to in the communication rounds that precedes that instance (i.e. to execute line~\ref{alg:bp-gossip:internal-local}) of Local Signaling is $\big(\frac{n}{64\cdot3\cdot 4\cdot2^{i_{p}}}\big)$-expanding. 
Consider the set $C_{1}$ given in Lemma~\ref{lemma:fraction-know}. We have $|C_{1}| \ge \frac{|BEGIN_{1}|}{4}$.
We argue that set $C_{1}$ has size at least $\frac{n}{64\cdot3\cdot 4\cdot 2^{i_{p}}}$. This holds because Lemma~\ref{lemma:level-lower-bound} bounds the value $i_{p}$ from below by $\log\big(\frac{n}{3\cdot64\cdot|\texttt{BEGIN}_{1}|}\big)$.
Therefore, by expansion of the graph, the sets $D$ and $C_1$ are connected by at least one edge in $G_{\texttt{in}}(i_{p} + 2)$.
From Lemma~\ref{lemma:fraction-know} we derive that each process in $C_{1}$ knows the other rumor $r_{2}$ at the beginning the third iteration of the loop~\ref{alg:bp-gossip:second-loop} in epoch $\cE$. 

Hence, the rumor $r_{2}$ must have reached some process in $D$ before the instance of Local Signaling started, when processes from $D$ were performing the communication inside their group. This holds, because each process in $D$ used the graph $G_{\texttt{in}}(i_{p}+2)$, or a denser graph from the family $\cG_{\texttt{in}}$ (which, by definition, has graph $G_{\texttt{in}}(i_{p}+2)$ as a subgraph)
, in the rounds preceding the Local Signaling, i.e. line~\ref{alg:bp-gossip:internal-local}.
Next, in the execution of Local Signaling which $p$ survived, the information from any process from set $D$ was conducted to process $p$, and this information includes the other rumor $r_{2}$.
\end{proof}

\Paragraph{Analysis of communication complexity.}
Let $L_{i}(r)$ be the set of non-faulty processes that at the beginning of the round $r$ are on level $i$ or bigger. 
We show that for any round $r \ge 2$ and for any $i \in [t]$, the number $|L_{i}(r)|$ is at most $\frac{2n}{2^{i}}$.

\begin{lemma}
\label{lemma:dense-are-small}
For any round $r \ge 2$ and any level $i \in [t]$ the number of processes in the set $L_{i}(r)$ is at most $\frac{2n}{2^{i}}$.
\end{lemma}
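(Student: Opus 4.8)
The plan is to argue by contradiction, exploiting the compactness of the intra-group overlay graph one level below. Write $k_j=\frac{n}{3\cdot 2^{j}}$, so that by Theorem~\ref{theorem:overlay-graphs-exist} the graph $G_{\texttt{in}}(j)$ is $(k_j,\tfrac34,\delta)$-compact and $(k_j/64,\delta/8,\delta/4)$-edge-dense, and record the two identities $\frac{2n}{2^{i}}=6k_i$ and $k_{i-1}=2k_i$. For $i=1$ the claim is immediate, since a group holds at most $n/2<\frac{2n}{2}$ processes, so fix $i\ge 2$ and suppose toward a contradiction that $|L_i(r)|>\frac{2n}{2^{i}}=6k_i>k_{i-1}$. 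Viewing $L_i(r)$ as a set of at least $k_{i-1}$ vertices of $G_{\texttt{in}}(i-1)$ and applying compactness (note that taking the union of the sparser graphs only adds edges and so preserves compactness), I extract a survival set $C\subseteq L_i(r)$ with $|C|\ge \tfrac34 k_{i-1}$ in which every vertex has at least $\delta$ neighbours inside $C$ in $G_{\texttt{in}}(i-1)$.

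Every process of $C$ is alive at round $r$ and, by definition of $L_i(r)$, sits at level at least $i$; since levels only increase (a process raises its level by $1$ exactly when a \textsc{LocalSignaling} call fails, and never lowers it), each member of $C$ must have left level $i-1$ at some earlier round. Let $p^{*}$ be the process of $C$ that leaves level $i-1$ last, at round $\rho<r$, and let $\rho'$ be the round inside the failing \textsc{LocalSignaling} execution at which $p^{*}$ detects fewer than $\delta$ alive same-or-higher-level neighbours. Because $p^{*}$ is the last member of $C$ to leave level $i-1$, every other process of $C$ already sits at level $\ge i\ge i-1$ at round $\rho$, and each is alive (alive at $r\ge\rho$, crashes being permanent). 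The at least $\delta$ $C$-neighbours of $p^{*}$ are already neighbours of $p^{*}$ in $G_{\texttt{in}}(i-1)\subseteq G_{\texttt{in}}(\ell)$ for the (higher) level $\ell$ they currently use, so the corresponding edges are active from both endpoints and the neighbours are genuine alive, same-or-higher-level neighbours of $p^{*}$. Hence $p^{*}$ would count at least $\delta$ such neighbours and, by the threshold rule of \textsc{LocalSignaling}, could not have dropped below the threshold and left level $i-1$ — a contradiction, yielding $|L_i(r)|\le\frac{2n}{2^{i}}$.

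The delicate point, which I expect to be the main obstacle, is the timing alignment in the second paragraph: I must certify that at the single critical round $\rho'$ all the $\ge\delta$ survival-set neighbours of $p^{*}$ are \emph{simultaneously} alive and of level $\ge i-1$. Monotonicity of levels and permanence of crashes give this once $p^{*}$ is chosen as the last member of $C$ to leave level $i-1$, but one still has to reconcile the round-by-round internal signalling of a single \textsc{LocalSignaling} run (where the below-threshold event is detected) with the \textsc{BipartiteGossip} level, which is only updated between runs. The clean way to bridge this is to invoke the begin/end guarantee of Lemma~\ref{lemma:probing} for $p^{*}$'s failing execution — failure means the absence of a $(\gamma,\delta)$-dense-neighbourhood among the processes that were non-faulty and at level $\ge i-1$ at the \emph{start} of that execution — and to combine it with Lemma~\ref{lemma:sparse-to-dense}, which forces any such dense-neighbourhood in $G_{\texttt{in}}(i-1)$ to contain at least $k_{i-1}/64$ nodes. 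The $\ge\delta$ $C$-neighbours supplied by compactness then directly witness such a dense-neighbourhood and contradict the failure, and the remaining bookkeeping (that these neighbours are indeed present and of level $\ge i-1$ at the start of $p^{*}$'s run, using that the last leaver sees all others already raised) is routine.
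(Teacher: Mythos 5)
Your proof is correct and follows essentially the same route as the paper: assume $|L_i(r)|>\tfrac{2n}{2^i}$, use the $(k_{i-1},\tfrac34,\delta)$-compactness of $G_{\texttt{in}}(i-1)$ to extract a survival set inside $L_i(r)$, make an extremal choice (your ``last process to leave level $i-1$'' versus the paper's ``last round at which some survival-set member runs \textsc{LocalSignaling} at level exactly $i-1$''), and then invoke Lemma~\ref{lemma:probing} to conclude that this process must have survived that instance, a contradiction. Your second-paragraph direct count of alive neighbours would not suffice on its own, since the threshold test inside \textsc{LocalSignaling} compares the \emph{internal} levels, which can decay during the instance even for alive neighbours of external level $\ge i-1$; but you correctly flag this and your third-paragraph repair via Lemma~\ref{lemma:probing} (the paper uses its point~3, you use the contrapositive of point~1 with the survival set as the dense-neighbourhood witness, which is equivalent here since your witness lies in the end-set $B_{i-1,2}$) is exactly the paper's mechanism.
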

\begin{proof}
Assume, to arrive at a contradiction, that there exists round $r \ge 2$ and level $i \in [t]$ such that at the beginning of round $r$ the inequality $|L_{i}(r)| > \frac{2n}{2^{i}}$ holds. Consider graph $G_{\texttt{in}}(i - 1)$. The construction of the graph guarantees that it is $(\frac{n}{3\cdot2^{i - 1}}, 3/4, \delta)$-compact. Because $\frac{2n}{2^{i}} \ge \frac{n}{3\cdot2^{i - 1}}$, thus there exists a survival subset $S$ of $L_{i}(r) \cap G_{\texttt{in}}(i - 1)$ of size at least $\frac{3\cdot n}{2^{i - 1}} > 0$, because $i \le t = \floor{\log{n}}$. Let $r'$ be the maximum round in which an instance of the \textsc{LocalSignaling} algorithm started and there exists a process from set $S$ that executed the \textsc{LocalSignaling} at level \textit{exactly} $i-1$. Let $A \subseteq S$ be the set of the processes at level exactly $i-1$ in round $r'$.

First, since all processes from $S$ are at level at least $i$ in round $r$, round $r'$ exists, furthermore $r' < r$ and the set $A$ is non-empty. Also, every process from $S$ starts the instance of the \textsc{LocalSignaling} in round $r'$ at level $i-1$ or bigger. The last is true, because not surviving an instance of Local Signaling by a process results in increasing its level by $1$.

Observe, that all processes used the graph $G_{\texttt{in}}(i - 1)$ as a subgraph of the communication graph in the instance of the \textsc{LocalSignaling} algorithm starting at round $r'$. Since set $S$ is a survival set for $L_{i}(r)$ of the graph $G_{\texttt{in}}(i-1)$, thus, from a property of the \textsc{LocalSignaling} algorithm, that is Lemma~\ref{lemma:probing} point $3$ in Section~\ref{sec:local-signaling-short}, we conclude that every process from set $S$ that started this instance of Local Signaling at level $i-1$ survived this instance of Local Signaling. In particular this means that processes from the non-empty set $A \subseteq S$ stayed at level exactly $i-1$ after this instance of Local Signaling. This contradicts the fact that $r'$ was defined as the last round with a process in $S$ starting Local Signaling at level exactly $i-1$.
\end{proof}

\noindent Let us recall the Theorem~\ref{thm:cheap-bi-gossip}.
\begin{reptheorem}{thm:cheap-bi-gossip}
\textsc{BipartiteGossip} solves the Bipartite Gossip problem in $O(\log^3 n)$ rounds, with $O(\log^5 n \cdot |\cR|)$~amor\-tized number of communication bits, where $|\cR|$ is the number of bits needed to encode the~rumors.
\end{reptheorem}

\begin{proof}
In order to count the number of bits sent, in total, by all processes, observe that a process that is at a level $i$ in a round uses at most $O(\frac{26n\cdot \delta}{k_{i}})$, where $k_{i} := \frac{n}{3\cdot 2^i}$, links to communicate in this round with other processes. Lemma~\ref{lemma:dense-are-small} assures that there is at most $\frac{2n}{2^{i}}$ processes at level $i$ in a round. Thus, in a single round, processes use $O( \sum_{i=0}^{i=t+1} \frac{2n}{2^{i}} \cdot \frac{2 \cdot 26n\cdot \delta}{k_{i}}) = O(t \cdot n \cdot \delta) = O(n \cdot \log^{2}{n})$ messages. A single message carries a single bit and at most two rumors. The number of bits needed to deliver such essage is $O(|\cR|)$. Since the algorithm runs in $O(\log^3 n)$ rounds, the total number of bits used by processes is  $O(n \cdot \log^5 n \cdot |\cR|)$.

In order to prove correctness, observe that if all the processes from group $P_{1}$ or all the processes from group $P_{2}$ fail during the execution, then every non-failed process knows the rumor of every other non-failed process (because only the owners of a single rumor survived).
Hence, consider a case in which at the end of the algorithm the number of processes that survived is greater than zero in each group. 
Since the number of epochs is $2t$, there must exist an epoch $\cE$ in which the ratio of the processes that survived the epoch to the processes that were non-faulty at the begin of the epoch is greater then $\frac{1}{3}$, in both groups. In every epoch in which the above is not satisfied, the number of non-faulty processes decreases at least 3 times in one of the groups, and thus it cannot happen more then $2\log n < 2t$ times. The conclusion of Lemma~\ref{lemma:non-crashed-has-rumor} completes the proof of correctness. 
\end{proof}

\subsection{The \textsc{Gossip} algorithm}
\label{sec:gossip-long}

Here, we describe an algorithm based on the divide-and-conquer approach, called \textsc{Gossip} that utilizes the \textsc{BipartiteGossip} algorithm to solve Fault-tolerant Gossip.
Each process takes the set $\cP$, an initial rumor $r$ and its unique name $p \in [|\cP|]$ as an input.
The processes split themselves into two groups of size at most $\ceil{n/2}$. The groups are determined based on the unique names. The first $\ceil{n/2}$ processes with the smallest names make the group $\cP_{1}$, while the $n - \ceil{n/2}$ processes with the largest names define the group $\cP_{2}$. Each of those two groups of processes solves Gossip separately by evoking the \textsc{Gossip} algorithm inside the group only. The processes from each group know the names of every other process in that group, hence the necessary conditions to execute the \textsc{Gossip} recursively are satisfied. After the recursion finishes, a process from $\cP_{1}$ stores a set of rumors $\cR_{1}$ of processes from its group, and respectively a process from $\cP_{2}$ stores a set of rumors $\cR_{2}$ of processes from its group. Then, the processes solve Bipartite Gossip problem by executing the \textsc{BipartiteGossip} algorithm on the partition $\cP_{1}$, $\cP_{2}$ and having initial rumors $\cR_{1}$ and $\cR_{2}$. The output to this algorithm is the final output of the \textsc{Gossip}.

\remove{
A standard induction analysis of recursion and Theorem~\ref{thm:cheap-bi-gossip} led to the following theorem, which proof is deferred to Section~\ref{sec:gossip-long}.
}

\begin{reptheorem}{thm:cheap2-gossip}
\Gossip{} solves deterministically the Fault-tolerant Gossip problem in $O(\log^3 n)$ rounds using $O(\log^6 n \cdot |\cR|)$~amortized number of communication bits, where $|\cR|$ is the number of bits needed to encode the~rumors.
\end{reptheorem}

\begin{proof}
Because of the recursive nature of the algorithm, the easiest way to analyze it is by using the induction principle over the number of processes. If the system consists of one non-faulty process, the process returns the exact number of zeros and ones immediately, regardless of its initial bit. Thus, both the conditions -- termination and validity -- are satisfied.

Assume then, that the system consists of $n > 1$ processes. First, the processes perform the \textsc{Gossip} algorithm in two groups of size at most $\ceil{n/2}$. It takes $T(\ceil{n/2})$ rounds and $2M(\ceil{n/2})$ bits, where $T(x)$ and $M(x)$ is the number of rounds and the total number of bits used by the \textsc{Gossip} algorithm executed on a system with $x$ processes. Then, the $n$ processes execute the \textsc{BipartiteGossip} algorithm, which requires $O(\log^{3}{n})$ rounds and $O(n\cdot\log^{6}{n})|\cR|$ communication bits, by Theorem~\ref{thm:cheap2-gossip}. Thus, in total, the algorithm takes $T(\ceil{n/2})+O(\log^{3}{n})$ rounds and sends $M(\ceil{n/2} + n\cdot\log^{6}{n}|\cR|)$ communication bits. Given that $T(1) = 1$ and $M(0) = 0$, we calculate that the functions $T(x)$ and $M(x)$ are asymptotically equal to $O(\log^{3}{n})$ and $O(n\cdot\log^{7}{n})|\cR|$, respectively. This proves the termination condition and bounds the use of communication bits.

Now, we prove that the validity condition holds. After the recursive run of the algorithm, each process from the group $P_{1}$ stores set $\cR_{1}$ consisting of rumors of alived processes from $\cP_{1}$. This set satisfies the validity conditions for the system consisting of processes from the group $P_{1}$. The processes from the group $P_{2}$ store analogical set $\cR_{2}$. If all processes from the group $P_{1}$ or $P_{2}$ have crashed, then the validity condition holds from the inductive assumption. If there exists at least one correct process in each group, then the execution of the \textsc{BipartiteGossip} algorithm guarantees that each process has sets $R_{1}$ and $R_{2}$. In this case, the result returned by every process, that is, the union of these two sets, satisfies the validity condition.
\end{proof}

\remove{Now, we prove that the validity condition holds. After the recursive run of the algorithm, each process from the group $P_{1}$ stores two values -- $\texttt{zeros}$  and $\texttt{ones}$ -- that estimate the number of processes in $P_{1}$ starting with the initial bit $0$ and $1$, respectively. Those values satisfy the validity conditions for the system consisting of processes from the group $P_{1}$. The processes from the group $P_{2}$ store analogical values but correspond to the number of initial bits in the group $P_{2}$. If all processes from the group $P_{1}$ or $P_{2}$ have crashed, then the validity condition holds from the inductive assumption. If there exists at least one correct process in each group, then the execution of the \textsc{Cheap2-Gossip} algorithm guarantees that each process has both pairs of the values. That is, the estimated number of initial bits equal to zero and one, from both groups $P_{1}$ and $P_2$. In this case, the result returned by every process, that is, the sum of the number of zeros from each pair and the number of ones from each pair satisfies the validity condition.
}

\Paragraph{Modification for Fuzzy Counting.}
We define the \textit{Fuzzy Counting} problem as follows. There is a set $n$ processes, $\cP$, with unique names that are comparable. Each process knows the names of other processes (i.e. they operate in KT-$1$ model). 
Each process starts with an initial bit $b \in \{0, 1\}$. Let $\texttt{Z}$ denote the number of processes that started with the initial bit set to $0$ and never failed. Similarly, $\texttt{O}$ denotes the number of processes that started with $1$ and never failed.
Each process has to return two numbers: $\texttt{zeros}$ and $\texttt{ones}$. 
An algorithm is said to solve \blurred counting if every non faulty process terminates (termination condition) and the values returned by any process fulfill the 
conditions: $\texttt{zeros} \ge |\texttt{Z}|$, $\texttt{ones} \ge |\texttt{O}|$ and $\texttt{zeros} + \texttt{ones} \le n$ (validity~condition).

To solve this problem, we use the \textsc{Gossip} algorithm with the only modification that now we require the algorithm the return the values $\texttt{Z}$ and $\texttt{O}$, instead of the set of learned rumors. We apply the same divide-and-conquer approach. That is, we partition $\cP$ into groups $\cP_{1}$ and $\cP_{2}$ and we solve the problem within processors of this partition.  Let $\texttt{Z}_{1}$, $\texttt{O}_{1}$ and $\texttt{Z}_{2}$, $\texttt{O}_{2}$ be the values returned by recursive calls on set of processes $\cP_{1}$ and $\cP_{2}$, respectively. Then, we use the \textsc{BipartiteGossip} algorithm to make each process learn values $\texttt{Z}$ and $\texttt{O}$ of the other group. Eventually, a process returns a pair of values $\texttt{Z}_{1} + \texttt{Z}_{2}$ and $\texttt{O}_{1} + \texttt{O}_{2}$ if it received the values from the other partition during the execution of \textsc{BipartiteGossip}; or it returns the values corresponding to the recursive call in its partition otherwise. 
It is easy to observe, that during this modified execution processes must carry messages that are able to encode values $\texttt{Z}$ and $\texttt{O}$, thus in this have it holds that $|\cR| = O(\log{n})$. We conclude this modification in the following theorem.

\begin{reptheorem}{thm:fuzzy-counting}
There exists an algorithm, called \textsc{FuzzyCounting} that solves Fuzzy Counting problem in $O(\log^{3}{n})$ rounds with $O(\log^{7}{n})$ amortized bit complexity.
\end{reptheorem}

\section{Local Signalling -- Estimating neighborhoods in expanders} \label{sec:local-signaling-short}

The \textsc{LocalSignaling} algorithm, presented in this section, allows to adapt the density of used overlay graph to any malicious fail pattern guaranteeing fast information exchange among a constant fraction of non-faulty nodes with amortized $\logO(n |\cR|)$ bit complexity, where $\cR$ is the overhead that comes from the bit size of the information needed to convey.

\Paragraph{High level idea.}
The procedure is formally denoted $\texttt{LocalSignaling}(\cP, p, \cG, \delta, \gamma, \ell, r)$, where $\cP$ is the set of all processes, $p$ is the process that executes the procedure and $\cG = \{G(1), \ldots, G(t)\}$ denotes the family of overlay graphs that processes from $\cP$ uses to select processes to directly communicate -- those are neighborhoods in some graph of the family $\cG$.  In our case, the family will consist of graphs with increasing connectivity properties. Parameters $\gamma,\delta$ correspond to the property of $(\gamma,\delta)$-dense-neighborhoods which the base graph $G(1)$ must fulfill. They are also related to the time and actions taken by processes if failures occur, respectively.
The parameter $\ell \le t$ is called a \emph{starting level} of process $p$ and denotes the communication graph from family $\cG$ from which the node $p$ starts the current run of the procedure. This parameter may be different for different processes.
Finally, the parameter $r$ denotes a rumor that process $p$ is supposed to deliver to other processes.
Since processes operates in KT$-1$ model, the implementation assumes that each process uses the same family $\cG$ (see the corresponding discussion after Theorem~\ref{theorem:overlay-graphs-exist}).

\remove{
The detailed pseudocode of the procedure 
is given in Section~\ref{sec:local-signaling-long}, Figure~\ref{fig:local-signaling}, while
below we present its high-level~description.
}

\Paragraph{Procedure~$\textsc{LocalSignaling}(\cP, p, \cG, \delta, \gamma, \ell, r)$} takes $2\gamma$ consecutive rounds. The level of process $p$ executing the procedure is initially set to $\ell$, and is stored in a local variable $i$. Each process stores also s set $R$ of all rumors it has learned to this point of execution. Initially, $R$ is set to $\{r\}$.
\BB
\begin{description}
\item[Odd rounds:]
Process~$p$ sends a request message to each process~$q$ in~$N_{G(i)}(p)$,
provided $i>0$.
\remove{
Based on the number of received messages, the level $i$ of $\p$ may decrease in the beginning of the next round to level $i_{1} \ge 0$, indicating that for $j$ such that $i_{1} < j \le \ell$ graph $G_{j}$ was too sparse to allow $\p$ to communicate with a fraction of all non-faulty processes. 
}
\BB
\item[Even rounds:]
Every non-faulty process $q$ responds to the requests received at the end of the previous round -- by replying to the originator of each request a message containing the current level $i$ of process $q$ and the set $R$ of all different rumors $q$ collected so far. 

At the end of each even round, processes that 
requested information in the previous round collect the responses to those requests. 
If a single process $p$ received less then $\delta$ responses with level's value of its neighbors greater or equal than its level value $i$, then $p$ decreases $i$ by one. Additionally, $p$ merges every set of rumors it received with its own set $R$.
If $i$ drops to $0$, then $p$ does not send any requests in the consecutive rounds. 
\BB
\item[Output:]
We say that process $p$ \emph{has not survived} the \textsc{LocalSignaling} algorithm if it ends with value $i$ lower than its initial level $i$. Otherwise, $p$ is said to have \emph{survived} the \textsc{LocalSignaling} algorithm. $p$ returns a single bit indicating whether it has survived or not and the set $R$ containing all rumors it has learnt in the course of the execution.
\end{description}

\begin{lemma}
\label{lem:signaling-complexity}
The procedure $\textsc{LocalSignaling}(\cP, p, \cG, \delta, \gamma, \ell, r)$ takes $O(\gamma)$ rounds and uses $\big( \sum_{i = 1}^{i = t} |L_{i}|\cdot|N_{G_{\le i}}(L_{i})| \cdot \gamma \cdot |\cR| \big)$ communication bits, where $L_{i}$ denotes the set of processes that start at level $i$, the graph $G_{ \le i}$ is a union of graphs $G(1), \ldots, G(i)$, and the value $|\cR|$ denotes the number of bits needed to encode all possible rumors.
\end{lemma}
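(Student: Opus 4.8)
The plan is to treat the two bounds separately: the round bound is immediate from the definition of the procedure, while the bit bound rests on a single monotonicity observation about the processes' levels together with a charging argument.

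\textbf{Round complexity and message size.} By construction \textsc{LocalSignaling} runs for exactly $2\gamma$ consecutive rounds, alternating request rounds (odd) and response rounds (even); hence its running time is $O(\gamma)$. For the bit count it suffices to bound the total number of messages and multiply by the per-message size: every message, whether a request or a response, carries at most the sender's current level index together with the set $R$ of rumors collected so far, which is encodable in $O(|\cR|)$ bits.

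\textbf{Levels only decrease (the crux).} The key structural fact I would prove first is that within a single call to \textsc{LocalSignaling} a process never raises its level: the sole level update is a unit decrement, performed at the end of an even round precisely when fewer than $\delta$ responses from neighbors of level $\ge i$ arrive. Consequently a process $p$ with starting level $i$ -- that is, $p \in L_i$ -- is, in every round, at some level $j \le i$ and communicates only along $N_{G(j)}(p)$. Since $G(j) \subseteq G_{\le i}$ for all $j \le i$, we get $N_{G(j)}(p) \subseteq N_{G_{\le i}}(p) \subseteq N_{G_{\le i}}(L_i)$; in particular, in any single request round $p$ issues at most $|N_{G_{\le i}}(L_i)|$ requests, and every process it ever contacts lies in $N_{G_{\le i}}(L_i)$.

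\textbf{Charging and conclusion.} I would then charge every message to the starting level of the sender of the request that generated it. Over the at most $\gamma$ request rounds, a process $p \in L_i$ issues at most $\gamma \cdot |N_{G_{\le i}}(L_i)|$ requests, so all of $L_i$ issues at most $|L_i| \cdot \gamma \cdot |N_{G_{\le i}}(L_i)|$ requests. Because each response is triggered by exactly one request, the number of responses equals the number of requests, and summing over $i$ the total number of messages is at most $2\sum_{i=1}^{t} |L_i| \cdot \gamma \cdot |N_{G_{\le i}}(L_i)|$. Multiplying by the $O(|\cR|)$ bits per message gives the claimed bound $\sum_{i=1}^{t} |L_i| \cdot |N_{G_{\le i}}(L_i)| \cdot \gamma \cdot |\cR|$.

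\textbf{Main obstacle.} The only genuinely delicate step is the monotonicity claim and its consequence that all of $p$'s communication stays inside $N_{G_{\le i}}(L_i)$; once this is pinned down the remaining counting is routine. A secondary care-point is charging the responses: routing them through the ``one response per request'' identity, rather than bounding each process's outgoing responses directly, sidesteps the fact that $p$ may answer requests from processes of higher starting level whose edges need not lie in $G_{\le i}$.
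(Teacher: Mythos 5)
Your proposal is correct and follows essentially the same route as the paper's proof: both bound the running time by the fixed $2\gamma$ rounds, bound each message by $O(|\cR|)$ bits, use the fact that levels only decrease to confine each $p\in L_i$'s communication to $N_{G_{\le i}}(L_i)$ over the $O(\gamma)$ rounds, and reduce counting responses to counting requests. Your write-up is, if anything, slightly more explicit about the per-process charging that yields the $|L_i|\cdot|N_{G_{\le i}}(L_i)|$ factor, but the argument is the same.
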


\begin{proof}
Each process executes work for $2\gamma$ rounds, thus this must be also the running time of the whole procedure. 
Next, we bound the total number of bits that processes used in the instance of Local Signaling. 
Observe, that every message is of size at most $1 + |\cR|$, thus it is enough to upper bound the total number of messages sent. Each node in each round either sends a request or replies once to each received requests. Thus, it is enough to bound the number of sent requests only. The processes that start at level $i$ may only decrease their levels after a round. There are $O(\gamma)$ rounds in total, thus they send at most $|N_{G_{\le i}}(L_{i})| \cdot \gamma$ requests. 
If we sum this expression over all possible start levels, we get the claimed upper bound on the number of messages and, in consequence, the claimed upper bound on the number of bits used by participating processes.
\end{proof}


\Paragraph{Surviving 
the \textsc{LocalSignaling} algorithm -- the consequences.}
Here, we present benefits of the \textsc{LocalSignaling} algorithm if a proper graph family $\cG$ is used.
Assume that $t \ge 1$ and consider a sequence $(k_{i})_{i \in [t]}$.
Let $\cG = \{G(1), \ldots, G(t)\}$ be a family of graphs $G(i)=G(n, k_{i}, \delta, \gamma)$ defined as in Theorem~\ref{theorem:overlay-graphs-exist}. We require, for any $1 \le i < t$ that $G(i) \subseteq G(i+1)$. 
Consider a simultaneous run of the procedure $\textsc{LocalSignaling}(\cP, p, \cG, \delta, \gamma, \ell, r)$ at every process~$p \in \cP$. Here, we require each process $p \in \cP$ to use the same family of graphs $\cG$. Since our processes operates in KT$-1$ model, this requirement could be always satisfied.

When a process~$p$ survives an instance of Local Signaling and no failures occurred in this instance, 
then the set of processes which exchanged a message with~$p$ during the Local Signaling execution is a $(\gamma,\delta)$-dense-neighborhood for~$p$ in graph $G_{\ell}$, where $\ell$ is the level of $p$ given as an argument to the procedure. We show that the same property holds in general case (when failures occurred), provided $p$ has survived Local Signaling. 

Let $B_{\ell,1}$ be \emph{the start set on level $\ell$}: it consists of the processes that are non-faulty at the beginning of this instance of Local Signaling and their level is at least $\ell$.
Let $B_{\ell,2}\subseteq B_{\ell,1}$ be \emph{the end set}: it consists of the processes that are non-faulty just after the termination of this instance 
and their level at the beginning of this instance was at least $\ell$.
The processes in~$B_{\ell,1}\setminus B_{\ell,2}$ are among those that have crashed during the considered instance of Local Signaling.

\BB
\begin{lemma}
\label{lemma:probing}
The following properties hold for arbitrary times of crashes of the processes in~$B_{\ell,1}\setminus B_{\ell,2}$:

1. If there is a $(\gamma,\delta)$-dense-neighborhood for $p \in B_{\ell,2}$ in graph~$G_{\ell}|_{B_{\ell,2}}$, 
then process $p$ survives Local Signaling. 

2. If $p$ survived the Local Signaling, then there is $(\gamma,\delta)$-dense-neighborhood for~$p \in {B_{\ell,1}}$ in graph~$G(\ell)|_{B_{\ell,1}}$. Moreover, $p$ receives the rumor $r$ of any node from that $(\gamma,\delta)$-dense-neighborhood. 

3. Any process in a survival set $C$ for $B_{\ell,2}$ that started at 
level exactly $\ell$ survives Local~Signaling.
\end{lemma}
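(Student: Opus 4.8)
The plan is to reduce all three statements to bookkeeping about a single quantity: the level variable $i$ of a process, which starts at its input level, can only decrease (by one per even round), and drops precisely when the process fails to collect $\delta$ responses carrying a level at least its own. Thus \emph{survival} means the level never moves, and the whole lemma is about transferring the static degree/compactness structure of the overlay graphs into this dynamic, round-by-round condition. Throughout I will use that crashes are permanent, so every node of $B_{\ell,2}$ responds in every round, and that a node at level $\ge \ell$ always queries a superset of $N_{G(\ell)}(\cdot)$ because $G(\ell)\subseteq G(i)$; hence any $G(\ell)$-edge between two level-$\ge\ell$ endpoints is \emph{active} in both directions.

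I would prove Part 3 first, as it is the cleanest. Since $C\subseteq B_{\ell,2}$ and every node of $C$ has at least $\delta$ neighbours inside $C$ (the defining property of a survival set), I claim by induction on the round number that after each round every node of $C$ is at level $\ge \ell$. The base case is the hypothesis that members of $C$ start at level $\ge \ell$. For the step, a node of $C$ sitting at level exactly $\ell$ sees its $\ge\delta$ neighbours in $C$ respond at level $\ge\ell$ (they are non-faulty and, by the inductive hypothesis, at level $\ge\ell$), so it cannot drop; a node at level $>\ell$ can fall by at most one and so also stays $\ge\ell$. Hence any $p\in C$ with starting level exactly $\ell$ never leaves level $\ell$ and survives. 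Part 1 is the same idea localised by distance: with $S$ a $(\gamma,\delta)$-dense-neighborhood for $p$ in $G(\ell)|_{B_{\ell,2}}$ and $p\in S$, I would show by induction on $r=0,\dots,\gamma$ that after round $r$ every node of $S\cap N^{\gamma-r}_{G(\ell)}(p)$ is at level $\ge\ell$. The shell shrinks by one each round, so a node $v\in S\cap N^{\gamma-r}\subseteq S\cap N^{\gamma-1}$ can invoke its $\ge\delta$ $S$-neighbours, all of which lie in $S\cap N^{\gamma-(r-1)}$ and are therefore at level $\ge\ell$ by the inductive hypothesis; this keeps $v$ from dropping below $\ell$. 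At $r=\gamma$ the shell is $\{p\}$, so $p$ survives.

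For Part 2 I would argue in the converse direction and \emph{build} the dense-neighborhood from the trace of the execution. Starting from the root $p$ (which, having survived at level $\ell$, collected $\ge\delta$ responses at level $\ge\ell$ in every one of the $\gamma$ rounds), I grow a depth-$\gamma$ tree of \emph{support paths}: a path $p=v_0,v_1,\dots,v_m$ in $G(\ell)$ in which each $v_j$ responded to $v_{j-1}$ at level $\ge\ell$ in round $\gamma-j+1$. Let $S$ be the union of all such paths of length at most $\gamma$; then $S\subseteq N^{\gamma}_{G(\ell)}(p)$ and $S\subseteq B_{\ell,1}$, since a node at level $\ge\ell$ in any round was, by monotonicity of the level and permanence of crashes, non-faulty and at level $\ge\ell$ at the start. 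For the density property I take $v=v_m\in S\cap N^{\gamma-1}$; it was at level $\ge\ell$ in round $\gamma-m+1\ge 2$, hence also in round $\gamma-m$, and when it sits at level exactly $\ell$ it must have drawn $\ge\delta$ responses at level $\ge\ell$ from $G(\ell)$-neighbours in that earlier round, each of which extends the path by one hop and so lies in $S$. The \emph{moreover} clause is then immediate from the same tree: a node $w$ at tree-distance $d\le\gamma$ relays its rumor one hop toward $p$ per round through the bidirectional \textsc{Exchange}, so after the $\gamma$ response rounds the accumulated set $R$ at $p$ contains the rumor of every node of $S$.

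I expect the genuine difficulty to be the level bookkeeping inside the Part 2 construction. The threshold a node must beat to keep its level is measured \emph{against its own level}, so a supporter sitting strictly above level $\ell$ need not --- and in general will not --- possess $\delta$ neighbours joined to it by $G(\ell)$-edges, since its own survival is certified through the denser graph $G(i)$. Because levels only decrease over time while the support paths run backward in time, interior tree nodes are exactly where such high-level processes can appear, which is precisely where the density property is required. The careful part of the argument is therefore to confine every strictly-above-$\ell$ process to the outermost shell $N^{\gamma}\setminus N^{\gamma-1}$ --- where the definition of a dense-neighborhood imposes no degree constraint --- by always preferring level-exactly-$\ell$ supporters when extending a path and invoking monotonicity of the level to show that any interior node can indeed be taken at level $\ell$ in its support round. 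Making this selection consistent across the whole tree, so that the chosen $\delta$ children of each interior node are themselves valid path-extensions, is the technical heart of the proof; Parts 1 and 3 avoid it because there the supporting set is fixed in advance and only the implication ``enough live neighbours $\Rightarrow$ no drop'' is needed.
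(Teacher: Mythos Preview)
Your proofs of Parts~1 and~3 are correct and follow exactly the paper's argument: both are inductions showing that a fixed set of nodes (the dense-neighborhood $S$, respectively the survival set $C$) never drops below level $\ell$, because each member always sees at least $\delta$ responses at level $\ge\ell$ from within the set.

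For Part~2 your core construction --- growing support sets $S_1\subseteq S_2\subseteq\cdots\subseteq S_\gamma$ backward in time from $p$'s survival --- is also the paper's construction. The paper's proof is quite terse: it simply writes ``by induction $S_{j-1}\subseteq S_j\subseteq N^j_{G(\ell)}(p)$ and all processes in $S_j$ survive the first $2(\gamma-j)$ rounds,'' and asserts that $S_\gamma$ is the required dense-neighborhood, without isolating the issue you raise about supporters whose level is strictly above $\ell$.

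The difficulty you flag is genuine, but your proposed resolution does not work. You claim that ``monotonicity of the level'' lets you ``show that any interior node can indeed be taken at level $\ell$ in its support round''; however, monotonicity runs the other way: a process at level $>\ell$ in some later round was at level \emph{at least} that high in every earlier round, never exactly $\ell$. Concretely, if $p$'s only responders in round $2\gamma$ happen to be processes whose starting level exceeds $\ell$ --- something the execution does not let you choose away --- those responders sit at distance $1$ from $p$, hence in $N^{\gamma-1}$, and cannot be pushed to the outer shell $N^{\gamma}\setminus N^{\gamma-1}$. Such a node $v$ sends its own requests along $N_{G(i_v)}(v)$ with $i_v>\ell$, so its $\ge\delta$ supporters need not be $G(\ell)$-neighbours of $v$ at all, and the density condition ``$\ge\delta$ neighbours in $S$ within $G(\ell)$'' is not delivered by the path-extension step. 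The paper does not confront this case either; it simply does not separate out level-$>\ell$ supporters. So your construction matches the paper's, your worry is legitimate, but the patch you sketch in the final paragraph does not close the gap.
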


\begin{proof}
We first prove property~$1$.
Let $S$ be any $(\gamma,\delta)$-dense-neighborhood for $p$ in graph~$G|_{B_{\ell,2}}$. 
We argue, that every process in~$S \cap N^{\gamma-1}_{G(\ell)}(p)$ receives at least $\delta$ responses at the second round of Local Signaling. Indeed, at least that many requests were sent by this process to all its neighbors, in the first round. Since, all these neighbors were at the level at least $\ell$ in the first round (by the definition of~${B_{i_{\p},2}}$), 
thus the process preserves its variable $i$ set to $\ell$ after the second round of the Local Signaling procedure. 
By induction on $j \le \gamma$, no process in~$S\cap N^{\gamma-j}_{G(\ell)}(p)$ decreases its value $i$ before the end of the $2j$-th round of Local Signaling, and hence $p$ survives.

Next, we argue that property~$2$ holds.
Suppose, that $p$ survives the \textsc{LocalSignaling} algorithm. 
Then, there is a set $S_1 \subseteq N_{G(\ell)}(\p)$ of at least $\delta$ processes such that every process from~$S_1$ survives the first $2(\gamma-1)$ rounds of the \textsc{LocalSignaling} algorithm. Obviously, $p$ received a rumor $r$ of any process from $S_{1}$. 
By induction, for each $1\le j \le \gamma$ there is a set $S_j$ such that $S_{j-1} \subseteq S_j \subseteq N^{j}_{G(\ell)}(p)$, and all processes in~$S_j$ survive 
the first $2(\gamma-j)$ rounds of the \textsc{LocalSignaling} algorithm, and their rumors were conducted to process $p$. 
The set $S_{\gamma}$ satisfies the definition of $(\gamma,\delta)$-dense-neighborhood 
for $p$ in graph~$G|_{B_{\ell,1}}$ and the induction argument assures that the rumors of processes in $S_{\gamma}$ have reached the $p$ process.

Finally, we prove the third property.
Consider a survival set $C$ for ${B_{\ell,2}}$.
By the definition of a survival set, each process in~$C$ has at least~$\delta$ neighbors in~$C$. Because, $C \subseteq B_{\ell,2}$, thus every process from the set $C$ starts the instance of Local Signaling with variable $i$ set to $\ell$ at least. The variable $i$ decreases at most by one between every two rounds of the \textsc{LocalSignaling} algorithm, thus variables $i$ of processes in~$C$ cannot fall below $\ell$. 
In consequence, each process from $C$ that started with at the initial level $\ell$ terminates with the value $i$ being equal to $\ell$ and thus survives this instance of Local Signaling.
\end{proof}

\remove{

\section{A Lower Bound}
In asynchronous networks there exists a lower bound that shows that for any $t$-resilient randomized \textit{asynchronous} Consensus algorithm there exists an adversarial strategy that with high probability forces the algorithm to use at least $\Omega(n/\log n)$ messages. We cannot expect this result to still hold in \textit{synchronous} networks because of our \ParamConsensus{} algorithm breaking the $\Theta(n/\log n)$ barrier. However, we hope that we can make a connection between the number of rounds used by any (synchronous) randomized algorithm, which must be $\Omega(\sqrt{n/\log n})$ by~\cite{Bar-JosephB98}, and the amortized number of messages sent (or number of random choices made) by a process running any given Consensus algorithm. We aim in the following result.


\begin{theorem}
\label{thm:lower-randomness}
For any randomized algorithm solving Consensus in time $\tau = \Omega(\sqrt{n/\log n})$ with a constant probability against an adaptive adversary, there is an adaptive 
adversary
which can force the algorithm to have $\Omega(\min\{\frac{n}{\tau\log n},\tau\})$ rounds, amortized per process, with use of randomness.
\end{theorem}

The proof of Theorem~\ref{thm:lower-randomness} is given in the following Section~\ref{sec:proof-lower}.

Consider a class of algorithms in which a process that does not communicate (i.e., send or receive a message) with any other process in a round does not use randomness in that round. We call this class {\em sleeping algorithms}.
It follows directly from Theorem~\ref{thm:lower-randomness} that sleeping algorithms cannot have higher communication than $\Omega(\min\{n/\tau,\tau\})$.

\begin{theorem}
\label{thm:lower-communication}
For any randomized algorithm solving Consensus in time $\tau = \Omega(\sqrt{n/\log n})$ with a constant probability against an adaptive adversary, there is an adaptive 
adversary
which can force the algorithm to have $\Omega(\min\{\frac{n}{\tau\log n},\tau\})$ rounds, amortized per process, with use of randomness.
\end{theorem}

\begin{proof}

\end{proof}

\subsection{Proof of Theorem~\ref{thm:lower-randomness}}
\label{sec:proof-lower}

Consider a randomized algorithm solving Consensus in time $\tau = \Omega(\sqrt{n/\log n})$ with a constant probability against an adaptive adversary. We adjust the ideas in the proof of lower bound $\Omega\left(\sqrt{\frac{n}{\log n}}\right)$ from \cite{Bar-JosephB98} to our analysis of rounds with randomness.

In~\cite{Bar-JosephB98}, a coin flipping game was considered and the following result was proved:\footnote{Here we give more general version, as in the original proof $\alpha$ was at some point fixed to $1/n$, in order to formulate a result needed in that paper; but the arguments holds for any $2^{m/(16k^2)}<\alpha\le 1/2$} 

\begin{corollary}[Corollary 2.2 in~\cite{Bar-JosephB98}]
\label{cor:coin-game}
For sufficiently large $m$ and any $2^{m/(16k^2)}<\alpha\le 1/2$, in an $m$ player one round coin flipping game with $k$ possible outcomes, 
if the adversary can crash more than $4k\sqrt{m\log \alpha^{-1}}$ players, he can bias the result towards one particular outcome with probability greater than $1-\alpha$.
\end{corollary}

We will also use this result to bias an execution of the algorithm towards undecided state (so called null-valent) with probability at least $1-1/n$ (by setting $\alpha = 1/n$ by crashing $\Theta(\sqrt{r_t\log n})$ processes using randomness in round $t$. (If $r_t$ is very small, say $\sqrt{r_t\log n}>r_t$, simply all $r_t$ processes are crashed).

Consider an execution $\cE$ of the algorithm -- it is a random event generated dynamically when running the algorithm against some adaptive adversary's strategy.
Let us restrict the adversary to strategies in which it can crash at most $\Theta(\sqrt{r_t\log n})$ processes in a round $t\le \tau$, where $r_t$ denotes the number of processes using randomness in round $t$. From now on, by `adversary' we will mean the restricted one.

For any round $t\le \tau$, let: 
\begin{itemize}
    \item 
$\cE_t$ be an execution of the algorithm by round $t$, 
\item
$\cA_t$ be an adversarial strategy possible after round $t$ under the history $\cE_t$,
\item
$\cP(\cE_t,\cA)$ be the probability of reaching consensus in value $1$ when continue the run of the algorithm with history $\cE_t$ under adversarial strategy $\cA$. 
\end{itemize}
Following classic terminology in distributed computing~\cite{AW}, we say that an execution 
$\cE_t$ is:
\begin{itemize}
    \item 
{\em null-valent} if for all adversarial strategies $\cA$ we have $\frac{1}{\tau\log n} +t/n \le \cP(\cE_t,\cA)\le 1-\frac{1}{\tau\log n} + t/n$,
\item
{\em $1$-valent} if there is adversarial strategy $\cA$ such that $\cP(\cE_t,\cA)> 1-\frac{1}{\tau\log n}+t/n$ and for every other adversarial strategy $\cA'$: $\frac{1}{\tau\log n} + t/n \le \cP(\cE_t,\cA')$,
\item
{\em $0$-valent} if there is adversarial strategy $\cA$ such that $\cP(\cE_t,\cA) < \frac{1}{\tau\log n}+t/n$ and for every other adversarial strategy $\cA'$: $\cP(\cE_t,\cA') \le 1- \frac{1}{\tau\log n} + t/n$,
\item
{\em bi-valent} if there are adversarial strategies $\cA,\cA'$ such that $\cP(\cE_t,\cA)> 1-\frac{1}{\tau\log n} +t/n$ and $\cP(\cE_t,\cA') < \frac{1}{\tau\log n} +t/n $.
\end{itemize}

An execution that is $1$-valent or $0$-valent is also called {\em uni-valent}. 

We start from citing a result regarding existing of a `bad' initial state

\begin{lemma}[Lemma 3.5 in~\cite{Bar-JosephB98}]
Any synchronous Consensus algorithm has an initial state which by crashing at most one process becomes null-valent or bi-valent.
\end{lemma}

Now it is sufficent to consider bi-valent and null-valent executions.

\begin{lemma}
\label{lem:lower-null}
A null-valent execution $\cE_t$ could be extended to a null-valent execution $\cE_{t+1}$ with probability at least $1-1/n$.
\end{lemma}

\begin{proof}
The proof follows directly from Lemma 3.1 in~\cite{Bar-JosephB98}, by replacing their specific Corollary 2.2 by our generalized version applied to $r_t$ processes that use randomness.
\end{proof}

Consider the total amount of randomness used (i.e., the sum over all processes $p$ and rounds $t$ such that process $p$ uses randomness in round $t$):
\[
\sum_{t\le \tau} r_t
\]
and assume, by contradiction, that it is asymptotically smaller than $n\cdot \frac{n}{\tau\log n} = \frac{n^2}{\tau\log n}$. It follows that the adversary crashes asymptotically at most
\[
\sum_{t\le \tau} \sqrt{r_t\log n}
\le
\sqrt{\tau\sum_{t\le \tau} (r_t\log n) }
<<
n
\]
by Cauchy-Schwarz Inequality and the (contradictory) assumption.
It means that the adversary can still continue crashing processes and enforce the system to be in null-valent state, by Lemma~\ref{lem:lower-null}, which is a contradiction.

}

\section{Conclusions and Open Problems}

We explored the Consensus problem in the classic message-passing model with processes' crashes, from perspective of both time {\em and} communication optimality. We discovered an interesting tradeoff between these two complexity measures: Time $\times$ Amortized\_Communication = $\logO(n)$, which, to the best of our knowledge, has not been present in other settings of Consensus and related problems.
We believe that a corresponding lower bound could be proved: Time $\times$ Amortized\_Communication = $\Tilde{\Omega}(n)$.
Interestingly, a similar tradeoff could hold between time and amount of randomness, as our main algorithm \ParamConsensus$^*$ satisfies the relation: Time $\times$ Amortized\_Randomness = $\logO(n)$. Exploring similar tradeoffs in other fault-tolerant distributed computing problems could be a promising and challenging direction to follow.

\bibliographystyle{apa}
\bibliographystyle{ACM-Reference-Format}

\end{document}